\documentclass[a4paper]{article}

\usepackage[english]{babel}
\usepackage{stix}

\usepackage{array}
\usepackage{algpseudocode}
\usepackage{float}
\usepackage[a4paper,top=3cm,bottom=2cm,left=3cm,right=3cm,marginparwidth=1.75cm]{geometry}

\usepackage{amsmath}
\usepackage{amssymb}
\usepackage{amsthm}
\usepackage{proof}
\usepackage{subcaption}
\usepackage[colorinlistoftodos]{todonotes}
\usepackage[]{hyperref}

\def\rch#1{\rotatebox[origin=c]{180}{#1}}

\DeclareMathOperator*{\E}{\mathbf{E}\,}
\DeclareMathOperator*{\Pb}{\mathbf{P}\,}
\DeclareMathOperator*{\En}{E}

\newcommand{\V}{\mathscr{V}}
\newcommand{\Ed}{\mathscr{E}}


\newcounter{Counter}
\newtheorem{theorem}[Counter]{Theorem}
\newtheorem{remark}[Counter]{Remark}
\newtheorem{definition}[Counter]{Definition}
\newtheorem{lemma}[Counter]{Lemma}

\newtheorem{conjecture}[Counter]{Conjecture}

\newcommand*{\mybox}[2]{\colorbox{#1!30}{\parbox{.93\linewidth}{#2}}}
\definecolor{lightgray}{rgb}{0.7, 0.7, 0.7}

\newcommand*{\myboxx}[2]{\colorbox{#1!30}{\parbox{.63\linewidth}{#2}}}

\newcommand*{\myboxxx}[2]{\colorbox{#1!30}{\parbox{.2\linewidth}{#2}}}

\title{What do QAOA energies reveal about graphs?}
\author{Mario Szegedy}

\date{\today}
\begin{document}
\maketitle

\begin{center}
{\textit{Alibaba Quantum Laboratory, Alibaba Group USA, Bellevue, WA 98004, USA}}
\end{center}

\medskip

\begin{abstract}
Quantum Approximate Optimization Algorithm (QAOA) \cite{farhi2014quantum}
is a hybrid classical-quantum algorithm
to approximately solve NP optimization problems such as MAX-CUT.
We describe a new application area of QAOA circuits: graph structure discovery.
We omit the time-consuming parameter-optimization phase and utilize the dependence of QAOA energy
on the graph structure for randomly or judiciously chosen parameters to learn about graphs. 

In the first part, following up on Wang et al. \cite{wang2018quantum} and Brandao et al. \cite{fern2018fixed}
we give explicit formulas. We show that the layer-one QAOA energy for the MAX-CUT problem for three regular graphs 
carries exactly the information: {\em (\# of vertices, \# of triangles)}. We have calculated our explicit formulas differently from  
\cite{wang2018quantum}, by developing the notion of the $U$-polynomial of a graph $G$.
Many of our discoveries can be interpreted as computing $U(G)$ under various restrictions. 

The most basic question when comparing the structure of two graphs is if they are isomorphic or not. 
We find that the QAOA energies separate all non-isomorphic three-regular graphs up to size 18, all strongly regular graphs up to size 26 and
the Praust and the smallest Miyazaki examples.
We observe that the QAOA energy values can be also used as a proxy to how much graphs differ. Unfortunately, 
we have also found a sequence of non-isomorphic pairs of graphs, for which
the energy gap seems to shrink at an exponential rate as the size grows.
Our negative findings however come with a surprise: 
if the QAOA energies do not measurably separate between two graphs, then 
both of their energy landscapes must be extremely flat 
(indistinguishable from constant), already when the number of QAOA layers is intermediately large.
This holds due to a remarkable uncoupling phenomenon that we have only deduced from computer simulation.


To compute average values of QAOA energies and their standard deviation for a large number of levels,
we introduce {\em QAOA dynamics}, and discover a tool for studying it: higher order density matrices.
\end{abstract}

\medskip

\begin{center}
{\em Dedicated to the 70th birthday of Laci Babai}
\end{center}

\section{Introduction}

Quantum Approximate Optimization Algorithm (QAOA), introduced by Farhi, Goldstone and Gutmann \cite{farhi2014quantum} in 2014,
is an attempt to gain quantum advantage in solving combinatorial 
optimization problems of the form 
\[
\min_{z\in \{0,1\}^{n}}  C(z) \; = \; \min_{z\in \{0,1\}^{n}} \sum_{\alpha = 1}^{m} C_{\alpha}(z), \;\;\;\;\;\;\; \mbox{$C_{\alpha}$ depends only on $\ell$ coordinates of $z$}
\]
with a quantum machine in the circuit-based (as opposed to annealing-based) computational model.
In this article we shall only be concerned with the special case, where we have an undirected graph $G = (\V(G),\Ed(G))$ and
\begin{equation}\label{ee}
C(z) = \sum_{\langle jk\rangle \in \Ed(G)} C_{\langle jk\rangle}(z), \;\;\;\;\;\;\; C_{\langle jk\rangle}(z) = {1\over 2}(1 + \sigma_{j}^{z}\sigma_{k}^{z} )
\end{equation}
$C = C_{G}$, but we omit the subscript when it does not cause ambiguity. When {\em minimized}, this gives a maximum cut of $G$ via
\[
\mathrm{MAXCUT}(G) = |\Ed(G)| - \min_{z} C(z)
\]
Although the problem is classical, Farhi et al. construct a quantum circuit, called the QAOA circuit,
that spits out good approximate solutions when its parameters are optimized.
The optimization process is known as the QAOA optimization phase. It is a loop where the QAOA circuit is repeatedly run,
its output is evaluated and the circuit parameters are reset.
When the final parameter values are reached, the circuit is run a few more times
to produce a set of candidate assignments to the optimization problem.
Finally, the best of all candidates is chosen.
The algorithm has become the most 
frequently discussed quantum-classical hybrid algorithm \cite{McClean_2016}.
%

More into the details, the level-$p$ QAOA circuit for a graph $G$ and for parameters $\gamma = (\gamma_{0}, \ldots,\gamma_{p-1})$, $\beta = (\beta_{0}, \ldots,\beta_{p-1})$
computes an $n$-qubit quantum state $|\psi\rangle$, where $n = |\V(G)|$, and
\begin{equation}\label{eq1}
|\psi\rangle  = \prod_{q=0}^{p-1} \left(\prod_{v \in \V(G)} e^{-i\beta_{q} X_{v}} \prod_{\langle jk\rangle \in \Ed(G)} e^{-i\gamma_{q}  
C_{\langle jk\rangle}}  \right) \cdot  {1\over \sqrt{2^{n}}} \sum_{z\in \{0,1\}^{n} } |z\rangle
\end{equation}
The product outside is responsible for making the $p$ levels. Since a single level is composed of two operators (the two products inside),
it is tempting to think that the usual circuit depth of the level-$p$ QAOA circuit is $2p$. Although $\prod_{i \in \V(G)} e^{-i\beta_{j} X_{i}}$ has depth one
when presented as a quantum circuit, the depth of $\prod_{\langle jk\rangle \in \Ed(G)} e^{-i\gamma_{j}  C_{\langle jk\rangle}}$ is typically not one, and
finding the smallest depth implementation
requires graph theory. All 2-qubit gates of the form $e^{-i\gamma_{j}  C_{\langle jk\rangle}}$ commute, but they share qubits. One has to refer to the Vizing theorem
to get a depth $d+1$ rendering of these gates in the worst case (and depth $d$ in the best case), where $d$ is the maximum degree of $G$.
All-in-all we get a depth $(d+2)p$ upper bound for the entire circuit.

Although $C$ is treated as the quantum Hamiltonian $\sum_{z\in \{0,1\}^{n}} C(z) |z\rangle \langle z|$,
the ground energy, $\min \langle \psi | C | \psi \rangle$, is achieved at a classical state because of the diagonal form of the Hamiltonian.
Nevertheless, the QAOA circuit produces a quantum state, and  {\em this is precisely where its strength lies}: it adds a quantum dimension 
to an otherwise classical problem.

The QAOA circuit is also called an ``Ansatz,'' i.e. ``rudiment'' or ``approach,'' meaning that it only gets us started solving the optimization problem
$\max_{z} C(z)$. 
The remaining task is to choose the parameter values $\beta_{j}, \gamma_{j} \;\; (0\le j\le p-1)$, called {\em angle sequences}, in an optimal way.
The optimized state is mathematically guaranteed to converge into the 
subspace spanned by the optimal classical solutions  (\cite{farhi2014quantum}, Section VI), but the rate of convergence may be very slow. For small $p$, even after 
setting the parameters in the most optimal way, we may end up with a low-quality solution.

\medskip

\noindent{\bf Graph structure discovery.} 
We harness the ``quantum dimension'' present in
the QAOA Ansatz in a new way: to obtain information about the structure of the 
input graph, via the quantum object in  (\ref{eq1}), which is supposed to encode main features of the graph. 

\begin{definition}
We denote the state in (\ref{eq1}) with $|\gamma,\beta\rangle$ where $\gamma = (\gamma_{0}, \ldots,\gamma_{p-1})$ and $\beta = (\beta_{0}, \ldots,\beta_{p-1})$.
\end{definition}

Instead of looking to solve the optimization problem in (\ref{ee}), our main goal is to understand
\[
\En(G,\gamma,\beta) =  \langle \gamma,\beta | C | \gamma, \beta \rangle
\]
for {\em random } or {\em arbitrary} $ (\gamma,\beta) \in [0,2\pi]^{2p}$.
We compare energy values for pairs or sets of graphs
for identical degree sequences. Our investigation has started from a conjecture which is still unresolved:

\begin{conjecture}\label{iso-conj}
Let $G_{1}$ and $G_{2}$ be two non-isomorphic graphs. Then for some $p>0$, when 
$(\gamma,\beta)$ is randomly and uniformly chosen from $[0,2\pi]^{2p}$ 
we have
\[
\Pb (E(G_{1},\gamma,\beta) \neq E(G_{2},\gamma,\beta)) = 1 
\]
\end{conjecture}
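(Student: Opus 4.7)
\medskip

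\noindent\textbf{Proof plan.} The function $f(\gamma,\beta) := E(G_1,\gamma,\beta) - E(G_2,\gamma,\beta)$ is, by Taylor expanding the exponentials in (\ref{eq1}), a trigonometric polynomial in the $2p$ real variables $\gamma_0,\ldots,\gamma_{p-1},\beta_0,\ldots,\beta_{p-1}$, and in particular it is real-analytic on $[0,2\pi]^{2p}$. The zero set of a nonzero real-analytic function has Lebesgue measure zero, so a uniformly drawn $(\gamma,\beta)$ misses it almost surely. Therefore the probabilistic conjecture reduces to the algebraic statement: \emph{if $G_1\not\cong G_2$, then for some $p$ the function $f$ is not identically zero.}

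To attack this algebraic question I would Taylor-expand $E(G,\gamma,\beta)$ around the origin. The coefficient of each monomial $\prod_q \gamma_q^{a_q}\beta_q^{b_q}$ is, by direct computation, a polynomial in homomorphism counts $\hom(F,G)$ where $F$ runs over graphs ``reachable'' by walks compatible with the causal structure of a depth-$p$ QAOA circuit acting on the uniform superposition. By Lov\'asz's classical theorem, $G_1\cong G_2$ if and only if $\hom(F,G_1)=\hom(F,G_2)$ for every finite graph $F$. Hence it suffices to prove a completeness lemma: as $p$ grows, the set of Taylor coefficients of $E(G,\gamma,\beta)$ spans enough linear combinations of the $\hom(F,G)$ that each individual $\hom(F,G)$ is recoverable, e.g.\ via M\"obius inversion on the poset of graphs ordered by quotients.

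A softer parallel route is to exploit variational universality of QAOA at large $p$: if the reachable states $|\gamma,\beta\rangle$ are dense in a sufficiently large subset of the symmetric sector of the $n$-qubit Hilbert space, then $E(G_1,\cdot)\equiv E(G_2,\cdot)$ would force $C_{G_1}$ and $C_{G_2}$ to agree on that subset, and hence up to a relabeling of qubits -- which is graph isomorphism. The obstacles here are quantifying the density statement and its interaction with the symmetry constraints that the initial state and the QAOA operators impose.

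In either route the main difficulty is the completeness step. Tracking which graphs $F$ contribute which Taylor coefficients requires careful bookkeeping of how the edge operators $C_{\langle jk\rangle}$ and mixer operators $X_v$ interleave -- precisely the combinatorics that the $U$-polynomial machinery later in the paper was designed to systematize. My hope would be to use that framework to exhibit, for each target graph $F$, an explicit monomial whose coefficient is $\hom(F,G)$ plus contributions from strictly simpler graphs, and then invert. The delicate point is ruling out the spurious cancellations that routinely plague polynomial graph invariants -- which is presumably why the conjecture remains open.
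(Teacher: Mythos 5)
This statement is Conjecture~\ref{iso-conj}, which the paper explicitly leaves open ("a conjecture which is still unresolved"), so there is no proof in the paper to compare against; your proposal must therefore stand on its own, and it does not close the gap. The one step you do carry out is sound: $E(G,\gamma,\beta)$ is a trigonometric polynomial in the $2p$ angles (each $e^{-i\gamma_q C_{\langle jk\rangle}}$ and $e^{-i\beta_q X_v}$ expands into finitely many sines and cosines), hence real-analytic, and the zero set of a not-identically-zero real-analytic function on $[0,2\pi]^{2p}$ has Lebesgue measure zero. This correctly reduces the probability-one statement to the algebraic claim that $f=E(G_1,\cdot,\cdot)-E(G_2,\cdot,\cdot)$ is not identically zero for some $p$. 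But that algebraic claim \emph{is} the entire content of the conjecture, and both of your routes toward it end in an unproven "completeness lemma" that you yourself flag as the main difficulty. Asserting that the Taylor coefficients are polynomials in homomorphism counts $\hom(F,G)$ and invoking Lov\'asz's theorem only helps if you can show the map from $\{\hom(F,G)\}_F$ to the coefficient data is injective (or at least separates non-isomorphic graphs); nothing in the proposal establishes this, and M\"obius inversion requires knowing the transition matrix is triangular with nonzero diagonal, which is precisely the bookkeeping you defer.

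The paper itself supplies evidence that the feared cancellations are not merely hypothetical. Theorem~\ref{triangle} shows that at $p=1$ the energy of a cubic graph is a function of only the vertex count and the triangle count, so low-order data is badly incomplete and any completeness must be extracted from large $p$ in a way your sketch does not control. More pointedly, the "no-go theorem for hyper-graphs" section exhibits two non-isomorphic $6$-bit constraint systems, $H_{3,3}$ and $H_6$, whose QAOA energies agree \emph{exactly} for all angle sequences and all $p$, via a term-by-term matching of Feynman path pairs. This shows that the generalized version of your algebraic claim is simply false for $\ell$-local cost functions with $\ell>2$, so any correct argument must use something specific to the two-local MAX-CUT structure; neither of your routes identifies what that something is. The second (variational-universality) route has the additional problem that density of the reachable states $|\gamma,\beta\rangle$ in any relevant sector is itself unproven, and agreement of $\langle\psi|C_{G_1}|\psi\rangle$ and $\langle\psi|C_{G_2}|\psi\rangle$ on a dense set of a proper subset of the sphere does not force $C_{G_1}=C_{G_2}$ as operators. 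The proposal is a reasonable research program, but it is not a proof, and the statement remains a conjecture.
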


An analogous statement in the case of boson sampling is proven in \cite{bradler2018graph, schuld2019quantum}.
To turn Conjecture \ref{iso-conj} to an even stronger conjecture of quantum polynomial time graph isomorphism algorithm,
the separation must be at least inverse polynomial for random degree sequences, and $p$ must be polynomial in the size of the graph.

In the first part of the paper we ask: what features of a graph $G$ are encoded in $E(G,\gamma,\beta)$?
We show:

\begin{theorem}\label{triangle}
Let $p=1$ and $(\gamma,\beta) \in [0,2\pi]^{2}$. Then for any cubic graph $G$ with $n$ nodes and $t$ triangles:
\[
E(G,\gamma,\beta) = {3n\over 4} + {3n\over 8} \sin 4\beta\sin 2\gamma \cos\gamma + {3 t \over 8} \sin^{2} 2\beta \sin^{2} 2 \gamma
\]
\end{theorem}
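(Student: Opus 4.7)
The plan is to apply linearity of expectation together with the QAOA lightcone principle, reducing to a calculation local to each edge, and then exploit the rigidity of cubic graphs to isolate dependence on triangles. Write $|\gamma,\beta\rangle=U_BU_C|+\rangle^{\otimes n}$ with $U_C=\prod_{\langle j'k'\rangle} e^{-i\gamma C_{\langle j'k'\rangle}}$ diagonal and $U_B=\prod_v e^{-i\beta X_v}$ single-qubit, so that
\[
E(G,\gamma,\beta)=\sum_{\langle jk\rangle\in\Ed(G)}\langle+|^{\otimes n}\,U_C^\dagger U_B^\dagger C_{\langle jk\rangle}U_BU_C\,|+\rangle^{\otimes n}.
\]
For a fixed edge $\langle jk\rangle$, the operator $U_B^\dagger\sigma_j^z\sigma_k^zU_B$ is supported on $\{j,k\}$, and conjugation by $U_C^\dagger$ can spread the support only to $N(j)\cup N(k)$, since $\gamma$-gates incident to neither $j$ nor $k$ commute with the operator. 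Hence the per-edge value depends only on the subgraph induced by $\{j,k\}\cup N(j)\cup N(k)$.

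The explicit calculation expands $U_B^\dagger(\sigma_j^z\sigma_k^z)U_B$ via $e^{i\beta X}\sigma^z e^{-i\beta X}=\cos(2\beta)\sigma^z+\sin(2\beta)\sigma^y$ into four $\sigma_j^\alpha\sigma_k^{\alpha'}$ terms ($\alpha,\alpha'\in\{z,y\}$), and then conjugates each by $U_C^\dagger$ using
\[
e^{i\gamma/2\,\sigma_k^z\sigma_n^z}\,\sigma_k^y\,e^{-i\gamma/2\,\sigma_k^z\sigma_n^z}=\cos\gamma\,\sigma_k^y+\sin\gamma\,\sigma_k^x\sigma_n^z
\]
and its $\sigma_k^x$-analogue, applied successively over the gates incident to $j$ and $k$. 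In the $|+\rangle^{\otimes n}$-expectation only all-$\{I,X\}$ Pauli tensors survive, since $\langle+|Y|+\rangle=\langle+|Z|+\rangle=0$. The $\sigma_j^z\sigma_k^z$ piece dies, leaving only the trivial $\tfrac12$ per edge. The cross piece $\sigma_j^z\sigma_k^y+\sigma_j^y\sigma_k^z$ contributes $\tfrac14\sin(4\beta)\sin(2\gamma)\cos\gamma$ per edge, independent of local structure: the sole surviving term comes from the $\langle jk\rangle$ gate converting $\sigma_j^z\sigma_k^y$ into a bare $\sigma_k^x$ (with coefficient $\sin\gamma$) which the two other gates at $k$ pass through to $\cos^2\gamma\,\sigma_k^x$.

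The structure-sensitive piece is $\sigma_j^y\sigma_k^y$. Conjugations at $j$ and $k$ decorate $\sigma_j^y,\sigma_k^y$ with $\sigma_{j_a}^z$ and $\sigma_{k_b}^z$ factors, where $\{j_1,j_2\}=N(j)\setminus\{k\}$ and $\{k_1,k_2\}=N(k)\setminus\{j\}$. A term survives the $|+\rangle^{\otimes n}$-expectation only when every $\sigma^z$ on a neighbor pairs against another $\sigma^z$ on the same vertex, squaring to $I$; this happens precisely when $j_a=k_b$, i.e.\ at a common neighbor of $j$ and $k$. Tracking the surviving pairings shows the per-edge contribution equals $\tfrac{T}{8}\sin^2(2\beta)\sin^2(2\gamma)$, with $T\in\{0,1,2\}$ the number of triangles through $\langle jk\rangle$. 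Summing the three pieces over edges using $|\Ed(G)|=3n/2$ and $\sum_{\langle jk\rangle}T_{\langle jk\rangle}=3t$ yields the formula. The main obstacle is the bookkeeping of the sixteen gross terms in the $\sigma_j^y\sigma_k^y$ expansion: one must identify which $\sigma_{j_a}^z\sigma_{k_b}^z$ pairings survive, rule out contributions from the double-$\sigma^z$ terms on either side, and verify linearity in $T$. The algebra of individual Pauli conjugations is routine; the combinatorial case analysis is where errors are most likely.
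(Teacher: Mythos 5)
Your proposal is correct, and it takes a genuinely different route from the paper. The paper does not redo the level-one Pauli propagation: it imports the Wang et al.\ single-edge formula $E_{d_L,d_R,d_M}$, specializes it to the three possible edge neighborhoods of a cubic graph to get $E_0,E_1,E_2$, observes the identity $E_0+E_2=2E_1$ (which it calls a ``co-incidence,'' verified by the trigonometric identity $2(1-\cos 2\gamma)\cos^2\gamma=1-\cos^2 2\gamma$), and then eliminates the type counts $m_0,m_1,m_2$ via the double counting $3t=m_1+2m_2$ and $m_0+m_1+m_2=3n/2$. You instead derive the per-edge energy from scratch in the Heisenberg picture and obtain the decomposition into the $zz$, cross, and $yy$ pieces; your claimed per-edge values ($\tfrac12$, $\tfrac14\sin 4\beta\sin 2\gamma\cos\gamma$, and $\tfrac{T}{8}\sin^2 2\beta\sin^2 2\gamma$ with $T$ the number of triangles through the edge) agree exactly with $E_0,E_1,E_2$ after the substitution $\sin 2\gamma\cos\gamma=2\sin\gamma\cos^2\gamma$ and $(1-\cos 2\gamma)\cos^2\gamma=\tfrac12\sin^2 2\gamma$. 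What your approach buys is an explanation of the paper's coincidence: the only structure-sensitive surviving Pauli pairings are $X_jZ_{j_a}\cdot X_kZ_{k_b}$ with $j_a=k_b$ a common neighbor, so the per-edge contribution is \emph{manifestly} linear in $T\in\{0,1,2\}$, which is precisely the affine relation $E_0+E_2=2E_1$. What the paper's approach buys is brevity and reuse of a published formula, at the cost of making the linearity look accidental. Your write-up is a sketch in the $yy$ bookkeeping, but the claims you state there (the $Y_j$- and double-$Z$-bearing terms die, each common neighbor contributes $\sin^2\gamma\cos^2\gamma$) are the correct ones, so I see no gap.
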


\noindent{\bf $U$-polynomials.} 
We have developed this tool to prove Theorem \ref{triangle}, but it was useful in all our calculations. 
One might view $U$-polynomials of a graphs as 
certain types of tensor-networks or partition functions for certain Ising models. The notion's advantage is a graph theory- friendly language.
We defer all information about $U$-polynomials, including their definition, to the appendix.

\smallskip

\begin{center}
$\ast\;\ast\;\ast$
\end{center}

\smallskip

Unlike in the first part of the paper, where all statements were mathematically verified, in the second part, aside from the last section, we rely on computer experiments.
We show that single random QAOA energies can already distinguish all non-isomorphic members of large classes of graphs such as all 3-regular graphs of size 16. 
Unfortunately, it also seems, that
polynomial time graph isomorphism testing with QAOA is unlikely: 
 the $({\rm Circular \; ladder}(n), \; {\rm Moebius \; ladder}(n))$ family of pairs of graphs
seems to exhibit an exponentially shrinking sequence of average energy gaps.
The average energy gap for two graphs, $G_{1}$ and $G_{2}$ is defined as
\[
\Delta(G_{1},G_{2},p) = \E | E(G_{1},\gamma,\beta) - E(G_{2},\gamma,\beta)| \;\;\;\;\;\;\;\;\;\;\; (\gamma, \beta) \; \mbox{is uniform in}\; [0,2\pi]^{2p}
\]
Our hard graph pairs may also disqualify annealer-based graph isomorphism testers as in  \cite{hen2012solving}, and it could be interesting
to analyze them for boson sampling-based testers as well.

We have found an interesting consequence of small energy gaps: due to a numerical observation
what we call a {\em decoupling phenomenon} (see Section \ref{decoupling}),
small $\Delta(G_{1},G_{2},p)$ at large $p$ implies flat energy landscapes for both graphs.
This in turn has hardness consequences on optimization.

There is a good news too: we have evidence that QAOA energy differences can be useful to detect ``intuitive'' graph similarity. Graph similarity is a measure that
exists between any two graphs. In \cite{shaydulin2019evaluating, shaydulin2019multistart} similarity is rigorously defined as the graph edit distance.
In our article we have avoided the expensive graph edit distance calculations
by replacing it with studying a Markov chain on graphs that makes a small local change at every step. We observe that the average QAOA energy gap between graphs that
are farther in this walk is larger.

\smallskip

\begin{center}
$\ast\;\ast\;\ast$
\end{center}

\smallskip

In the third part we develop methods that mathematically address 
questions raised in the second part. The quantity
\begin{equation}\label{square}
\square(G_{1},G_{2},p) = \E | E(G_{1},\gamma,\beta) - E(G_{2},\gamma,\beta)|^{2} \;\;\;\;\;\;\;\;\;\;\; (\gamma, \beta) \; \mbox{is uniform in}\; [0,2\pi]^{2p}
\end{equation}
is easier to analyze than $\Delta(G_{1},G_{2},p)$, so we will focus on $\square(G_{1},G_{2},p)$. 
We assume that $p$ is large, which lets us focus on:

\smallskip

\noindent{\bf The QAOA dynamics.} Let $G$ be an arbitrary graph.
A QAOA circuit for $G$ with random angles and with increasing depth can be made a Markov chain
on the $2^{|\V(G)|}$ dimensional
complex unit ball, ${\cal S}$, with the transition rule:

\smallskip

\begin{center}
{\em Apply a new level of the QAOA circuit with random angles.}
\end{center}

\smallskip

\noindent After each step
we have updated the statistical ensemble of states on $|\V(G)|$ qubits, which 
seems to weakly converge to a limiting distribution, $\Sigma_{\infty}(G)$, on ${\cal S}$. This gives rise to new graph parameters:

\smallskip

\noindent{\bf The QAOA moments} of $G$ are defined as $\mu_{k}(G) = \int_{\cal S} \;  \langle \psi | C_{G} | \psi \rangle^{k} \; d(\Sigma_{\infty}(G))$.
The decoupling phenomenon, described in Section \ref{decoupling}, gives 
\begin{equation}
\lim_{p\rightarrow\infty}  \; \square(G_{1},G_{2},p) \;\; = \;\; \mu_{2}(G_{1}) + \mu_{2}(G_{2}) - 2\mu_{1}(G_{1}) \mu_{1}(G_{2})
\end{equation}
reducing the gap-question to the calculation of $\mu_{1}$ and $\mu_{2}$ of graphs.
We provide a methodology for such a calculation in the form of {\em higher order density matrices}.
We demonstrate the use of this tool by calculating some first and second QAOA moments
for small graphs.


\newpage
\part{Explicit Formulas}
The QAOA research has been taking diverse directions.
A lot of emphasis is put on issues such as 
advantage over classical \cite{farhi2014quantum2, farhi2016quantum, hastings2019classical}, 
noise sensitivity \cite{Alam2019AnalysisOQ},
parameter optimization \cite{zhou2018quantum, shaydulin2019evaluating, shaydulin2019multistart, crooks2018performance}, 
implementation \cite{Peruzzo2014AVE, Guerreschi_2019}, simulation \cite{zhang2019alibaba}.

The QAOA energy values more often than not are calculated by computer simulation. 
There are also exceptions, most notably by Wang et al. \cite{wang2018quantum} and Brandao et al. \cite{fern2018fixed}, who make exact calculations.
We follow their tradition and provide a number of explicit formulas for special cases.

\section{Level-1: All graphs (Wang et al.)}
\label{levelone}

We adopt the following result from Wang et al \cite{wang2018quantum} (modified for our notations and corrected a typo). Let $d_{L}$, $d_{R}$ and $d_{M}$ be the number of nodes that are
connected only to the left, only to the right and to both nodes of an edge $e$ of $G$. Then the level one QAOA energy
$E(e) = \langle \beta,\gamma |C_{e} |\beta,\gamma\rangle$ associated with edge $e$ of $G$ and rotation angles $\beta,\gamma \in [0,2\pi]$ is:

\bigskip

\begin{tabular}{ccc}
\myboxx{lightgray}{
\[
E(e) = E_{d_{L}, d_{R}, d_{M}}  =  {1\over 2} + {1\over 4}\, (X + Y)
\] 
\begin{eqnarray*}
X  & = &   \sin^{2} 2\beta \, \cdot \left( 1 - \cos^{d_{M}} 2\gamma  \right) \cos^{d_{L} + d_{R}}  \gamma  \\
Y  & = &  \sin 4 \beta\, \sin \gamma   \cdot \left(\cos^{d_{L} } \gamma + \cos^{d_{R}} \gamma\right)\,  \cos^{d_{M}} \gamma \\
\end{eqnarray*}
}  & \vspace{0.2in} & \myboxxx{lightgray}{\includegraphics[width=0.2\textwidth]{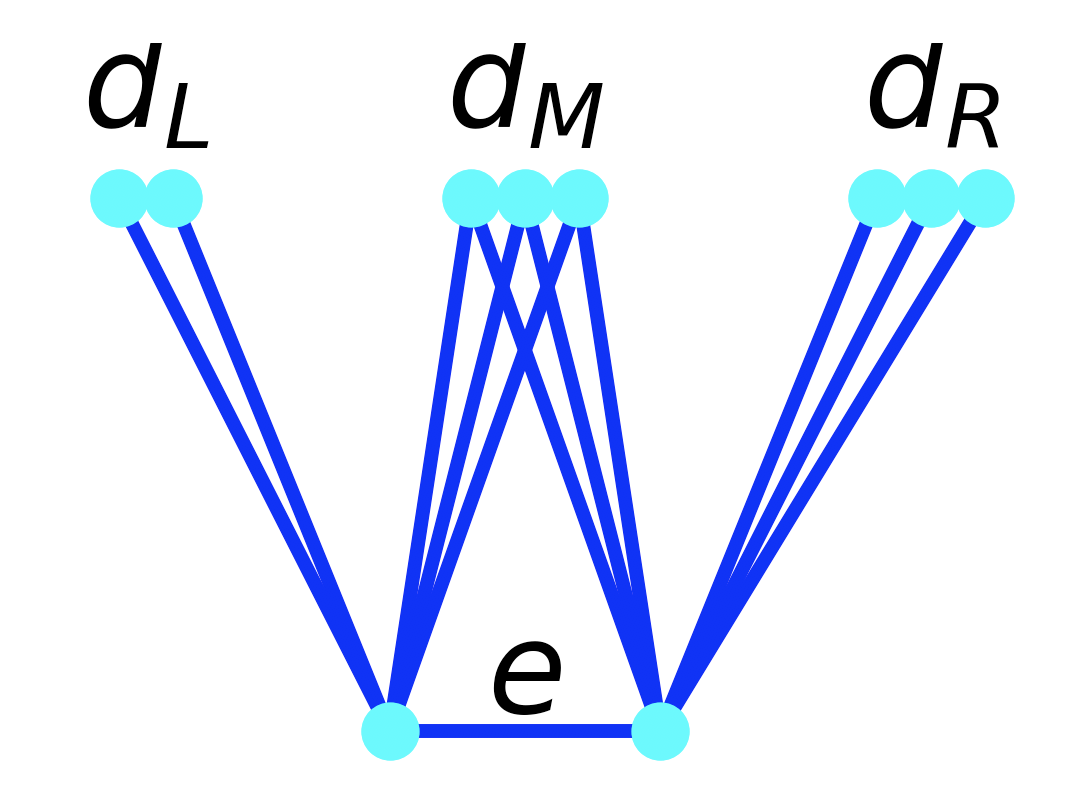} }
\end{tabular}

\medskip

From this we get that the energy of every edge of the cycle $C_{n}$ with $n\le 4$:

\[
E_{cyc} =  E_{1,0,1} = {1\over 2} + {1\over 4} \sin 4 \beta\, \sin 2 \gamma 
\]

More generally, the energy contribution of every edge of a triangle-free $d$-regular graphs is

\[
E_{\Delta{\rm free},d} =  E_{d-1,0,d-1} = {1\over 2} + {1\over 2} \sin 4 \beta\, \sin \gamma \cos^{d-1} \gamma 
\]

yielding ${nd\over 4} + {nd\over 4} \sin 4 \beta\, \sin \gamma \cos^{d-1} \gamma $ when summing it up for all edges.

\section{The Triangle Theorem}

\begin{figure}[H]
\centering
\begin{tabular}{ccccc}
\includegraphics[width=0.1\textwidth]{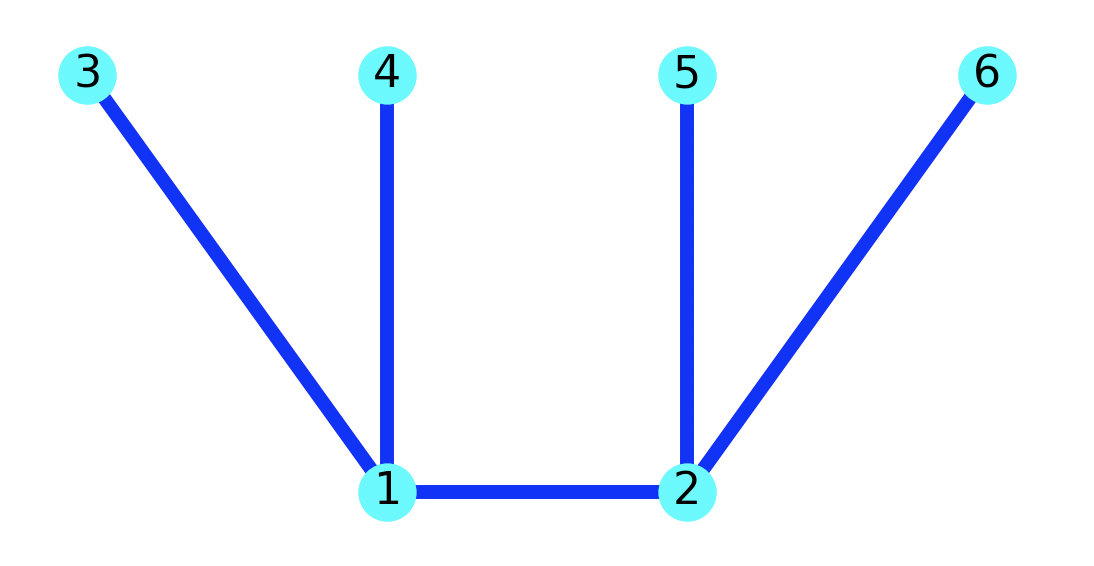} & \hspace{0.05in}  & \includegraphics[width=0.1\textwidth]{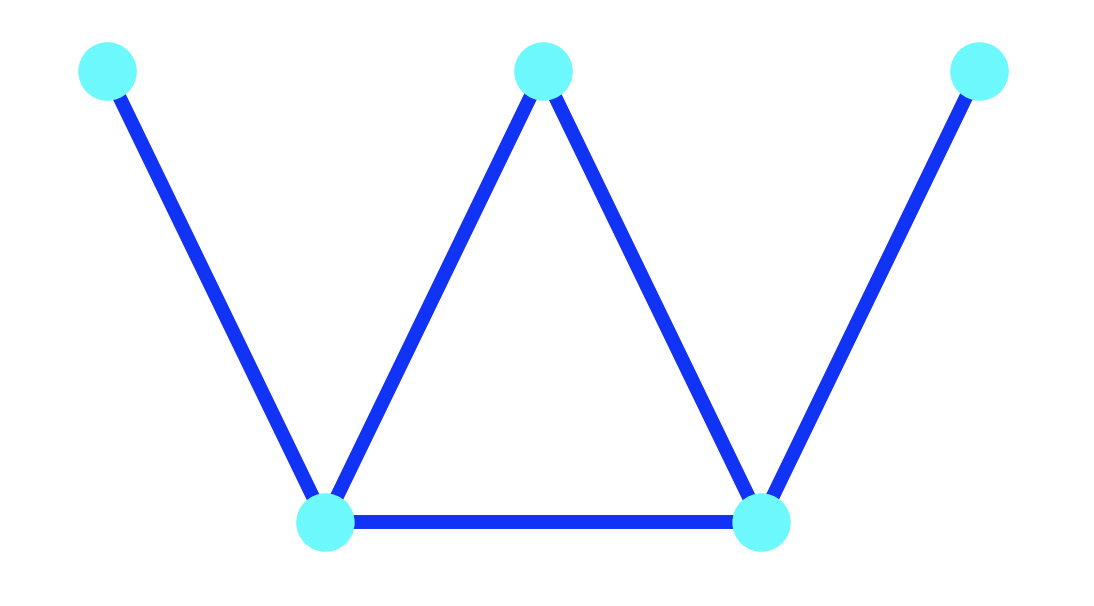}  & \hspace{0.05in}  & \includegraphics[width=0.1\textwidth]{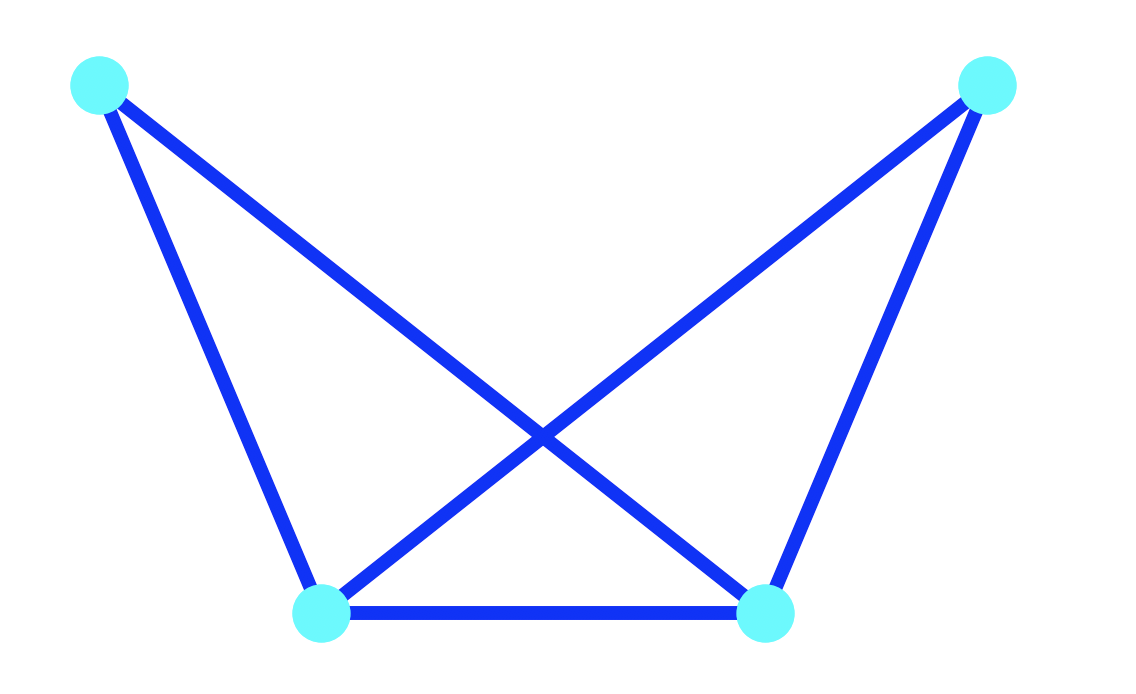}  \\
type 0 & & type 1 & & type 2
\end{tabular}
\caption{The neighborhood types of the edge in the middle for three-regular graphs, level 1 QAOA .\label{edge_neighbor}}
\end{figure}

In this section we prove Theorem \ref{triangle}, which we call the {\em Triangle Theorem}, which states that the level one QAOA
energy of a three regular graph $G$ is:

\begin{equation}\label{tri}
E(G,\gamma,\beta) = {3n\over 4} + {3n\over 8} \sin 4\beta\sin 2\gamma \cos\gamma + {3 t \over 8} \sin^{2} 2\beta \sin^{2} 2 \gamma
\end{equation}

Since $G$ is three regular,
we may encounter three different types of neighborhoods at {\em edge-distance} one as in Figure \ref{edge_neighbor}.
Let $m_{0}$, $m_{1}$ and $m_{2}$ count the number of edges of $G$ with these neighborhood types.
These three parameters already determine the level one QAOA energy of cubic graphs as noted in \cite{fern2018fixed}.
Let the single edge energy associated with type $i$ be $E_{i}$. Then the total energy is
\begin{equation}\label{EQB}
E = m_{0}E_{0} + m_{1}E_{1} + m_{2}E_{2}
\end{equation}

We replace $m_{0}$, $m_{1}$ and $m_{2}$ with just $n$ and $t$. That we can do this is due to a co-incidence:
\begin{lemma}
$E_{0} + E_{2} = 2 E_{1}$ 
\end{lemma}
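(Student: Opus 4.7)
The plan is to substitute the three neighborhood types into the closed-form single-edge energy formula recalled in Section~\ref{levelone} and simplify. For a three-regular graph, each endpoint of the central edge $e$ has exactly two neighbors besides the other endpoint, so the possible values of the triple $(d_L, d_R, d_M)$ are $(2,2,0)$ for type~0, $(1,1,1)$ for type~1, and $(0,0,2)$ for type~2, corresponding respectively to $0$, $1$, and $2$ common neighbors of the two endpoints of $e$.

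The first step is to evaluate the factor $(\cos^{d_L}\gamma + \cos^{d_R}\gamma)\cos^{d_M}\gamma$ appearing in the $Y$-component. In all three cases one checks directly that this equals $2\cos^2\gamma$, because the exponents line up: $(\cos^2\gamma+\cos^2\gamma)\cdot 1 = (\cos\gamma+\cos\gamma)\cdot\cos\gamma = (1+1)\cdot\cos^2\gamma = 2\cos^2\gamma$. Hence $Y_0 = Y_1 = Y_2 = 2\sin 4\beta\,\sin\gamma\,\cos^2\gamma$, so the $Y$-contribution to $E_0 + E_2 - 2E_1$ vanishes identically and the problem reduces to verifying $X_0 + X_2 = 2 X_1$.

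The second step is this reduced identity. Plugging in yields $X_0 = 0$, $X_2 = \sin^2 2\beta\,(1-\cos^2 2\gamma)$, and $X_1 = \sin^2 2\beta\,(1-\cos 2\gamma)\cos^2\gamma$. Factoring $\sin^2 2\beta\,(1-\cos 2\gamma)$ from both sides, the required equality becomes
\[
1 + \cos 2\gamma \;=\; 2\cos^2\gamma,
\]
which is a standard trigonometric identity. Combined with the vanishing $Y$-difference, this proves $E_0 + E_2 = 2 E_1$.

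There is no real obstacle here, only an observation worth underlining: the cancellation of the $Y$-parts hinges on the fact that every vertex has the same degree $3$, so the pair $(d_L, d_M)$ and the pair $(d_R, d_M)$ sum to the same value on both sides of $e$. For non-regular graphs, or for edges whose endpoints have different degrees, the identity $E_0 + E_2 = 2E_1$ would generally fail, so this lemma is in a precise sense a manifestation of $3$-regularity and is what allows us to replace the three parameters $(m_0, m_1, m_2)$ by only $n$ and $t$ in (\ref{EQB}).
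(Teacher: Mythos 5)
Your proof is correct and follows essentially the same route as the paper: both substitute the three neighborhood types $(d_L,d_R,d_M)=(2,2,0),(1,1,1),(0,0,2)$ into the Wang et al.\ single-edge formula, observe that the $Y$-parts coincide, and reduce the claim to the elementary identity $1-\cos^2 2\gamma = 2(1-\cos 2\gamma)\cos^2\gamma$ (the paper checks both sides equal $4\sin^2\gamma\cos^2\gamma$; you factor out $1-\cos 2\gamma$ and use $1+\cos 2\gamma=2\cos^2\gamma$, which is the same computation). Your closing observation that the cancellation of the $Y$-parts is a manifestation of $3$-regularity is a nice point not made explicit in the paper.
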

\begin{proof}
From the displayed expression of the previous section we can express $E_{0}$, $E_{1}$ and $E_{2}$:

\begin{eqnarray}\label{eqn4}
E_{0} = E_{2,0,2} & = & {1\over 2} + {1\over 2} \sin 4 \beta\, \sin \gamma \cos^{2} \gamma \\\label{eqn5}
E_{1} = E_{1,1,1} & = & {1\over 2} +  {1\over 4} \sin^{2} 2\beta  ( 1 - \cos 2\gamma ) \cos^{2}  \gamma + {1\over 2}  \sin 4 \beta\, \sin \gamma \cos^{2} \gamma \\
E_{2} = E_{0,2,0} & = & {1\over 2} +   {1\over 4} \sin^{2} 2\beta  ( 1 - \cos^{2} 2\gamma ) + {1\over 2}  \sin 4 \beta\, \sin \gamma \cos^{2} \gamma
\end{eqnarray}

To verify $E_{0} + E_{2} = 2 E_{1}$ all we have to show is that 
\[
2  ( 1 - \cos 2\gamma ) \cos^{2}  \gamma = 1  -   \cos^{2} 2\gamma 
\]
which can be easily seen from that both sides are $4 \sin^{2}\gamma \cos^{2}\gamma$.\end{proof}

Now we can eliminate $E_{2}$ from Equation (\ref{EQB}):
\begin{equation}\label{EQB3}
E = m_{0}E_{0} + m_{1}E_{1} + m_{2} (2E_{1} - E_{0}) 
\end{equation}
By counting the edges of all triangles in two different ways we get:
\begin{equation}\label{EQA}
3t = m_{1} + 2m_{2}
\end{equation}
which together with Equation (\ref{EQB3}) gives us
\[
E = (m_{0} - m_{2}) E_{0} + 3t E_{1} = (|G| - m_{1} - 2 m_{2}) E_{0} + 3t E_{1}  = \left({3n\over 2} - 3t\right) E_{0} + 3t E_{1} 
\]
since the number of edges of $G$ is ${3n / 2}$, as $G$ is 3-regular. Now Equations (\ref{eqn4}) and (\ref{eqn5}) 
and basic trigonometric identities immediately give Theorem \ref{triangle}.

\section{Level Two: Cycles}\label{leveltwo}

The MAX-CUT QAOA expressions for level-2 have significant complexity.
We have calculated the expression of the 
single edge-energy for cycles of length at least 6 to get an idea about
its form and complexity. We could put the expression into other equivalent forms 
but they were not simpler. In \cite{wang2018quantum} a significantly more complicated formula is given.

\medskip

\begin{center}
{The 2-level QAOA energy for $e$ in $\bullet-\bullet-\bullet\stackrel{e}{-}\bullet-\bullet-\bullet$ with angle sequence $\beta_{0},\beta_{1},\;\gamma_{0},\gamma_{1}$}:

\smallskip
\mybox{lightgray}{
\[
{1\over 2} + {1\over 4} (X + Y + Z + W)
\]
\begin{eqnarray*}
X & = & - {1\over 2} \sin 2 \gamma_{1}  \cos 2 \gamma_{0}\cdot\, \left( \sin^{2} 2\beta_{1} \sin 4\beta_{0} - 2 \sin 4\beta_{1} + \sin 4\beta_{1} \sin^{2} 2\beta_{0} \right) \\[4pt]
Y & = &   - {1\over 2} \cos 2 \gamma_{1} \sin 2 \gamma_{0}\cdot\, \left( \sin^{2} 2\beta_{1} \sin 4\beta_{0} - 2 \sin 4\beta_{1} + \sin 4\beta_{1} \sin^{2} 2\beta_{0} \right) \\[4pt]
Z & = &   \sin 2 \gamma_{0} \sin 4\beta_{0} \left({1\over 4} + {3 \over 4} \cos 4 \beta_{1} \right) \\[4pt]
W & = &  \sin 2 \gamma_{1} \sin 2\beta_{1} \sin 2\beta_{0} \sin 2 (\beta_{1} + \beta_{0} )
\end{eqnarray*}}
\end{center}

\smallskip

The QAOA energy of a $n$-cycle $n\ge 6$ is simply $n$ times the above amount.

\section{Analysis of the single-edge graph}

Formulas for level two QAOA are complex,
but level $p$ seems nearly intractable. Here even the graph containing a single edge is a challenge. 
The straightforward state evolution for a $\gamma,\beta$ sequence gives the final state
\[
|\phi\rangle =  N(\beta_{p-1}) M(\gamma_{p-1}) \cdots  N(\beta_{0}) M(\gamma_{0})\, |+\rangle^{\otimes 2}
\]
where
\begin{eqnarray*}
N(\beta)  &= &
\left(
\begin{array}{cc}
\cos \beta & - i \sin \beta \\
 - i \sin \beta & \cos \beta \\
\end{array}
\right) \otimes
\left(
\begin{array}{cc}
\cos \beta & - i \sin \beta \\
 - i \sin \beta & \cos \beta \\
\end{array}
\right)
 \\[8pt]
  &= & \left(
\begin{array}{cccc}
\cos^{2} \beta & -{ i\over 2}  \sin 2 \beta & -{ i\over 2}  \sin 2 \beta & - \sin^{2} \beta  \\[2pt]
 -{ i\over 2}  \sin 2 \beta & \cos^{2} \beta & - \sin^{2} \beta & -{ i\over 2}  \sin 2 \beta \\[2pt]
  -{ i\over 2}  \sin 2 \beta & - \sin^{2} \beta & \cos^{2} \beta & -{ i\over 2}  \sin 2 \beta \\[2pt]
  - \sin^{2} \beta & -{ i\over 2}  \sin 2 \beta & -{ i\over 2}  \sin 2 \beta & \cos^{2} \beta
\end{array}
\right)
\end{eqnarray*}

and $M(\gamma) = {\rm Diag}(e^{-i\gamma}, 1, 1, e^{-i\gamma})$.
Once the state $|\psi\rangle$ is iteratively computed for level $p$, 
the energy is $||\psi\rangle_{00}|^{2} + ||\psi\rangle_{11}|^{2}$.
Interestingly, we can come up with a different (although similar) formula that avoids taking the squares
when computing the energy.
Let 
\[
v_{0} = (0.5, 0, 0, 0)
\]
be a starting vector. Define Matrices
\begin{eqnarray*}
M_{1} = 
\left(
\begin{array}{cc}
1 & 0 \\
 0 & e^{-i  \gamma } \\
\end{array}
\right)
& \;\;\;\;\;\; &
N_{1} = 
\left(
\begin{array}{cc}
\cos 2 \beta & i \sin 2 \beta \\
 i \sin 2 \beta & \cos 2 \beta \\
\end{array}
\right)
\\[5pt]
M_{2} = 
\left(
\begin{array}{cc}
1 & 0 \\
 0 &  e^{i  \gamma } \\
\end{array}
\right)
& \;\;\;\;\;\; &
N_{2} = 
\left(
\begin{array}{cc}
\cos 2 \beta & - i \sin 2 \beta \\
 -  i \sin 2 \beta &  \cos 2 \beta \\
\end{array}
\right)
\end{eqnarray*}

Then the energy value can be expressed as

\[
E(\gamma,\beta)  = (1,1,1,1)^{T} \left(\prod_{i=0}^{p-1} M_{1} (\gamma_{p-i})N_{1}(\beta_{p-i}) \otimes M_{2}(\gamma_{p-i}) N_{2}(\beta_{p-i}) \right) v_{0}
\]

It is worthwhile to write out $M(\gamma,\beta) = M_{1} (\gamma)N_{1}(\beta) \otimes M_{2} (\gamma)N_{2}(\beta)$ (for $\beta,\gamma \in [0,2\pi]$).

\smallskip

\[
M(\gamma,\beta)  \; = \;
 \left(
\begin{array}{cccc}
\cos^{2} 2 \beta & -{ i\over 2}  \sin 4 \beta & { i\over 2}  \sin 4 \beta &  \sin^{2} 2 \beta  \\[2pt]
 -{ i\over 2}  e^{i\gamma} \sin 4 \beta &  e^{i\gamma} \cos^{2} 2 \beta & e^{i\gamma}  \sin^{2} 2 \beta & { i\over 2}   e^{i\gamma}  \sin 4 \beta \\[2pt]
  { i\over 2}  e^{-i\gamma}  \sin 4 \beta & e^{-i\gamma}   \sin^{2} 2 \beta & e^{-i\gamma}   \cos^{2} 2 \beta & -{ i\over 2}  e^{-i\gamma}  \sin 4 \beta \\[2pt]
  \sin^{2} 2 \beta & { i\over 2}  \sin 4 \beta & -{ i\over 2}  \sin 4 \beta & \cos^{2} 2 \beta
\end{array}
\right)
\]

\smallskip

It is easy to see that 

\[
\E \left(
M(\gamma,\beta) 
\left(
\begin{array}{c}
a \\
b \\  
c \\
d
\end{array}
\right)\right) \; = \; 
\left(
\begin{array}{c}
{a + d\over 2} \\
0 \\  
0 \\
{a + d \over 2}
\end{array}
\right)
\;\;\;\;\;\;\;\;\; {\rm over}\; (\gamma,\beta) \in U([0,2\pi]^{2}) 
\]
We can repeatedly apply the above, starting from $v_{0}$, to get:
\begin{lemma}\label{edgesigma}
The expected QAOA energy of a graph containing a single edge is ${1\over 2}$ for every $p$.
\end{lemma}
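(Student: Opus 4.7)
The plan is to propagate the displayed expectation identity through all $p$ levels. Since the angle pairs $(\gamma_0,\beta_0),\ldots,(\gamma_{p-1},\beta_{p-1})$ are independent and each uniform on $[0,2\pi]^2$, and since the factor at level $i$ depends only on $(\gamma_i,\beta_i)$, Fubini lets me push the expectation inside the product from the innermost level outward, taking one expectation at a time.

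First I would apply the identity $\E\,M(\gamma,\beta)(a,b,c,d)^{T} = ((a+d)/2,\,0,\,0,\,(a+d)/2)^{T}$ to the innermost factor acting on $v_{0} = (1/2,0,0,0)^{T}$. This immediately yields $(1/4,0,0,1/4)^{T}$. The key observation is that this vector is a fixed point of the map $v\mapsto \E\,M(\gamma,\beta)\,v$: plugging $(a,b,c,d) = (1/4,0,0,1/4)$ into the identity gives $(a+d)/2 = 1/4$, so the image is again $(1/4,0,0,1/4)^{T}$. A trivial induction on the number of remaining levels then shows that, after taking expectations over all $p$ angle pairs, the resulting vector is $(1/4,0,0,1/4)^{T}$ for every $p\ge 1$.

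Contracting with $(1,1,1,1)^{T}$ on the left gives $\E\,E(\gamma,\beta) = 1/4 + 1/4 = 1/2$, proving the lemma.

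There is essentially no obstacle: the only point that deserves a word of care is the justification for commuting expectation with the matrix product, which is simply independence of the angle pairs plus Fubini. Everything else is a one-step fixed-point computation followed by an induction whose inductive step is the identity already established.
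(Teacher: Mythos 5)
Your proof is correct and follows exactly the route the paper takes: it applies the displayed expectation identity for $M(\gamma,\beta)$ repeatedly starting from $v_{0}=(1/2,0,0,0)^{T}$, noting that $(1/4,0,0,1/4)^{T}$ is a fixed point, and contracts with $(1,1,1,1)^{T}$ to obtain $1/2$. The paper only states "we can repeatedly apply the above"; your Fubini/independence justification and the explicit fixed-point induction simply fill in the details it leaves implicit.
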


The lemma does not imply, that the average QAOA energy of every graph $G$ is $|\Ed(G)|/2$.
Experiments show that $|\Ed(G)|/2$ is not always the average energy, but it is a good approximation.

\section{Examples}

\begin{figure}[H]
\centering
\begin{tabular}{ccc}
\includegraphics[width=0.3\textwidth]{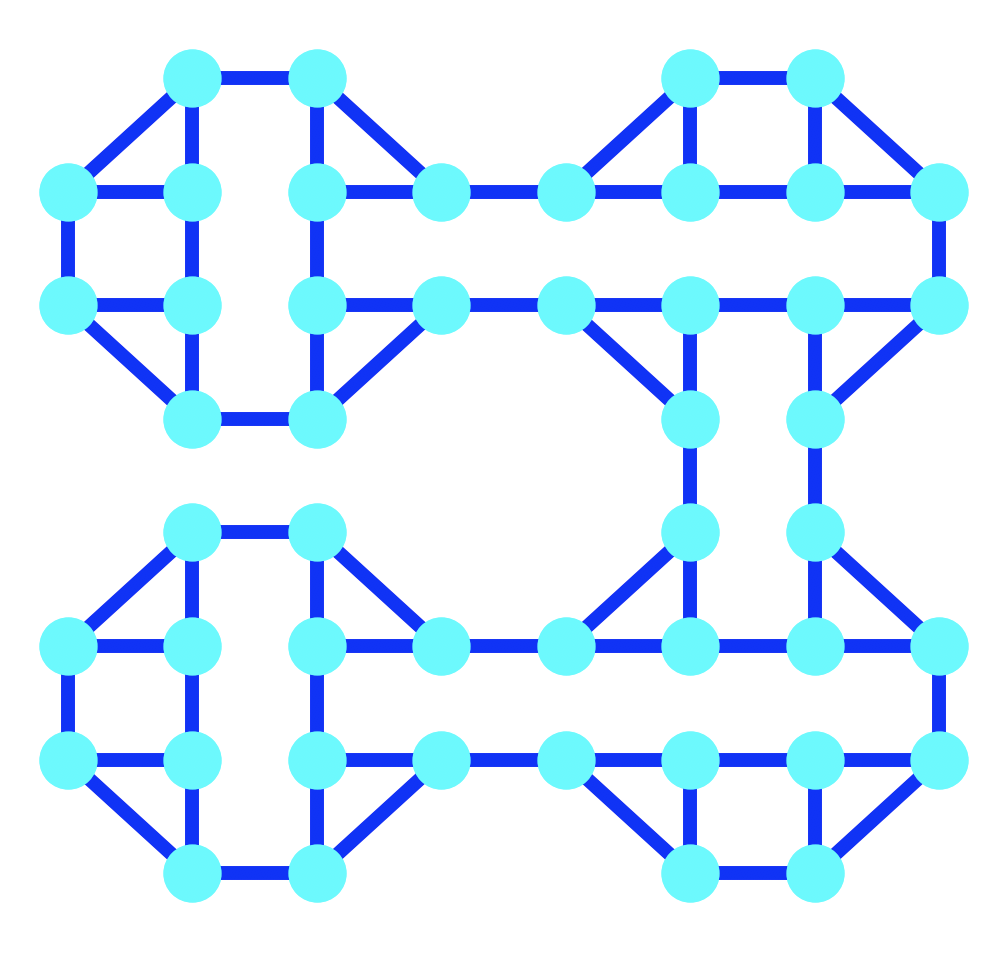} & \hspace{0.3in} & \includegraphics[width=0.3\textwidth]{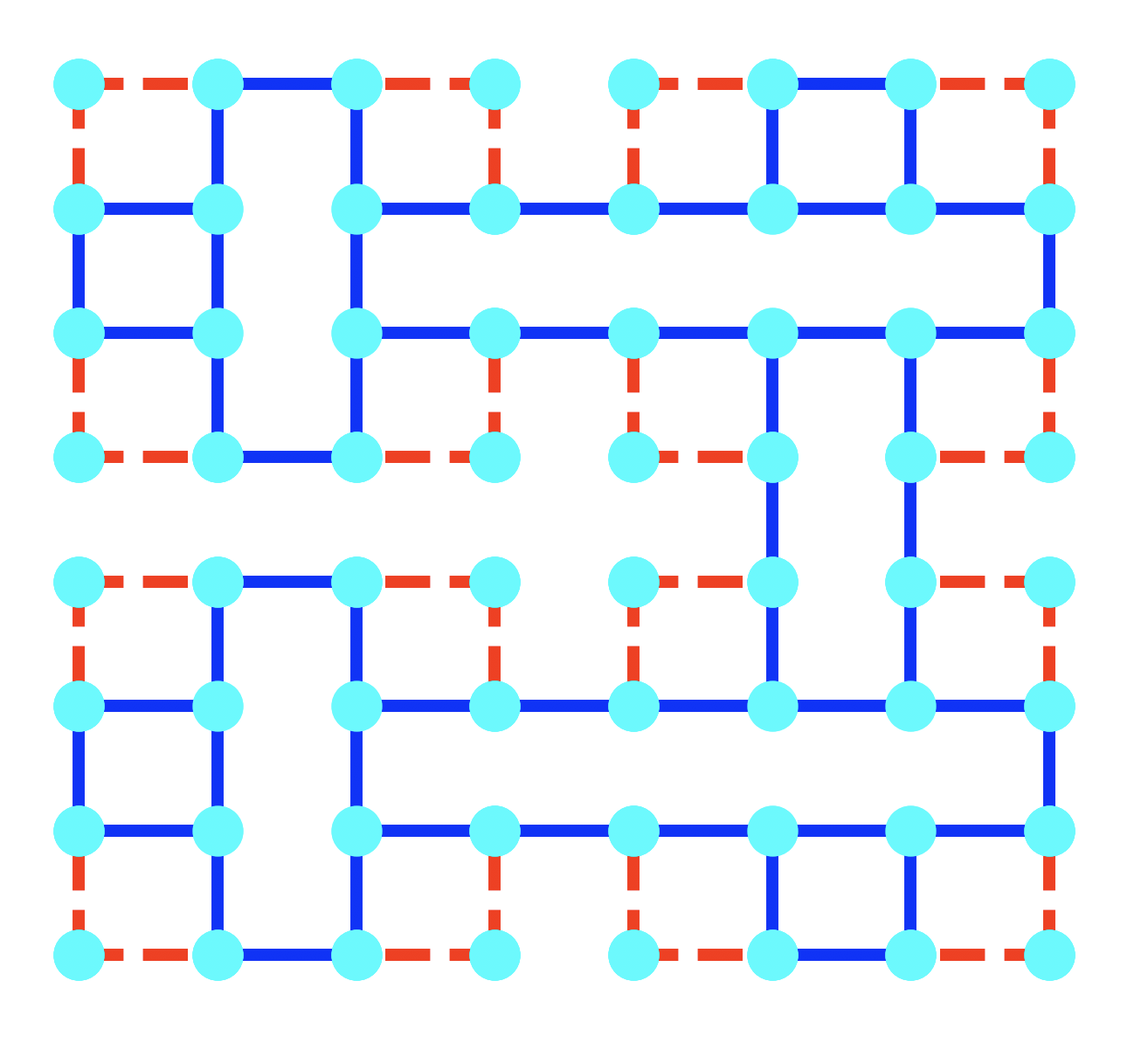} \\
  a.) Test graph $\rch{C}_{48}$ & & b.) Grid implementation of $\rch{C}_{48}$
\end{tabular}
\caption{Testing level one QAOA  on an 8 by 8 grid\label{grid}} 
\end{figure}

Assume we want to validate the QAOA algorithm on a quantum chip with an 8 by 8 grid architecture.
Since this is our first test of the device, we want to choose a graph $G$ with low degree to reduce the circuit depth. The architecture for instance allows to embed
a cycle of length 64 without intersection. We use the formula in Section \ref{levelone} to get the level one QAOA energy
\[
32 + 16 \sin 4\beta \sin 2\gamma
\]
Another example is the graph $\rch{C}_{48}$ shown in Figure \ref{grid}/a, which is 3-regular.
 The nice thing about $\rch{C}_{48}$ is that it can be implemented 
 on the 8 by 8 grid as shown in Figure \ref{grid}/b. For a diagonal edge $\langle vw\rangle$ the operation 
 $e^{-i\gamma  C_{\langle vw\rangle}}$ 
can be performed as
 \[
 e^{-i\gamma  C_{\langle vw\rangle}} = {\rm SWAP}(xv) \cdot e^{-i\gamma  C_{\langle xw\rangle}}  \cdot {\rm SWAP}(xv) 
 \]
 where $v,x,w$ form a right triangle (the grid edges replacing the diagonal edges are dashed red in Figure \ref{grid}/b).
 The graph $\rch{C}_{48}$ has 48 nodes and 16 triangles, therefore its level one QAOA energy is 
 \[
 36 + 18 \sin 4\beta\sin 2\gamma \cos\gamma + 6 \sin^{2} 2\beta \sin^{2} 2 \gamma
 \]
 Further, if we set $\gamma = \beta = \pi/4$ the formula gives 42. In contrast, a random assignment would give energy 72/2 = 36, on expectation.

\newpage
\part{Graph Similarity and Graph Isomorphism}

\section{Algorithm}\label{algsection}

Our fundamental algorithm computes the QAOA energies of a set $\{G_{1},\ldots, G_{k}\}$ 
of graphs (often just $G_{1}$ and $G_{2}$) with respect to the {\em same} random degree sequences
\begin{eqnarray*}
\gamma= (\gamma_{0},\ldots,\gamma_{p-1}) \in \; U[0,2\pi]^{p} \\
\beta= (\beta_{0},\ldots,\beta_{p-1}) \in \; U[0,2\pi]^{p} 
\end{eqnarray*}
If two graphs are isomorphic, their energies are the same for the same $(\gamma, \beta)$.
Our experiments indicate that if two graphs are not isomorphic, their energies differ for some large enough $p$.
The indication is admittedly weak:  we have not found any counter-example in spite of probing large families of graphs
as well as some specific hard pairs.
The smallest $p$, which separates all graphs on $n$ nodes,
which is not ruled out by our trials is $n/4$. It would be surprising 
if more levels than a small constant times the number of edges (or even nodes) was necessary 
to separate any pair of connected graphs, if they are separable at all.
Conjecture \ref{iso-conj} expresses our belief that QAOA energies distinguish non-isomorphic graphs. This 
is one of our motivating questions. An even bigger question is if the energy gaps are large enough to be detectable
with a quantum computer.
As we shall see, we have counter-indications for that. 

For our experiments we have computed the QAOA energies of all graphs with a classical simulator.
Double precision was sufficient for us, although theoretically, with a polynomial factor overhead, we could have afforded 
computing all values with polynomially many digits of precision.
In classical simulations the bottleneck is not the precision, but that the number of arithmetic operations grows
exponentially with $|\V(G)|$ even for polynomial $p$. As long as we are
satisfied with logarithmic digits of precision, i.e. with an additive $\epsilon =1/{\rm poly}(|\V(G)|)$ output error,
an estimator $\tilde{E}(G,\gamma, \beta)$ of ${E}(G,\gamma, \beta)$
can be computed with a quantum computer for graphs with $n$ nodes and with $p = O({\rm poly}(n))$ levels.
This is stated, among others, in \cite{farhi2016quantum}, and the underlying algorithm is really simple:

%

\medskip

{\tt
\noindent...............................................................
\begin{algorithmic}
\State {\bf Input} $G,\gamma,\beta$ 
\For {$1\le j \le  N$}
\State Build a QAOA circuit for $G,\gamma,\beta$ 
\State $z \gets$ Measure the output state in the computational basis
\State $E_{j} \gets$ Classically compute $C(z)$
\EndFor
\State {\bf Return} $\tilde{E}(G,\gamma,\beta) \; \gets $ average of  $E_{j}$s
\end{algorithmic}
...............................................................}

\medskip

Here $N$, the number of repetitions, is a parameter of the algorithm, which must be sufficiently large.
If the fidelity of the circuit-output $|\psi\rangle$ is $1-\epsilon$, then
for an actual output $|\psi'\rangle$ we have $|\langle\psi'|\psi\rangle|^{2} = 1- \epsilon$. Then
$|\langle\psi'|C|\psi'\rangle -\langle\psi|C|\psi\rangle| =  |\langle\psi' -\psi |C|\psi'\rangle +    \langle\psi|C|\psi' -\psi\rangle| \le 2 |\psi' -\psi| |C| \le 2 n^{2} \epsilon$, when $|\V(G)| = n$,
showing that a circuit with fidelity inverse polynomially close to 1 still works sufficiently well.

\section{The Cost of Communicating Isomorphism of Graphs}

Consider the problem where Alice gets a graph $G_{1}$, Bob gets $G_{2}$, both on $n$ nodes, 
and they want to find out if $G_{1}$ is isomorphic to $G_{2}$.
If the only resource we care about is the communication cost between Alice and Bob, there is a constant bit 1\% error 
protocol in the {\em public coin} setting, where a random string is given to both Alice and Bob (at no cost) before their exchange begins.

We will show how to reduce the problem to the following famous public coin communication protocol for the Equality function:
\[
{\rm EQ}(X,Y) =
\left\{\begin{array}{lll}
0 & {\rm if} & X=Y \\
1 & {\rm if} & X\neq Y
\end{array}\right. \;\;\;\;\;\;\;\;\; X, Y \in \{0,1\}^{n}
\]
In this protocol both Alice and Bob get a random string $Z\in \{0,1\}^{n}$. Then Alice sends over
the modulo two inner product of $X$ and $Z$ to Bob, who in turn outputs 1 if $(X,Z)=(Y,Z)$ and 0 otherwise. 
The protocol always succeeds when the two strings are equal, and if not, it will be revealed with probability 0.5.
With a constant number of repetitions the probability of failure can be reduced to below 1\%.

At first the graph isomorphism problem seems much harder, since Alice and Bob have to deal with an unknown isomorphism 
between their respective input graphs. 
The predicate  ${\rm EQ}(G_{1}, G_{2})$ can only reveal if $G_{1}$ and $G_{2}$ are {\em written down in the same exact way}.
There is a way however to get around this problem.
The idea is that Alice and Bob first independently bring their graphs
into their respective {\em canonical forms}. A canonical form is a map ${\cal C}$ from graphs to strings with the property that for two graphs, $G_{1}$ and $G_{2}$
we have ${\cal C}(G_{1}) = {\cal C}(G_{2})$ if and only if the two graphs are isomorphic.
Since graph isomorphism is an equivalence relation, such a map exists.
The protocol fixes this map, and Alice and Bob are left to solve ${\rm EQ}({\cal C}(G_{1}), {\cal C}(G_{2}))$ 
with constant bits of communication (the communication is constant even if ${|\cal C}(G_{i})|$ is exponential).

The problem becomes much harder if we are also concerned with the cost of computing ${\cal C}$. Laci Babai 
in a recent work has given a function ${\cal C}$ computable in time $2^{{\rm polylog} n}$ \cite{DBLP:conf/stoc/Babai19}.
Such a function also solves the graph isomorphism problem in time $2^{{\rm polylog} n}$,
but the converse is not straightforward, and in fact three years have elapsed between results \cite{DBLP:conf/stoc/Babai19}
and \cite{DBLP:conf/stoc/Babai16},
where the first quasi-polynomial graph isomorphism algorithm was introduced.

The QAOA Ansatz for the MAXCUT problem of $G$ offers a way 
to construct a {\em randomized} map ${\cal Q}: \; G\rightarrow E(G,\gamma, \beta)$ 
that canonically encodes graphs, conditional to Conjecture \ref{iso-conj}, when $p$ is sufficiently large.
The associated protocol is:

{\tt
\noindent...............................................................
\begin{algorithmic}
\State Alice and Bob get the same random $\beta,\gamma$ (public randomness)
\State Alice $\leftarrow G_{1}$, Bob $\leftarrow G_{2}$
\State Alice computes ${E}(G_{1},\gamma,\beta)$ 
\State Bob computes ${E}(G_{2},\gamma,\beta)$ 
\State {\bf Return} ``isomorphic'' if ${E}(G_{1},\gamma,\beta) = {E}(G_{2},\gamma,\beta)$, else ``not isomorphic''
\end{algorithmic}
...............................................................}

\medskip

In most {\em (but unlikely in all)} cases, ${\cal Q}$ separates two non-isomorphic graphs 
with high probability {\em even} if we do the computation only with {\em logarithmically many} bits of precision.
In such cases we can replace ${E}(G,\gamma,\beta)$ with estimator $\tilde{E}(G,\gamma,\beta)$, computed by 
the quantum algorithm of the previous section. Further, the energy values $\tilde{E}(G_{1},\gamma,\beta)$ and $\tilde{E}(G_{2},\gamma,\beta)$  need not be 
further composed with ${\rm EQ}$ if we are satisfied with logarithmic bits of communication between Alice and Bob. 
The QAOA map ${\cal Q}$ {\em itself behaves as a kernel function},
something like the inner product in the EQ protocol.
The following analogy might be enlightening:

\medskip

\noindent{\em Alice is given $X\in [m]^{n}$ and Bob is given $Y\in [m]^{n}$. Construct a time- and communication- efficient 
randomized communication protocol that finds out if
$X$ and $Y$ contain the same elements of $[m]$, each the same number of times}

\medskip

In the trivial solution Alice and Bob privately sort their input with multiple occurrences kept (hence both computing ``the'' canonical form
of their respective sequences) and then apply the EQ protocol for the sorted sequences.
There is a different protocol however,
where a random $\gamma\in [0,2\pi]$ is used. Alice sends over $\sum_{i=1}^{n} e^{-i \gamma X_{i}}$ to Bob, who then 
compares this with $\sum_{i=1}^{n} e^{-i \gamma Y_{i}}$, and outputs 1 if the two numbers are equal. 
We consider this protocol as the smaller brother of our QAOA based algorithm for graph isomorphism.
It turns out that the above protocol for identifying multi-sets already performs well, when the numbers are calculated with $O(\log nm)$ 
digits of precision. 

Further, it is not hard to see a connection between the above formulas and characteristic functions of probability theory. It would be worthwhile 
to explore if probability theory could take any use of QAOA energies or related formulas.

\section{Graph Isomorphism and Quantum}

Utilizing the power of quantum 
in the context of the graph isomorphism (GI) problem
has been put forward in many works.
Since the GI problem shares some common traits 
with the integer factoring problem, researchers sense here yet another spectacular
demonstration of quantum advantage.
Nevertheless, there are reasons to be cautious. Although both GI and 
factoring can be viewed as special cases of the hidden subgroup problem,
the two problems behave differently. The GI problem (classically) is very easy on average:
An early result of L. Babai, 
P. Erd\H{o}s and S. M. Selkow \cite{DBLP:journals/siamcomp/BabaiES80}
shows that a straightforward linear time canonical labeling algorithm applies to almost all graphs.
The worst case classical complexity of GI is ${\rm GI\in DTIME}(2^{\log^{O(1)} n})$ \cite{DBLP:conf/stoc/Babai16}, while
factoring seems to require time $2^{n^{\alpha}}$ for some $\alpha > 0$, possibly $\alpha = 1/3$. 
Interestingly, while GI is relatively easy classically, it seems to resist the hidden subgroup problem approach, while the harder factoring yields to it.

A different quantum approach to GI is to look for quantum-computable graph invariants \cite{PhysRevA.100.052317, Zhao_2016}.
A prospective way of obtaining a full set of graph-invariants, i.e. an array of graph parameters that separate between any two non-isomorphic graphs, is via quantum walks.
In a broader sense both \cite{PhysRevA.100.052317} and QAOA qualify as $n$-particle quantum walks.
In \cite{Gamble_2010} 
it is experimentally shown that quantum walks of two interacting particles can successfully distinguish between
some strongly regular graph pairs 
that single particle walks or non-interacting particle walks provably cannot.
Similar results were obtained by S. D. Berry and J. B. Wang
\cite{berry2011two}. Godzil and Guo \cite{godsil2011quantum} calculate spectra of quantum walk operators,
but they do not conclude that two interactive particle walks always distinguish non-isomorphic strongly regular graphs. J. Smith has a 
publication in the arXiv entitled ``k-Boson quantum walks do not distinguish arbitrary graphs'' \cite{smith2010k}.
I. Hen and A.P. Young have experimentally tested a quantum annealing based graph isomorphism tester \cite{hen2012solving}
with some satisfying outcomes, but the authors also express:
``The results we presented here support a conjecture that the Quantum Adiabatic prescription can differentiate between 
all non-isomorphic graphs, given an appropriate choice of problem and driver Hamiltonians. 
This conjecture needs to be tested more thoroughly, both theoretically and also by experiments on real quantum annealers.''
We think, that if our algorithm does not work, there is little chance the annealing based algorithm will,
since QAOA was distilled from the former. The same pairs may fool both.
The work of D. Tamascelli and L. Zanetti \cite{Tamascelli_2014}  is different from the previous ones, in that 
it starts with an equivalent rewriting of the graph isomorphism problem as an optimization problem,
so there is a guarantee that their algorithm succeeds. The question is however the running time,
that can easily be exponential.

In a sequence of innovative works by Xanudu reserchers K. Bradler, S. Friedland, J. Izaac, N. Killoran, and D. Su
\cite{bradler2018graph} and later M. Schuld, K. Bradler, R. Israel, D. Su, and B. Gupt
\cite{schuld2019quantum} the authors identify graphs encoded in quantum state of light.
They show that photon states encoding non-isomorphic graphs give different detection probabilities.
What remains is to upper bound the size of their sampler and to lower bound the 
gap in the statistics.

\section{Graph Similarity and QAOA energies}

\begin{figure}[H]
\centering
\includegraphics[width=0.15\textwidth]{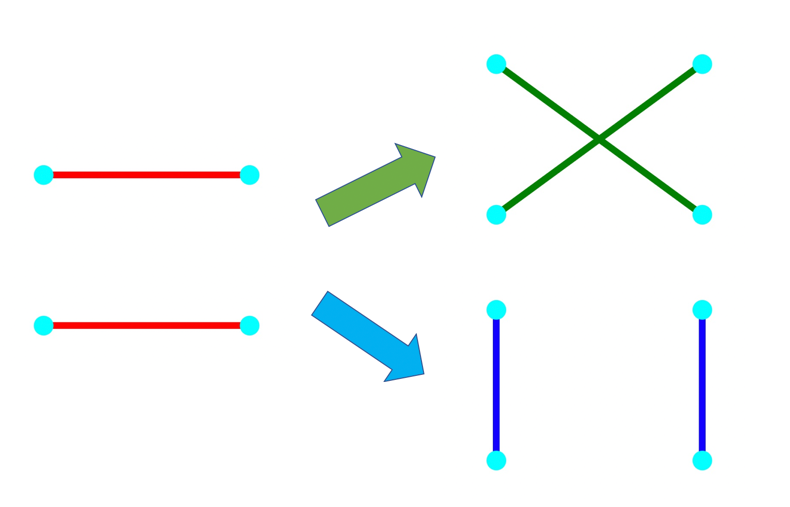} 
\caption{A walk step  \label{walk_step}}
\end{figure}

Even if our QAOA-based algorithms fail to detect graph isomorphism, they might still recognize if two graphs are similar.
What is graph similarity?

The QAOA energy-gap dependence from {\em graph edit distance} between pairs of graphs was studied 
in \cite{shaydulin2019evaluating, shaydulin2019multistart}. Isomorphic graphs have graph edit distance 
zero. In general, graph edit distance 
between graphs $G_{1}$ and $G_{2}$ is the length of the shortest add-delete sequence of edges
that takes $G_{1}$ to some isomorphic copy of $G_{2}$. This distance is a metric.
Shaydulin et al. have looked at how the maximal QAOA energy 
and optimal angle sequences are different for graphs that are close in graph edit distance. 
They have found that graph edit distance is a good predictor
whether these differences are small or large. Our pursuits differ from 
the above research in three ways:

\begin{enumerate}
\item Our goal is not to find or estimate optimal QAOA angles, but rather
to use QAOA to detect graph similarity.
\item We do not compare energies at optimal angles, but rather at random angles.
\item For our experiments we do not go through the hard task of computing graph edit distances. Rather, 
we are satisfied with a more heuristic method.  We have designed a random walk on $d$-regular graphs 
such that each step changes 
the graph edit distance by at most four. We expect that starting from any graph, as
we walk away from it and plot the QAOA energy differences, they grow.
\end{enumerate}

We have found energy difference that increases with the walk-length and eventually reaches a plateau,
which is probably due to the walk mixing into random graphs.

\begin{figure}[H]
\centering
\includegraphics[width=0.25\textwidth]{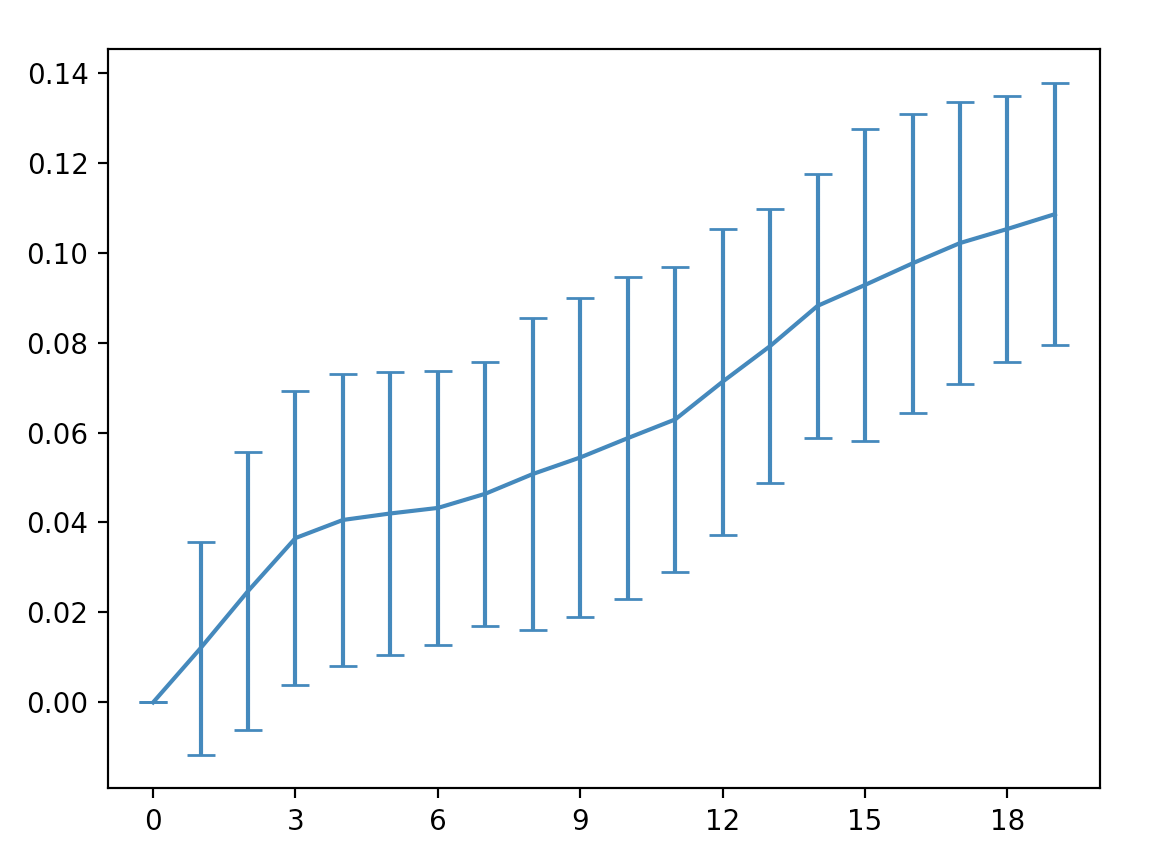} 
\caption{QAOA energy differences in terms of walk-distance  \label{walk_step}}
\end{figure}

\section{Landscapes}

\begin{figure*}[h!]\
    \centering
    \begin{subfigure}[b]{0.35\textwidth}
        \centering
        \includegraphics[width=\textwidth]{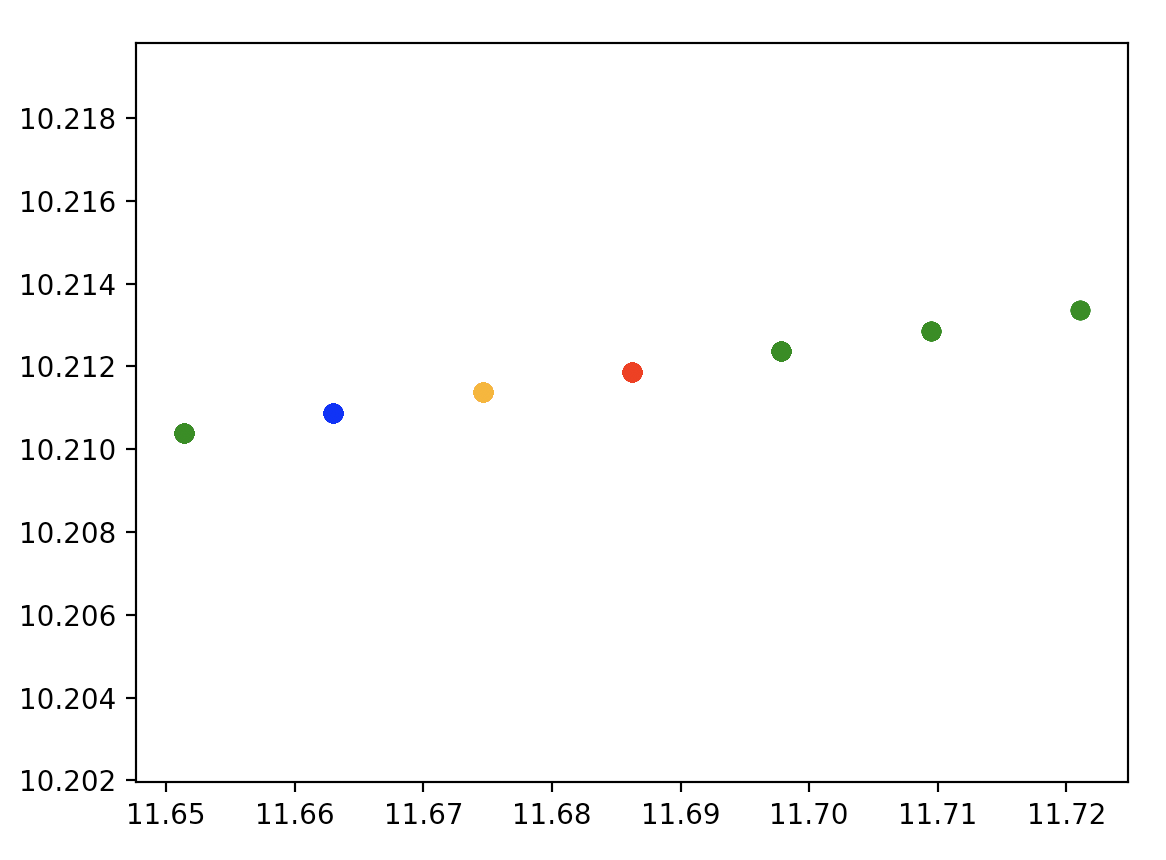}
        \caption{3-reg. graphs of size 14, $p=1$}
    \end{subfigure}
    \hspace{0.4in}
    \begin{subfigure}[b]{0.35\textwidth}
        \centering
        \includegraphics[width=\textwidth]{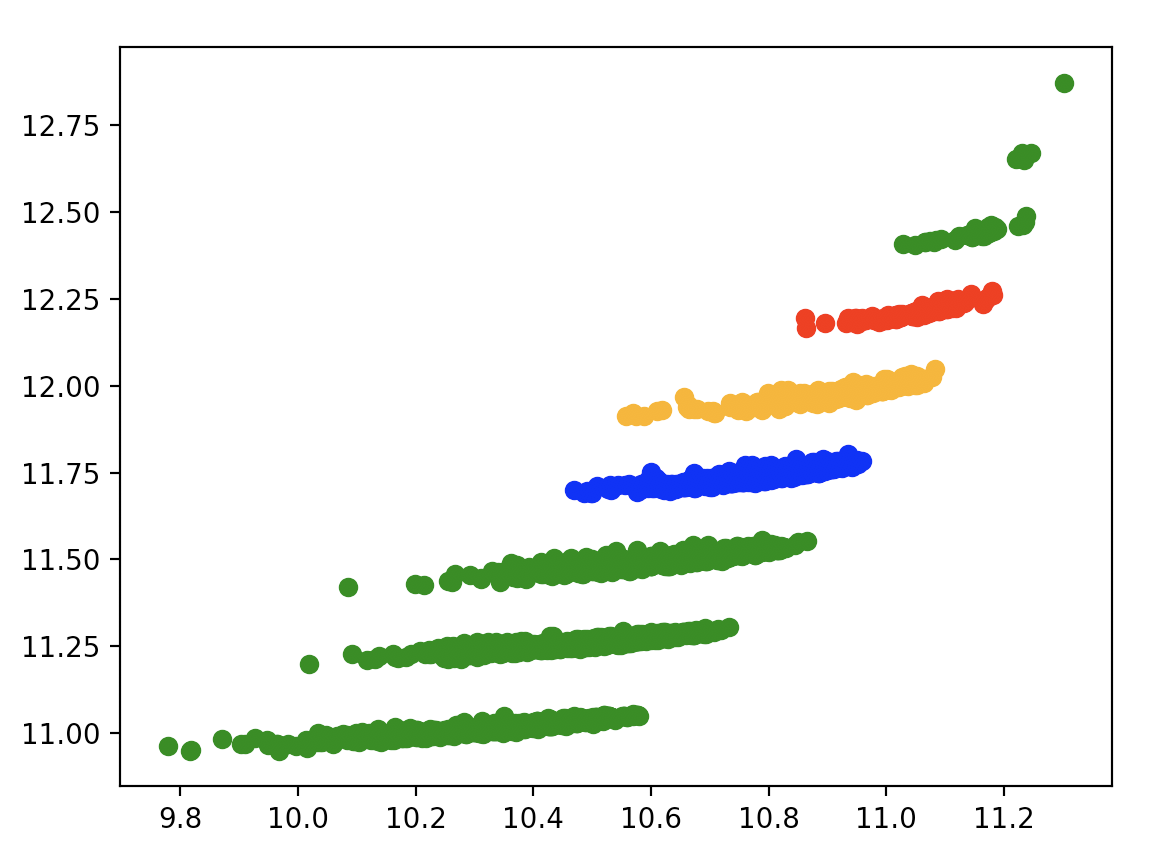}
        \caption{3-reg. graphs of size 16, $p=3$}
    \end{subfigure}%
   \\
    \begin{subfigure}[b]{0.35\textwidth}
        \centering
        \includegraphics[width=\textwidth]{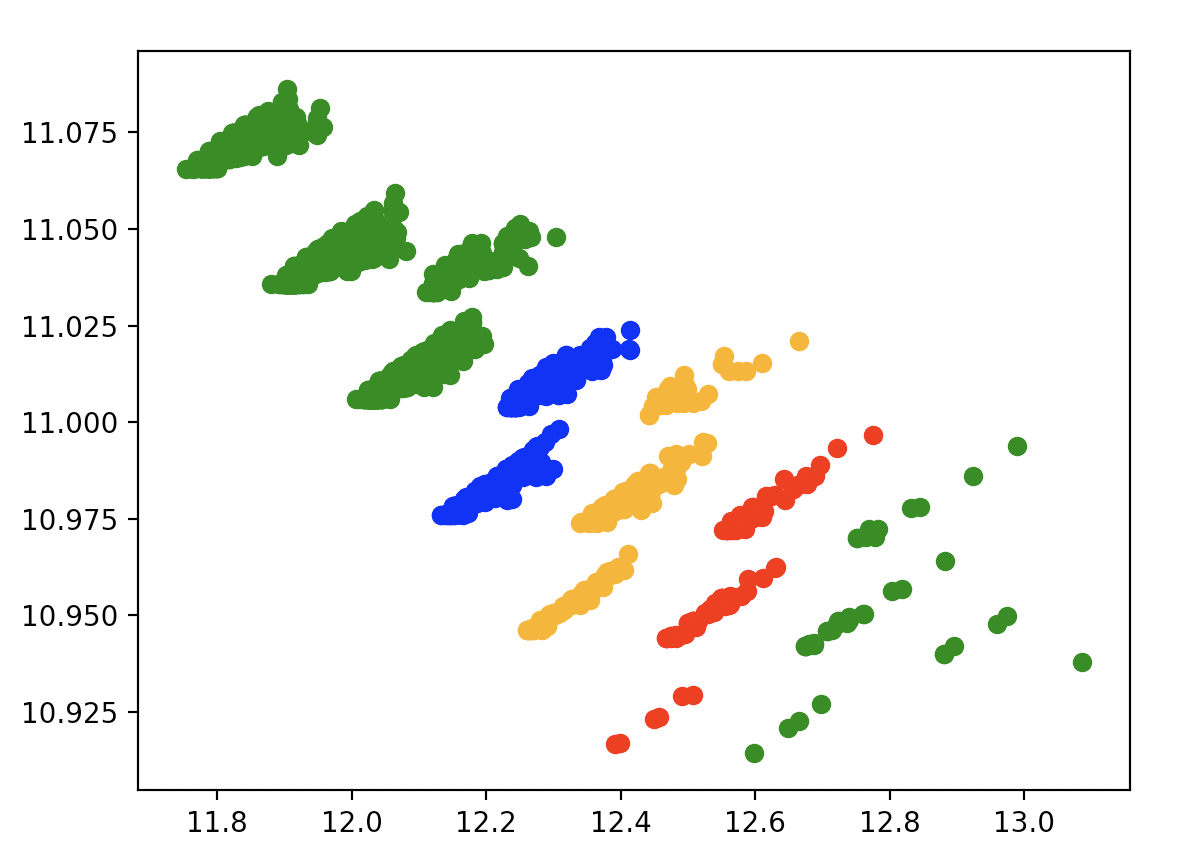}
        \caption{3-reg. graphs of size 16, $p=3$}
    \end{subfigure} 
    \hspace{0.4in}
    \begin{subfigure}[b]{0.35\textwidth}
        \centering
        \includegraphics[width=\textwidth]{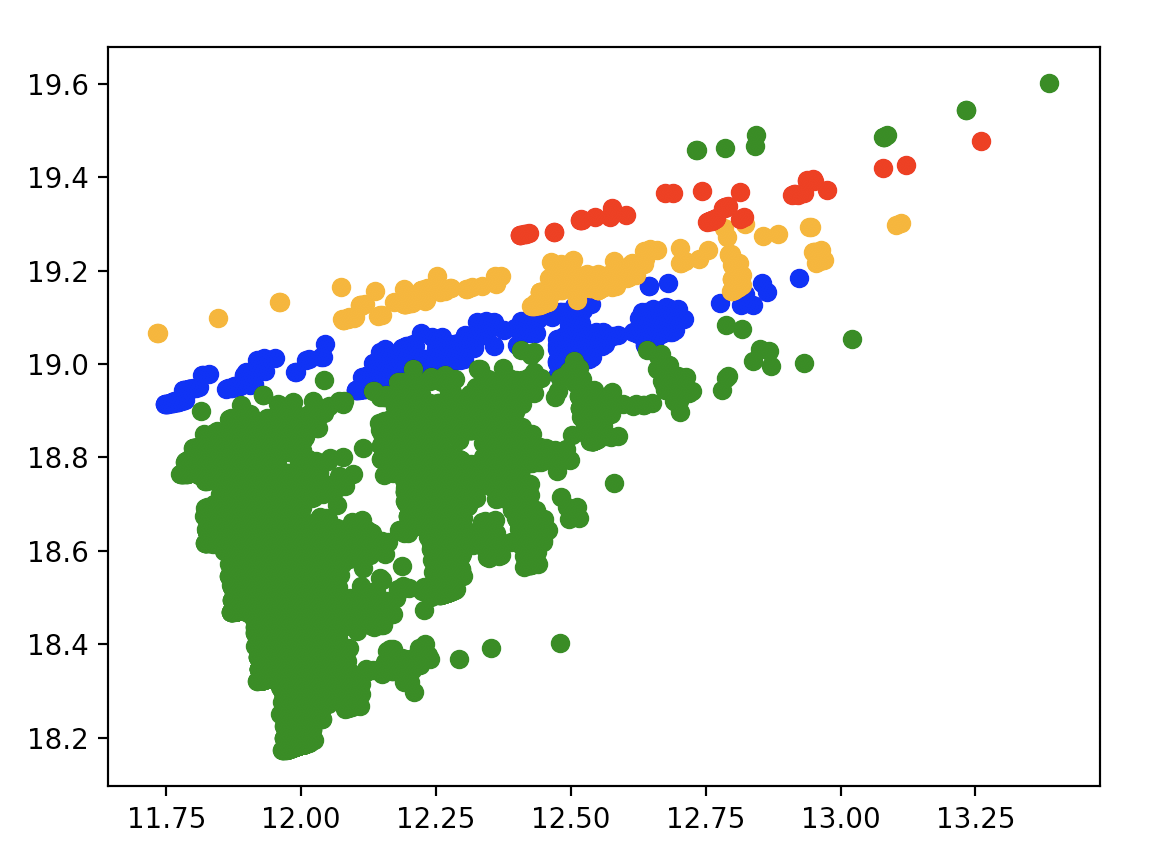}
        \caption{3-reg. graphs of size 18, $p=4$}
    \end{subfigure}%
    \caption{Different Landscapes of all 3-regular graphs on 16 and 18 nodes. \label{landscapes}
    Plots (b) and (c) are made from the same sets of graphs but using different angle sequences.}
\end{figure*}

A graph similarity measure is a metric on all or on a select set of graphs.
When this metric can be embedded into $\mathbb{R}^{2}$, we can draw a visual image of 
how the graphs in our set (like all 3-regular graphs of a given size) cluster according to the metric. We call such an embedding a {\em landscape},
which is not to be confused with energy landscapes. Our landscapes are {\em graph landscapes}.

Graph landscapes made from QAOA energies reveal intriguing properties of graphs.
Consider for instance 
 the ${\cal G}_{16,3}$, the set of all 3-regular graphs with 16 nodes. Let $p=3$, and fix $\gamma,\beta,\gamma',\beta' \in [0,2\pi]^{p}$.
 Define the map ${\cal G}_{16,3} \rightarrow \mathbb{R}^{2}$ by 
 \[
 G \rightarrow (E(G,\gamma,\beta), E(G,\gamma',\beta')
 \]
 Figure \ref{landscapes} (b) and (c) show two such landscapes made from different angle sequences.
 The clusters are clearly visible in both. Closer examination has revealed that the dominant clusters are formed by graphs with
 the same number of triangles. We have colored some subsets that contain graphs with the same number of triangles with different colors.
 Each color except the green 
 corresponds to a fixed number of triangles. 
 
 The dependence on the number of triangles is even more revealing when $p=1$.  Figure \ref{landscapes} (a)
 shows such a landscape of all three regular graphs with 14 nodes.
 The clusters shrink to a single point, which hints at Theorem \ref{triangle}, and in fact this is how we have discovered it.
 The same structure is not true for 4-regular graphs any more, and we need more graph invariants to explain the landscape. 
 
The clustering phenomenon 
\cite{shaydulin2019evaluating, shaydulin2019multistart}.

\newpage

\section{What are the least distinguishable pairs of graphs?} \label{least}

\begin{figure}[H]
\centering
\begin{tabular}{ccccccc}
\includegraphics[width=0.15\textwidth]{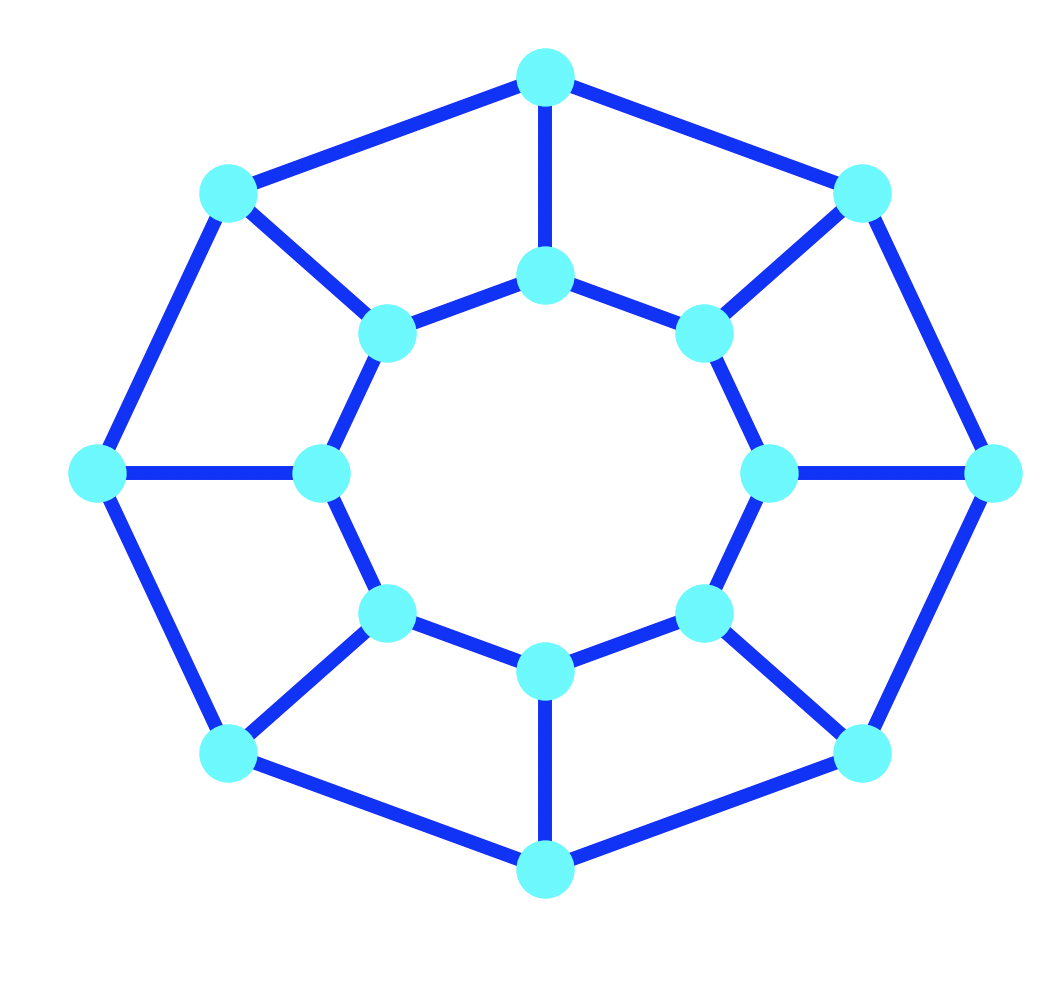} & \hspace{0.05in}  & \includegraphics[width=0.15\textwidth]{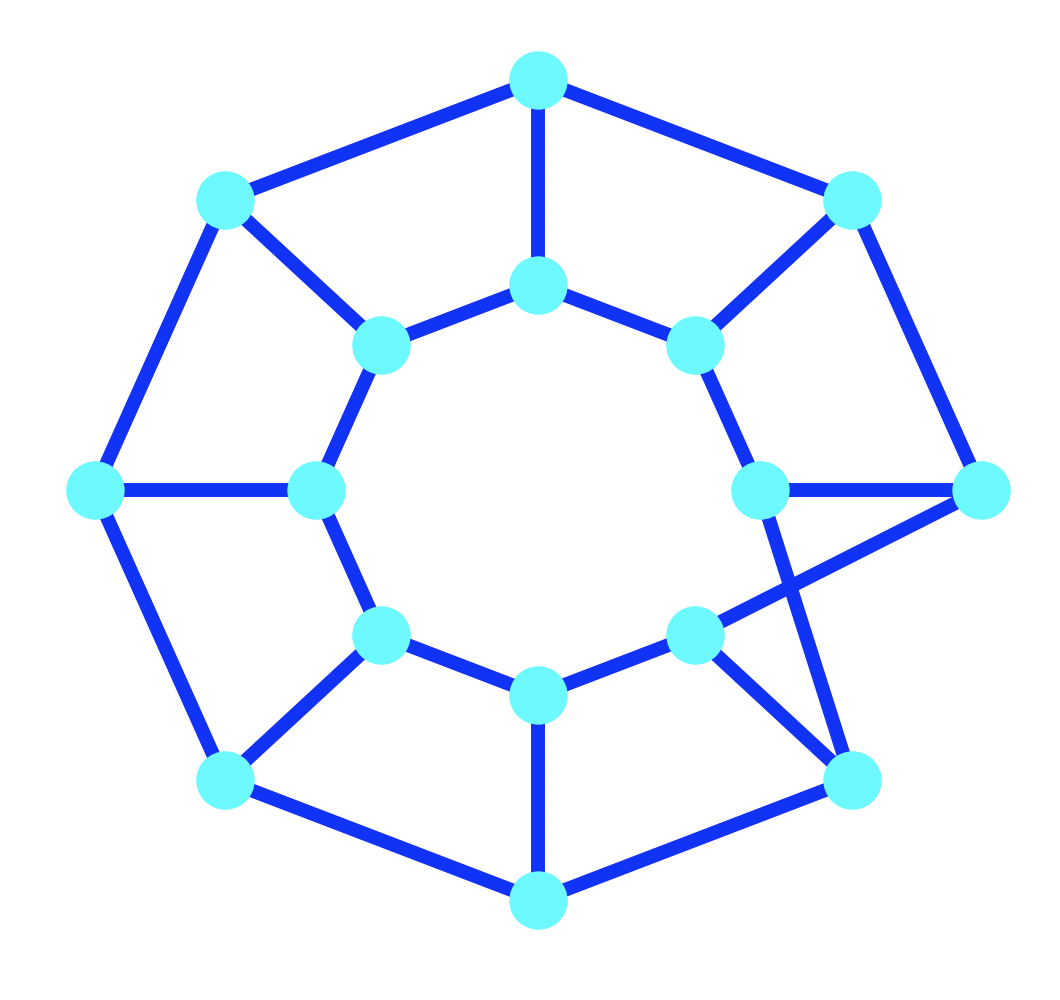}    & \hspace{0.2in}   &
\includegraphics[width=0.2\textwidth]{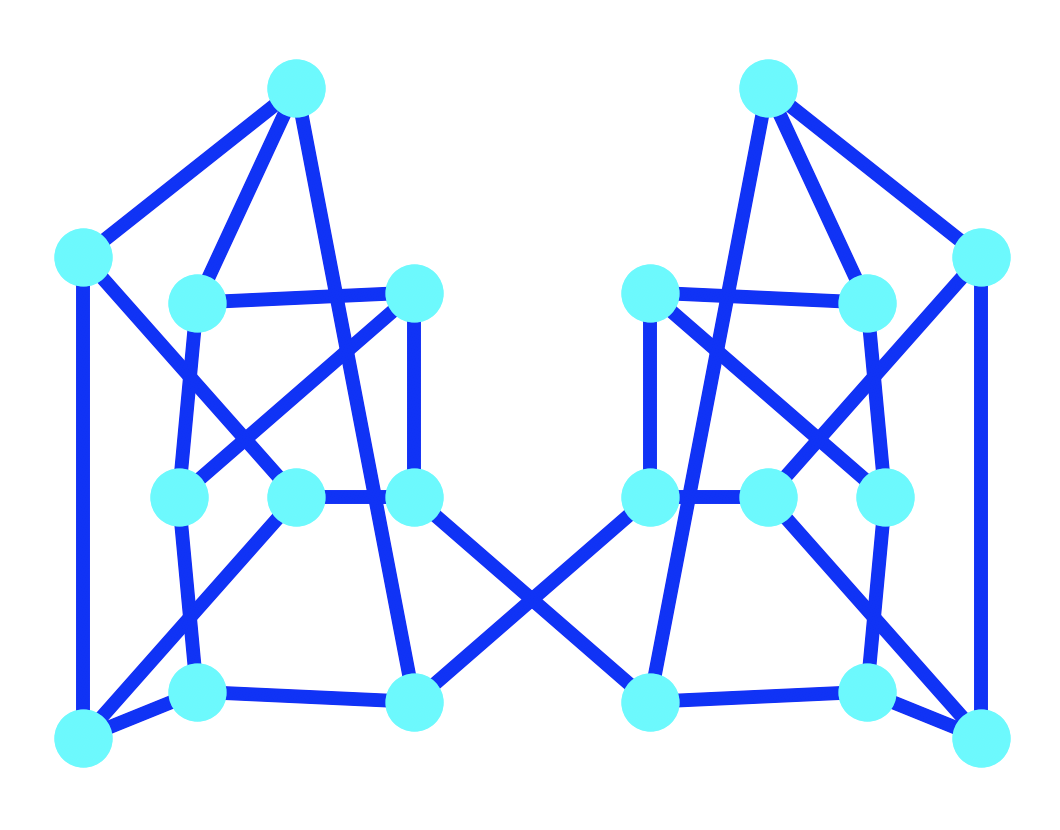} & \hspace{0.05in}  & \includegraphics[width=0.2\textwidth]{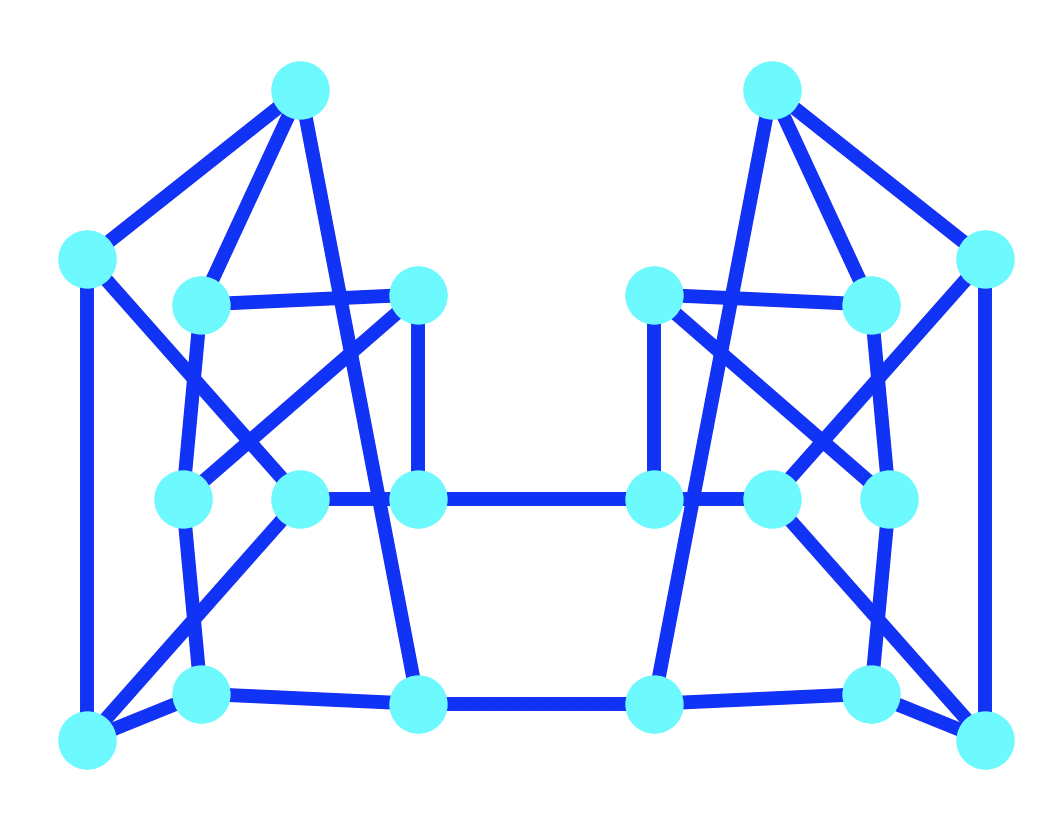}  \\
$CL(8)$ &  & $ML(8)$ & & $M_{1}$ & & $M_{2}$
\end{tabular}
\caption{Two hard-to-distingush pairs \label{cm}}
\end{figure}

Heuristic algorithms for the graph 
isomorphism problem are often tried on easily confusable 
graph pairs \cite{DBLP:journals/corr/NeuenS17a}. According to Laci Babai, only the most naive algorithms fall 
victim to strongly regular graphs with the same parameters or to iso-spectral pairs.
The Cai-Furer-Immerman \cite{DBLP:journals/combinatorica/CaiFI92} and Miyazaki graph pairs \cite{miyazaki, tener, neuen2018exponential} are a more serious challenge \cite{personal}.

For different GI heuristics different graph pairs are hard to distinguish \cite{PhysRevA.100.052317}. 
We have tested our algorithm on Miyazaki-inspired small examples, $M_{1}$ and $M_{2}$ on 20 nodes (Figure \ref{cm}, right two), each level 
from one to ten with 5000 random probes.
For our $M_{1}, M_{2}$ pair there was no gap up-to level 2. The separation has occurred at level 3:

\begin{figure}[H]
\begin{center}
\begin{tabular}{cccccccccc}
 level ($p$) & 2 & 3 & 4 & 5 & 6 & 7 & 8 & 9 & 10  \\\hline
gap   & 0  & 0.0003 & 0.0012 & 0.0026 & 0.0037 & 0.0054 & 0.0069 & 0.0083 & 0.0094 \\
squared & 0 & 1.7e-06 & 1.2e-05 & 3.4e-05 & 5.4e-05 & 9.4e-05 & 1.3e-04 & 1.7e-04 & 2.0e-04
\end{tabular}
\end{center}
\caption{$M_{1} =$ Miyazaki(20,1),  $M_{2} =$ Miyazaki(20,2): Expectation of gaps and squared gaps \label{cm}}
\end{figure}
 
While plotting the landscape for all connected 3-regular graphs with 16 nodes for levels one, two and three
we have noted that among all 4060 non-isomorphic three regular graphs, when reaching level three,
only a single pair of graphs remained that always had the same QAOA energies.
We have discovered that these two graphs were the Circular Lattice ($CL(8)$) and the Moebius Lattice ($ML(8)$)  (Figure \ref{cm}, left two).
This can be explained by that the $\lceil n/2\rceil - 1 $ edge-neighborhoods 
of $CL(n)$ and $ML(n)$ are the same. Following this clue we have examined the average gap from $n=3$ to $11$ 
at level 60 (maximal gap seems to be reached at levels $2n$ already, so we do not expect better separation for $p > 60$ and $n \le 11$).
and we have found 
the sequence exponentially decreasing.

\begin{figure}[H]
\begin{center}
\begin{tabular}{ccccccccccc}
size parameter ($n$) & 3 & 4 & 5 & 6 & 7 & 8 & 9 & 10 & 11 \\\hline
 $\Delta_{60}(CL, ML)$  & 1.07 & 0.81 &  0.47 & 0.283 & 0.182 & 0.1025 & 0.062 & 0.0337 & 0.015 \\
 ratio to the previous & N/A & 0.757 & 0.58 & 0.602 & 0.638 & 0.563 & 0.604 & 0.544 & 0.445
\end{tabular}
\end{center}
\caption{We can observe exponentially decreasing energy gaps for the $CL(n)$ and $ML(n)$ pair.  \label{ladder}}
\end{figure}

\medskip

This table gives us the strongest evidence so far that a QAOA-based graph isomorphism tester may not work.

\section{A Decoupling Phenomenon}\label{decoupling}

\begin{figure}[H]
\centering
\begin{tabular}{ccc}
\includegraphics[width=0.4\textwidth]{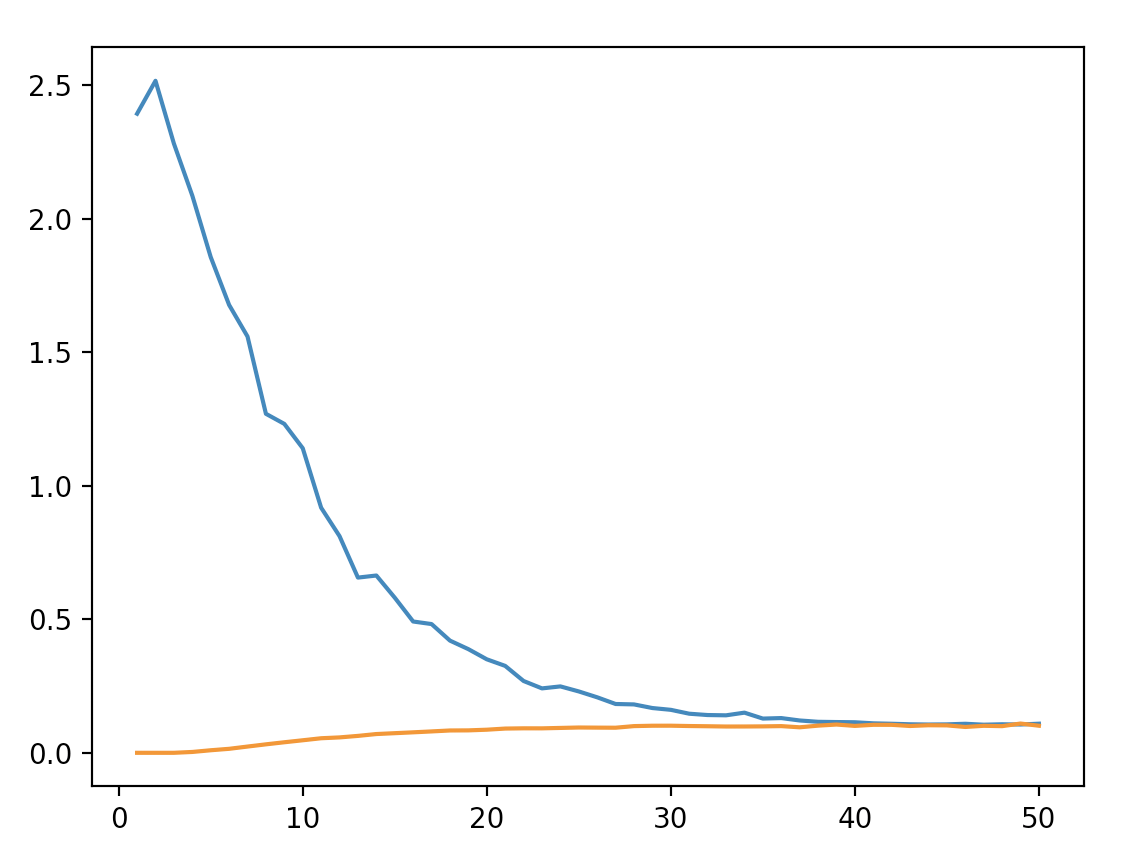} & $\;\;\;\;\;\;\;\;\;\;\;$ & \includegraphics[width=0.4\textwidth]{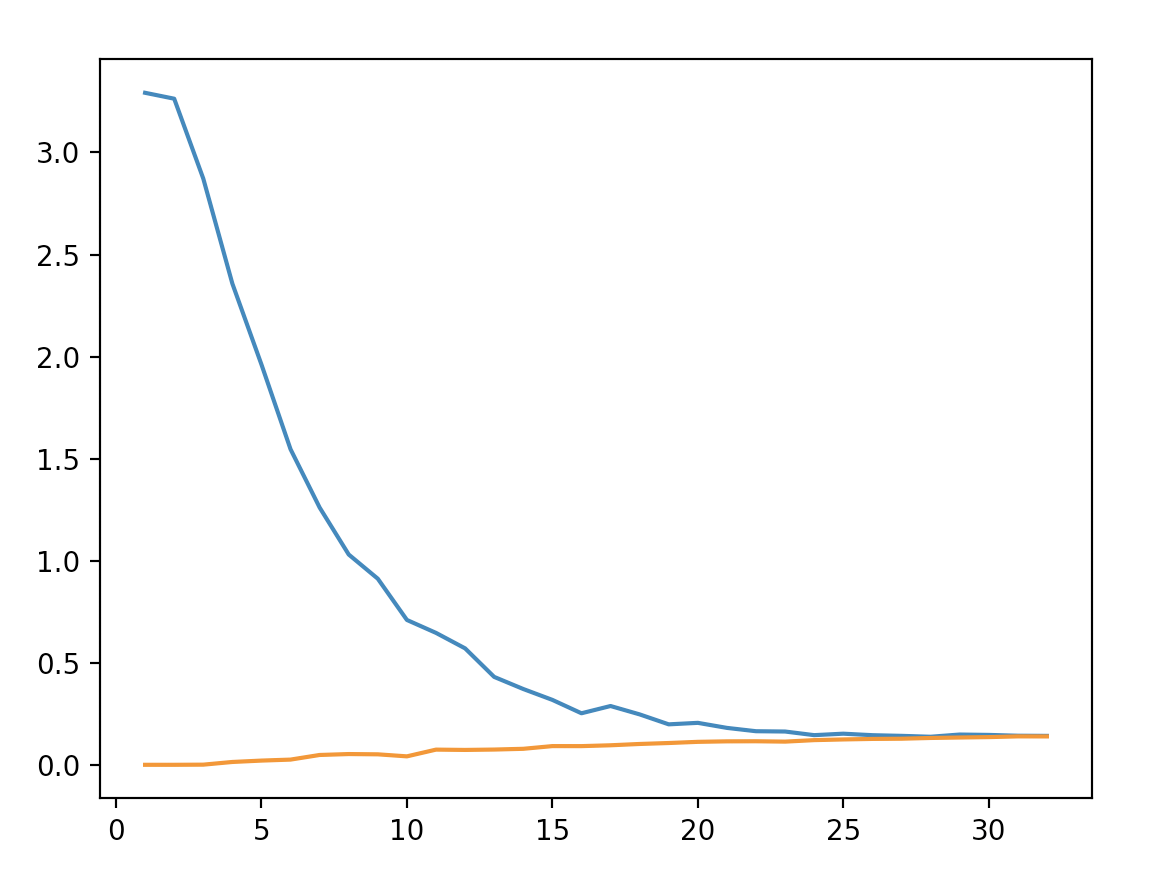} \\
$\Delta$(Circular,  Moebius) (16 nodes) & & $\Delta$(Praust1, Praust2) (20 nodes) \\
blue = uncorrelated; orange = identical & & blue = uncorrelated; orange = identical  \\
\includegraphics[width=0.4\textwidth]{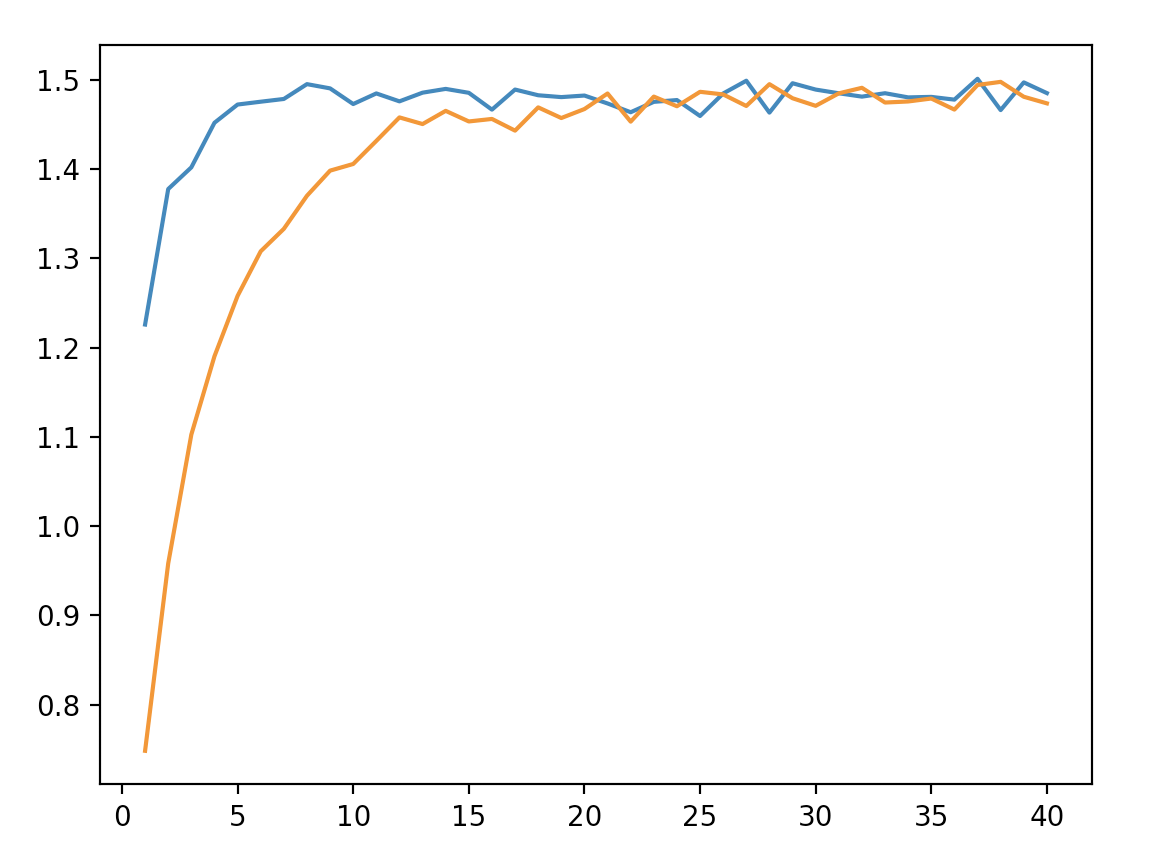} & & \includegraphics[width=0.4\textwidth]{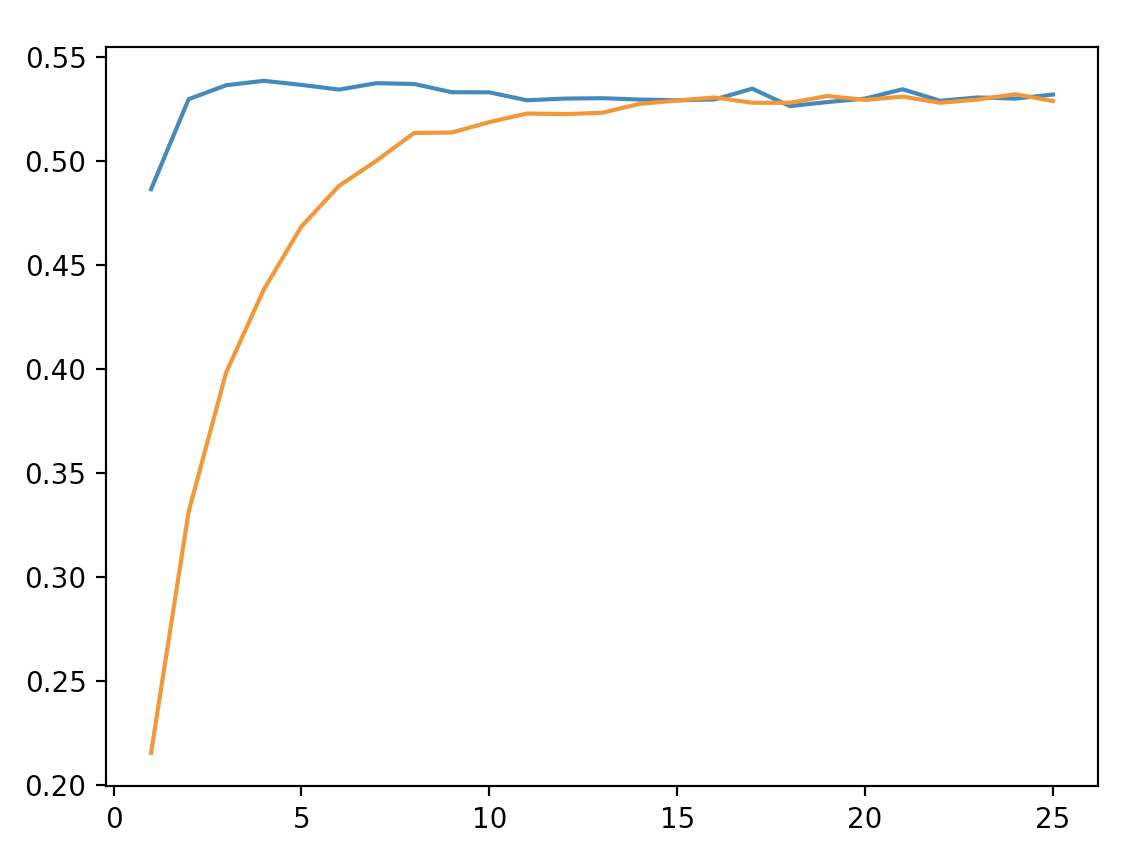} \\
$\Delta$(6-cycle,  $\triangle\triangle$) (6 nodes) & & $\Delta$(4-path, 3-star) (4 nodes) \\
blue = uncorrelated; orange = identical & & blue = uncorrelated; orange = identical 
\end{tabular}
\caption{Average energy gaps for un-correlated and identical angle sequences as $p$ increases.
For small $p$ un-correlated angle sequences yield much larger average gap $\Delta$, but the advantage 
disappears as $p$ gets larger. The  jigger in the curves is due to the relatively small sample size. \label{decoup1}}
\end{figure}

We have arrived at one of the most surprising findings of this research, which we have only experimentally asserted. 
Let $G_{1}$ and $G_{2}$ be two non-isomorphic graphs. We have considered the expected
energy difference for randomly chosen $(\gamma, \beta)$ in $[0,2\pi]^{2p}$:
\[
\Delta(G_{1}, G_{2}, p) = \E |E(G_{1},\gamma, \beta) - E(G_{2},\gamma, \beta)|
\]
where $p$ was $1,2,3,\ldots$. Then we have computed similar expectations, where we randomly 
and {\em independently} picked $(\gamma, \beta)$ and $(\gamma', \beta')$ in $[0,2\pi]^{2p}$:
\[
\Delta^{\rm indep}_{p}(G_{1}, G_{2},p)  = \E |E(G_{1},\gamma, \beta) - E(G_{2},\gamma', \beta')|
\]
What we have found is
\[
\lim_{p\rightarrow\infty} |\Delta^{\rm indep}(G_{1}, G_{2}, p)  - \Delta(G_{1}, G_{2}, p) | \;\longrightarrow 0
\]
In other words, the QAOA energy of $G_{1}$ decouples from that of $G_{2}$ as we probe them on longer and longer identical
angle sequences (of course only when graphs $G_{1}$ and $G_{2}$ are not isomorphic). 

The expression $\Delta^{\rm indep}_{p}(G_{1}, G_{2}, p)$ is uniquely determined by the 
distributions of $E(G_{1},\gamma, \beta)$ and $E(G_{1},\gamma, \beta)$ alone.
But the expectation, $ \square^{\rm indep}$, of the {\em squared} differences is easier to compute
from these distributions than $\Delta^{\rm indep}$:
\begin{eqnarray}\label{sqeq}
 \square^{\rm indep}(G_{1}, G_{2},p) = \E |E(G_{1},\gamma, \beta) - E(G_{2},\gamma', \beta')|^{2} & = & A + B - 2CD 
 \end{eqnarray}
 where $A, B, C$ and $D$ are the following expectation values for uniformly random $(\gamma, \beta)$ in $[0,2\pi]^{2p}$:
 \begin{eqnarray*}
A  & = & \E E(G_{1},\gamma, \beta)^{2} \\
B & = & \E E(G_{2},\gamma, \beta)^{2} \\
C & = & \E  E(G_{1},\gamma, \beta)  \\
D & = & \E E(G_{2},\gamma, \beta)
\end{eqnarray*}

In the last (third) part of the paper  $ \square^{\rm indep}$ is what we are going to 
analyze by finding ways to compute $A$, $B$, $C$ and $D$. Equation (\ref{sqeq}) turns out to have an interesting consequence 
if $\square^{\rm indep}(G_{1}, G_{2},p)$ is ``exponentially'' small (like $10^{-30}$). Since
\[
A + B - 2CD \;  \ge \; A-C^{2} + B-D^{2} \; = \; {\bf Var} \left( E(G_{1},p)\right) + {\bf Var} \left( E(G_{2},p) \right)
\]
the energy landscapes of both $G_{1}$ and $G_{2}$ at level $p$ are extremely flat.

\section{Efficient graph isomorphism with QAOA? -- an assessment}

Even if QAOA energies of non-isomorphic graphs differ for randomly chosen degree sequences and for large enough $p$ as Conjecture \ref{iso-conj} hypothesizes,
we are not guaranteed to have a polynomial time quantum algorithm for testing graph isomorphism, because 
the energy gap may be too small or $p$ may be too large. In Section \ref{least} we have unfortunately seen indications for the former. 
Below we formulate a sufficient condition
for efficient isomorphism testing (under the usual circuit model, which is also the model for Shor's algorithm).

\begin{theorem}\label{polytime}
Assume that there are polynomials $P(n)$ and $Q(n)$ such that for any two graphs $G_{1}$ and $G_{2}$ on 
$n$ nodes there exists some $p\le P(n)$ such that 
\[
\Delta(G_{1},G_{2},p) \ge {1\over Q(n)}
\]
Then our main algorithm in Section \ref{algsection} can be turned into a quantum-polynomial time
graph isomorphism solver.
\end{theorem}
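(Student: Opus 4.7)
The plan is to drive the main algorithm of Section \ref{algsection} over a polynomial grid of levels $p$ and, for each, over a polynomial number of independent random angle sequences, then output ``non-isomorphic'' as soon as any single trial produces an estimated energy gap above a carefully chosen threshold. Correctness on the isomorphic side is automatic (the energies agree identically), so the task reduces to amplifying the hypothesized $L^{1}$ signal on the non-isomorphic side into a detectable per-sample event.

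The crucial and most delicate step, and what I expect to be the main obstacle, is converting the expectation bound $\Delta(G_{1},G_{2},p) \ge 1/Q(n)$ into a lower bound on the probability that an \emph{individual} random $(\gamma,\beta)$ exhibits a visibly large gap. Since the random variable $X = |E(G_{1},\gamma,\beta) - E(G_{2},\gamma,\beta)|$ is trivially bounded by $M = |\Ed(G_{1})| + |\Ed(G_{2})| = O(n^{2})$, Markov's inequality applied to the non-negative variable $M - X$ yields
\[
\Pb\!\left[X \ge \tfrac{1}{2Q(n)}\right] \;\ge\; \frac{\E X - 1/(2Q(n))}{M - 1/(2Q(n))} \;=\; \Omega\!\left(\frac{1}{n^{2} Q(n)}\right).
\]
Consequently, drawing $T = \Theta(n^{2} Q(n)\log n)$ independent sequences at the ``good'' level $p \le P(n)$ guaranteed by the hypothesis ensures, except with inverse super-polynomial probability, that at least one sample has true gap $\ge 1/(2Q(n))$.

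For each of the $T\cdot P(n)$ sampled $(\gamma,\beta)$ I would then call the subroutine of Section \ref{algsection} on both $G_{1}$ and $G_{2}$ to compute $\tilde{E}(G_{i},\gamma,\beta)$ to additive precision $\epsilon = 1/(10\,Q(n))$. Since each measured $C(z) \in [0,n^{2}]$, Hoeffding's inequality gives that $N = O(n^{4} Q(n)^{2} \log(1/\eta))$ shots suffice per estimation at failure probability $\eta$; taking $\eta = 1/\mathrm{poly}(n)$ and union-bounding over the $2TP(n)$ estimations keeps the total precision-failure probability at $1/\mathrm{poly}(n)$, and the overall circuit size (depth $O(p\,d)$ per level by the Vizing argument from the introduction) remains polynomial in $n$. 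With threshold $\tau = 1/(4Q(n))$, an isomorphic pair produces estimated gaps bounded by $2\epsilon = 1/(5Q(n)) < \tau$ at every trial, whereas for a non-isomorphic pair the good sample delivers an estimated gap of at least $1/(2Q(n)) - 2\epsilon = 3/(10Q(n)) > \tau$. The resulting BQP procedure decides graph isomorphism, proving the theorem; the only place where the boundedness of $X$ is used non-trivially is the Markov step, and any weakening of it (e.g.\ a purely $L^{2}$ hypothesis via $\square(G_{1},G_{2},p)$) would reopen the analysis via Paley--Zygmund in place of Markov.
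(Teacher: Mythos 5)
Your proposal is correct and follows essentially the same route as the paper: the key step in both is the reverse-Markov argument converting the hypothesis $\E X \ge 1/Q(n)$ for the bounded variable $X \le O(n^2)$ into a per-sample success probability $\Omega(1/(n^{2}Q(n)))$, followed by polynomially many independent angle sequences over all levels $p \le P(n)$, polynomial-precision energy estimation, and a threshold at $1/(4Q(n))$ separating the isomorphic case (true gap zero) from the non-isomorphic one. The only cosmetic difference is that the paper budgets the precision error through per-gate fidelity $1-1/R(n)$ of an error-corrected circuit, whereas you account for it via Hoeffding on the measurement shots; both yield the same polynomial overhead.
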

\begin{proof}
First, it is a minor issue that the theorem does not assume the knowledge of $p$, since we can just try all $1\le p\le P(n)$.
The level-$p$ QAOA circuits for $G_{1}$ and $G_{2}$ have at most $P(n) (n + \max\{ |E(G_{1})| , |E(G_{2})|\} )$ gates.
With error correction, each gate as well as the input preparation and the final measurement can be implemented with $1 - 1/R(n)$ fidelity, where we choose
\[
R(n) = 100\, n^{2} Q(n) (P(n) + 2) (n + \max\{ |E(G_{1})| , |E(G_{2})|\} )
\]
Then, the fidelity of the entire circuit is not less than $1 - 1/(100 n^{2} Q(n))$.
The QAOA energies of $G_{1}$ and $G_{2}$ are upper bounded by $n^{2}$, so 
by Markov's inequality the probability that for a random degree sequence 
the QAOA energy difference of $G_{1}$ and $G_{2}$ is at least $1/(2Q(n))$ is at least ${1\over 2 n^{2} Q(n)}$.
The $1 - 1/(100 n^{2} Q(n))$ fidelity of the output enables us to compute the QAOA energy 
of both $G_{1}$ and $G_{2}$ for a given degree sequence
with at least $1/(50 Q(n))$ precision in $1 - \epsilon$ fraction of the time,
where $\epsilon$ can be made any inverse polynomial
(by sampling the energy sufficiently many times for the degree sequence in question).

Set $\epsilon = {1\over 2 K^2 n^{2} Q(n)}$ and run 
the distinguisher on $K n^{2} Q(n)$ random degree sequences.
Here $K$ is a user-defined number, where the user wants to achieve $O(1/K)$ error rate for the algorithm.
Analyse the cases:

\smallskip

\noindent{Case 1:} $G_{1}$ and $G_{2}$ are isomorphic.
In this case the true energy difference is zero for all degree sequences. By the in union bound the probability that
we turn up something larger than $1/(4Q(n))$ in $K n^{2} Q(n)$ rounds (the rounds correspond to different random degree sequences) is at most $K n^{2} Q(n)\epsilon \in O(1/K)$
for $G_{1}$ and the same for $G_{2}$, so the probability of failure is at most $1/K$.

\medskip

\noindent{Case 2:} $G_{1}$ and $G_{2}$ are not isomorphic.
In this case the $K n^{2} Q(n)$ probes must hit the set of angle sequences where the energy difference is at least $1/(2Q(n))$
at least once (but typically $\Omega(K)$ times) with probability $1-2^{-O(K)}$. When this happens, the algorithm may fail to output an energy difference 
$1/(4Q(n))$ or greater with probability at most $O(1/K)$. So the error probability in this case is again at most $O(1/K)$.
\end{proof}

A similar argument lets us replace $\Delta$ with $\square$ (the expectation of squared energy differences).
Next we look at the indications and counter-indications that the conditions of Theorems \ref{polytime} holds.

\subsection{Indications and counter indications}

Our original hope of building an efficient QAOA-based graph isomorphism tester 
was based upon that we could separate all non-isomorphic pairs of graphs that our 
QAOA simulator could handle, and even 
entire classes, like all 3-regular graphs up to 18 nodes and strongly regular classes up to 26 nodes.
Further, the separation usually happened already at a very low QAOA level:

\begin{center}
\begin{tabular}{|c|c|} \hline
Class or Pair of Graphs & QAOA Depth Giving Full Separation  \\\hline\hline\\\\[-3.5\medskipamount]
Myazaki I and II, 20 nodes & 4  \\\hline\\\\[-3.5\medskipamount]
Praust I and II, 20 nodes  & 4  \\\hline\\\\[-3.5\medskipamount]
All 4060 non-iso 3-regular graphs &  4  \\\hline\\\\[-3.5\medskipamount]
All 41301 non-iso 3-regular graphs on 18 nodes &  4  \\\hline\\[-1.5\medskipamount]
All 10 non-iso graphs in the SRG 26,10,3,4 family &  3  \\\hline
\end{tabular}
\end{center}

Another indication that the number of levels will not be an issue, is that
$\Delta(G_{1},G_{2},p)$ very quickly converges in terms of $p$, and even quicker becomes convex
in all examples we have looked at.
Therefore, we do not anticipate any problem with Conjecture \ref{iso-conj}, and we also believe 
that the number of levels where the separation happens is bounded
by a polynomial. If the conjecture is proven, it would be an analogue of 
the photon state -based tester result in \cite{bradler2018graph}.

The issue is the energy gap. For a pair of graphs let us set $p$ large enough that the expected gap 
almost reaches the plateau (our experience is that the gap stabilizes as $p$ grows, and that a polynomial $p$ is sufficient). We consider families of pairs of graphs.
Upon recognizing that the (Circular Ladder, Moebius Ladder)$_{n}$ family can be a counter example, we started to 
work on proving that the (best) gap is exponentially shrinking. We suspected that already the 
much simpler cycle sequence ($C_{2n}$, $C_{n} + C_{n}$)$_{n}$ shows the behavior.
That has turned out to be false -- somewhat of a positive sign.

The uncoupling phenomenon has raised hopes once again:
exponentially small un-coupled gap between two graphs would imply that the 
QAOA energy is very close to constant for both graphs. 
Surprisingly, however, our numerics strongly suggests that  
large circular ladders have indeed extremely flat energy landscapes.

\subsection{A no-go theorem for hyper-graphs}

Undoubtedly the most concrete evidence
against a polynomial (or even quasi-polynomial) time QAOA-based graph isomorphism tester is that the energy gaps between 
the Ladder and the corresponding Moebius Ladder graphs seem to be exponentially shrinking. Not only that, but
the shrinkage factor seems to even grow slightly (Figure \ref{ladder}). This is not the only counter-indication.
An abstract look at the problem yields that certain proof methods must fail.

MAXCUT is just one of the many problems to which QAOA applies.
Encouraged by the distinguishing power of QAOA on small MAXCUT instances we might try it on
small combinatorial structures other than graphs. 

\medskip

\noindent{\bf Weighted Constraint Satisfaction problems (WCSP).} A WCSP instance on $n$ bits is a set 
$\{C_{j}\mid 1\le j \le m\}$ of $\ell$-local constraints, i.e. each $C_{j}$ 
is a function $\{0,1\}^{\ell}\rightarrow \mathbb{R}$ that depends on $\ell$ of the $n$ bits.
The task is again to find an $n$-bit assignment that minimizes $\sum_{j = 1}^{m} C_{j}$.

The QAOA circuits are analogous to those made for MAXCUT, except instead of two-qubit gates we have 
$\ell$-qubit gates $e^{- i \gamma C_{j}}$.
Unfortunately, there are two 6-bit non-isomorphic WCSPs
that for all degree sequences give the exact same energy values. This is 
unlike the behavior of MAXCUT QAOA, where we have not found two non-isomorphic graphs so far with the exact same energy values for all $p$.

The counter-example is a pair of systems, we call them $H_{3,3}$ and $H_{6}$, each with $\ell=n=6$,
and each with a single constraint. These instances are not even ``weighted'' in the sense that 
their constraints are either satisfied by a $\sigma \in \{0,1\}^{6}$ assignment, i.e. return 1 or are not, i.e. return 0.

\bigskip

\begin{figure}[H]
\centering
\begin{tabular}{ccc}
$\begin{array}{c}
110000 \\
011000 \\
101000 \\
000110 \\
000011 \\
000101
\end{array}$
 & \hspace{1in} &
$ \begin{array}{c}
110000 \\
011000 \\
001100 \\
000110 \\
000011 \\
100001
\end{array}$ \\[4pt]
 $H_{3,3}$  & &  $H_{6}$ 
\end{tabular}
\caption{Assignments satisfying the (single) constraints of $H_{3,3}$ and $H_{6}$, respectively}
\end{figure}

\bigskip

\noindent The single constraint of $H_{3,3}$ is satisfied by those 6-tuples that have weight 2, and
each satisfying assignment is the indicator function of an edge in a union of two disjoint triangles. 
$H_{6}$ is similar, but the satisfying assignments are indicator functions of edges of a six-cycle. These instances should not be 
confused with MAX-CUT graph instances. In particular, the energy of 
an assignment $z$ for $H_{3,3}$ and $H_{6}$ is always either zero or one depending on whether $z$ is among the satisfying assignments or not.
In contrast, the MAX-CUT energy for an assignment is the number of edges of the input graph that the assignment does not cut.

To compute the QAOA energies we use the Feynman path approach.
A Feynman path $z$ for the level $p$ QAOA is a sequence 
\[
z_{0},\ldots,z_{p}\;\;\;\;\;\;\;\;z_{j}\in \{0,1\}^{n}
\]
where $n$ is the number of bits of the instance. 
The path corresponds to an up-going input-output path of the QAOA circuit.
To calculate the amplitude associated with this path, let us notice that each $z_{j}$ picks up a phase 
$e^{-i \gamma_{j} C(z_{j} }$, where  $C(z_{j})$ is the energy associated with $z$ by the (diagonal) instance Hamiltonian,
and the $z_{j}\rightarrow z_{j+1}$ transition has amplitude
\[
 (-i \sin \beta_{j} )^{\delta(z_{j},z_{j+1})}  (\cos \beta_{j} )^{n - \delta(z_{j},z_{j+1})}  \;\;\;\;\;\;\; ( \delta(x,y) \; \mbox{is the Hamming distance between} \; x, y \in \{0,1\}^{n})
\]
Here we are concerned only with the instances $H_{3,3}$ and $H_{6}$, so $C(z)$ is either 0 or 1.
When computing the QAOA energy we sum up the amplitude squares for all the paths that terminate in any $z_{p}$ with $C(z_{p})=1$.
To compute the amplitude square, for every such $z_{p}$ we 
need to consider products of pairs of such paths, the second component conjugated.
Thus, if $C$ is any of $H_{3,3}$ or $H_{6}$:
\begin{equation}\label{ee}
\langle \gamma,\beta | C |  \gamma,\beta \rangle
= {1\over 2^{6}} \sum_{z,z': z_{p} = \; z'_{p}\; \wedge \; E(z_{p})=1} \theta^{\beta} (z,z') \prod_{j=1}^{p-1}  e^{-i \gamma_{j}(C(z_{j}) - C(z'_{j}))}
\end{equation}
where
\begin{eqnarray*}
\theta^{\beta} (z,z') & = & \prod_{j=1}^{p-1}  \;
 (-i \sin \beta_{j} )^{\delta(z_{j},z_{j+1}) }  \;
(\cos \beta_{j} )^{n -\delta(z_{j},z_{j+1})}  
\;
\prod_{j=1}^{p-1} \;
 (i \sin \beta_{j} )^{\delta(z'_{j},z'_{j+1})}  \;
(\cos \beta_{j} )^{n - \delta(z'_{j},z'_{j+1})}  
\end{eqnarray*}
What makes the QAOA energies the same for $C = H_{3,3}$ and $C = H_{6}$
is that one can find a one-one correspondence 
between their Feynman paths pairs $z,z': z_{p} = \; z'_{p}\; \wedge \; E(z_{p})=1$ with the same contribution. 

\medskip

To define such a correspondence, let $Z_{w} \subseteq \{0,1\}^{6}$ be the set of assignments with Hamming weight $w$.
Let $C$ be either $H_{3,3}$ or $H_{6}$.
For every $\epsilon_{1},\epsilon_{2}\in\{0,1\}$ and $0 \le w_{1}, w_{2}, d\le 6$ there turns out to be a unique number, 
 $S_{C}(\epsilon_{1},\epsilon_{2}, w_{1}, w_{2}, d)$, that tells how many ways one find $x_{2}\in Z_{w_{2}}$ 
 for an $x_{1}\in Z_{w_{1}}$ with $C(x_{1}) = \epsilon_{1}$, that is fixed in advance (it turns out, all choices of $x_{1}$
 give the same numbers), such that
 \begin{enumerate}
 \item $C(x_{2}) = \epsilon_{1}$
 \item $d(x_{1},x_{2}) = d$
 \end{enumerate}
 Not only that $S_{C}(\epsilon_{1},\epsilon_{2}, w_{1}, w_{2}, d)$ does not 
 depend on the choice of $x_{1}$, but it {\em does not depend 
 on whether $C$ is $H_{3,3}$ or $H_{6}$}. So we set $S=S_{C}$. Here we list some 
 of the values of $S$:
 \[
 \begin{array}{ccc}
 S(0,0, 0,0,0) &  = &   1  \\
 S(0,0, 0,0,1) &  = &   0  \\
 S(0,0, 0,2,2) &  = &   9  \\
 S(0,1, 0,2,2) &  = &   6  \\
 S(0,1, 1,2,1) &  = &   2  \\
 S(0,1, 1,2,3) &  = &   4  \\
 S(1,1, 2,2,2) &  = &   2  \\
   &  \vdots &     \\
 \end{array}
 \]
 We leave it to the reader to verify that the existence of such an $S$ is sufficient to match up the terms 
 in Equation (\ref{ee}) for $H_{3,3}$ and $H_{6}$
 (hint: parse $z_{0},\ldots,z_{p},z'_{p-1} \ldots z'_{0}$ from left to right, and find a match by iteratively matching the $z_{j}$s and $z'_{j}$s).

\newpage
\part{The QAOA Dynamics}
The research in part has grown out of the following problem:

\bigskip

\noindent {\tt Prove or disprove the observed exponential decrease of the average energy gap \\
(Table \ref{ladder}) 
for the  Circular vs. Moebius Ladder pairs.}

\bigskip
In Table \ref{ladder} $p=60$ was chosen, but in general we want to pick a $p$ which increases 
as the instances increase. According to our experiments, for any pair
of graphs the energy gap stabilizes rather quickly, and we have never seen any violation of 
\[
\lim_{p\rightarrow\infty}\; \Delta(G_{1},G_{2},p) \approx   \Delta(G_{1},G_{2}, 3n) \;\;\;\;\; {\rm where}\;\;\; n = |V(G_{1})| = |V(G_{2})| 
\]
Similar observation holds for the easier to handle $ \square(G_{1},G_{2},p)$ quantity.
Due to the hypothesized Equation (\ref{square}), estimating the latter reduces to estimating QAOA moments,
at least for large enough $p$. Let us elaborate on this. In 
the introduction we have informally defined the QAOA dynamics. We now define it formally.

\medskip

\noindent{\bf QAOA Dynamics} Let $G$ be a graph on $n$ nodes. We shall denote probability distributions on the set
${\cal S} = \{z\mid z \in \mathbb{C}^{2^{n}}, \; |z| = 1\}$ of $n$ qubit pure states, by upper case sigmas. 
We stick to the usual system of notations in probability theory and treat ${\cal S}$ just as a set.
Thus, for a probability measure $\Sigma$ on ${\cal S}$ the $d \Sigma(\omega)$ notation for $\omega\in {\cal S}$
returns the infinitesimal measure that $\Sigma$ assigns to $\omega$, that can be only used inside an integral.
The stochastic map ${\cal Z}$ that takes a distribution on 
${\cal S}$ to a new distribution on ${\cal S}$ when we add a new level of QAOA with random angles, is a random mixture of unitaries:

\smallskip
\mybox{lightgray}{
\begin{equation}\label{primal}
{\cal Z} =
\left\{  \, d\beta d\gamma \cdot \prod_{i \in V(G)} e^{-i\beta X_{i}} \prod_{\langle jk\rangle \in E(G)} e^{-i\gamma C_{\langle jk\rangle}} \right\}_{\beta,\gamma\in U [0,2\pi]^{2}}  
\end{equation}}

\medskip

\noindent In the dual form, let $\Sigma$ be a probability measure on ${\cal S}$. Then for $\omega\in {\cal S}$:

\medskip

\mybox{lightgray}{
\begin{equation}\label{dual}
d({\cal Z} (\Sigma))(\omega) = \int_{[0,2\pi]^{2}} \left(
 \; d\Sigma\left(\prod_{\langle jk\rangle \in E(G)} e^{i\gamma C_{\langle jk\rangle}}  \prod_{i \in V(G)} e^{i\beta X_{i}} \; \omega\right) \right)
 \; d\beta \, d\gamma 
\end{equation}}

\medskip

$d({\cal Z} (\Sigma))$ uniquely defines ${\cal Z}(\Sigma)$ in the sense that it tells how to integrate over ${\cal Z} (\Sigma)$.
Let $\Sigma_{0}$ be the 
distribution, which is concentrated on $ |+\rangle^{\otimes n}$ (the zeroth layer QAOA).
We conjecture that the QAOA dynamics takes $\Sigma_{0}$ to a limiting distribution as $p$ tends to infinity:

\begin{conjecture}\label{conj}
The limiting distribution $\Sigma_{\infty} = \lim_{p\rightarrow\infty} {\cal Z}^{p} (\Sigma_{0})$ exists.
\end{conjecture}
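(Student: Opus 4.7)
The plan is to reduce weak convergence of the measures $\Sigma_p$ on the compact space ${\cal S}$ to the simpler question of convergence of their finite-dimensional higher-order density matrices
$$\rho_p^{(k)} \;=\; \int_{\cal S} \bigl(|\omega\rangle\langle\omega|\bigr)^{\otimes k}\,d\Sigma_p(\omega) \;\in\; \mathrm{End}\bigl((\mathbb{C}^{2^n})^{\otimes k}\bigr)$$
for each $k=1,2,\ldots$, exactly the objects the paper flags as the right tool. Because $|\omega\rangle\langle\omega|$ is phase-invariant, $\rho_p^{(k)}$ depends on $\Sigma_p$ only through its pushforward to projective space $\mathbb{CP}^{2^n-1}$, and by Stone--Weierstrass the coordinate functions of $\rho^{(k)}$ (for varying $k$) span a subalgebra dense in $C(\mathbb{CP}^{2^n-1})$. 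Hence convergence of every $\rho_p^{(k)}$ forces $\int f\,d\Sigma_p$ to be Cauchy for every continuous phase-invariant $f$, which by Riesz representation yields the desired $\Sigma_\infty$.

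Writing $U_{\gamma,\beta}$ for the single-layer QAOA unitary in (\ref{primal}), the dynamics ${\cal Z}$ descends to a linear map
$$\Phi_k(A) \;=\; \int_{[0,2\pi]^2} U_{\gamma,\beta}^{\otimes k}\,A\,(U_{\gamma,\beta}^\dagger)^{\otimes k}\,\frac{d\gamma\,d\beta}{(2\pi)^2},\qquad \rho_{p+1}^{(k)} \;=\; \Phi_k\!\bigl(\rho_p^{(k)}\bigr),$$
a mixed-unitary quantum channel on $\mathrm{End}\bigl((\mathbb{C}^{2^n})^{\otimes k}\bigr)$. Because the substitution $(\gamma,\beta)\mapsto(-\gamma,-\beta)$ preserves Lebesgue measure while sending $U_{\gamma,\beta}$ to $U_{\gamma,\beta}^\dagger$, the map $\Phi_k$ is self-adjoint in the Hilbert--Schmidt inner product, hence diagonalisable with real spectrum contained in $[-1,1]$.

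The heart of the argument is to rule out eigenvalues of $\Phi_k$ on the unit circle other than $1$. Suppose $\Phi_k(A)=\lambda A$ with $|\lambda|=1$ and $A\neq 0$. Bochner's inequality gives
$$\|A\|_{\mathrm{HS}} \;=\; \|\Phi_k(A)\|_{\mathrm{HS}} \;\le\; \int \bigl\|U_{\gamma,\beta}^{\otimes k}A(U_{\gamma,\beta}^\dagger)^{\otimes k}\bigr\|_{\mathrm{HS}}\,\frac{d\gamma\,d\beta}{(2\pi)^2} \;=\; \|A\|_{\mathrm{HS}}.$$
The equality case forces the integrand to be $\alpha(\gamma,\beta)\,v$ almost everywhere, with $\alpha\ge 0$ and $v$ a fixed nonzero operator. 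Conjugation-invariance of the Hilbert--Schmidt norm then pins $\alpha$ to the constant $\|A\|_{\mathrm{HS}}/\|v\|_{\mathrm{HS}}$, so the integrand is a.e.\ a single fixed operator; by continuity in $(\gamma,\beta)$ it equals that operator \emph{everywhere}, and evaluating at $(\gamma,\beta)=(0,0)$ (where $U_{0,0}=I$) pins the operator to $A$ itself. Thus $\Phi_k(A)=A$ and $\lambda=1$.

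With only $1$ on the unit circle, the iterates $\rho_p^{(k)}=\Phi_k^p\bigl(\rho_0^{(k)}\bigr)$ converge to the orthogonal projection of $\rho_0^{(k)}$ onto the $1$-eigenspace of $\Phi_k$, and weak convergence of $\Sigma_p$ to a limiting $\Sigma_\infty$ follows as outlined in the first paragraph. I expect the main technical obstacle to be not the spectral step itself but a clean treatment of the equality case of Bochner's inequality for vector-valued integrals, together with the attendant ``a.e.\ $\Rightarrow$ everywhere'' upgrade via continuity --- standard but worth isolating as a short preliminary lemma before the main argument.
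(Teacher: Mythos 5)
First, a point of order: the paper does not prove this statement --- it is stated as a conjecture and explicitly ``accepted as true in the sequel,'' so there is no proof of the author's to compare yours against. You are attempting to settle an open question of the paper, and your overall strategy (method of moments on a compact space, plus a rigidity argument for peripheral eigenvalues of the averaged transfer map) is, in my view, the right one and very close to a complete proof. The heart of your argument --- that equality in $\|\int f\,d\mu\|_{\mathrm{HS}}\le\int\|f\|_{\mathrm{HS}}\,d\mu$ for the norm-preserving integrand $f(\gamma,\beta)=U_{\gamma,\beta}^{\otimes k}A(U_{\gamma,\beta}^{\dagger})^{\otimes k}$ forces $f$ to be a.e.\ constant, hence by continuity and evaluation at $U_{0,0}=I$ equal to $A$ everywhere, so that the only peripheral eigenvalue of $\Phi_k$ is $1$ and every peripheral eigenvector is a genuine fixed point --- is correct and is the key lemma.

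There are, however, two genuine gaps. (i) Your self-adjointness claim is false as justified: writing $U_{\gamma,\beta}=B_{\beta}\Gamma_{\gamma}$ with $B_{\beta}=\prod_{v}e^{-i\beta X_{v}}$ and $\Gamma_{\gamma}=\prod_{\langle jk\rangle}e^{-i\gamma C_{\langle jk\rangle}}$, one has $U_{\gamma,\beta}^{\dagger}=\Gamma_{-\gamma}B_{-\beta}$ while $U_{-\gamma,-\beta}=B_{-\beta}\Gamma_{-\gamma}$; since the mixer and phase-separator layers do not commute, the family $\{U_{\gamma,\beta}\}$ is not closed under adjoints and $\Phi_k$ has no reason to be Hilbert--Schmidt self-adjoint. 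Consequently ``diagonalisable with real spectrum'' is unsupported, and your final step silently needs the eigenvalue $1$ to be semisimple (no Jordan block), or else $\Phi_k^{p}$ need not converge. The repair is one line: since conjugation by unitaries preserves the HS norm, $\|\Phi_k(A)\|_{\mathrm{HS}}\le\|A\|_{\mathrm{HS}}$, so $\Phi_k$ is an HS-contraction, its powers are uniformly bounded, and a power-bounded operator cannot have a nontrivial Jordan block at a unimodular eigenvalue; the limit is then the (possibly oblique, not orthogonal) spectral projection onto $\ker(\Phi_k-I)$. (ii) Your Stone--Weierstrass reduction only controls phase-invariant observables, i.e.\ weak convergence of the pushforward to $\mathbb{CP}^{2^{n}-1}$; but the Remark following the conjecture demands $\lim_p\int F\,d({\cal Z}^{p}\Sigma_0)$ for \emph{all} bounded continuous $F$ on the sphere ${\cal S}$, and such $F$ need not be phase-invariant. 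To close this you must also handle the unbalanced moments $\int\omega^{\otimes k}\otimes\bar{\omega}^{\otimes l}\,d\Sigma_{p}$ with $k\ne l$, whose update operator $T_{k,l}=\E\,[\,U_{\gamma,\beta}^{\otimes k}\otimes\bar{U}_{\gamma,\beta}^{\otimes l}\,]$ is an average of unitaries; the identical strict-convexity and power-boundedness argument applies to it, and the full polynomial algebra in $\omega,\bar{\omega}$ is dense in $C({\cal S})$. With these two repairs, and a final appeal to Riesz representation together with the uniform bound $|\int f\,d\Sigma_{p}|\le\|f\|_{\infty}$ to pass from the dense subalgebra to all of $C({\cal S})$, your argument would establish the conjecture; I would encourage you to write that version out in full.
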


\begin{remark} The weakest notion of limit that is still useful for us 
is via the {\em weak convergence} of measures. We require that for all bounded, continuous functions $F$ on the $2^{n}$ dimensional complex unit sphere
$ \lim_{p\rightarrow\infty} \; \int F \, d({\cal Z}^{p}\, \Sigma_{0})$ exists and equals to $\int F \, d(\Sigma_{\infty})$.
\end{remark}

We shall accept Conjecture \ref{conj} as true in the sequel.

\section{On von Neumann's Simplification of Statistical Ensembles}

A basic tenet of quantum information theory, that goes back to von Neumann, states that when we see a 
probability distribution $p_{1},\ldots,p_{k}$ on pure quantum states $|\psi_{1}\rangle,\ldots, |\psi_{k}\rangle$,
we can turn the object into the density matrix 
$\sum_{i=1}^{k} p_{i} |\psi_{i}\rangle \langle \psi_{i} |$ without losing information that matters.
This is true in most settings, but because in our study
statistical ensembles
\[
\{ p_{1}\cdot |\psi_{1}\rangle, \ldots,p_{k} \cdot |\psi_{k}\rangle\}
\]
arise differently from how they arise in physics, 
we cannot use the simplification rule of von Neumann without further ado.

Let us clearly state the dos and don'ts. Assume Alice has a machine ${\cal M}_{1}$ with a red button, which 
when Bob pushes, he uniformly gets either $|0\rangle$ or $|1\rangle$. Let she have 
another machine ${\cal M}_{2}$, which looks exactly like ${\cal M}_{1}$, but upon the push of the button Bob uniformly gets either $|+\rangle$ or $|-\rangle$.
Then there is no way Bob could distinguish between the two machines, however long he plays with them, since the
output of both are expressed with the same density matrix:
\[
\{ 0.5 \cdot |0\rangle , \; 0.5 \cdot |1\rangle\} \;\;\;\; \cong \;\;\;\; \{ 0.5 \cdot |+\rangle , \; 0.5 \cdot |- \rangle\}  \;\;\;\; \Longleftrightarrow \;\;\;\;
\left(
\begin{array}{cc}
0.5 & 0 \\
0 & 0.5
\end{array}
\right)
\]
No matter how many times Bob tries the buttons, no matter what 
measurements he performs on the output states,
he remains un-informed about whether he holds ${\cal M}_{1}$ or ${\cal M}_{2}$.

Let us now give Bob the supernatural power that when he gets a (pure) state $|\psi\rangle$,
he can compute the energy value $\langle \psi | {\cal H} |\psi\rangle$ associated with the Hamiltonian 
\begin{equation}\label{alice}
{\cal H} = 
\left(\begin{array}{cc}
1 & 0 \\
0 & 0
\end{array}\right)
\end{equation}
Then whether he holds ${\cal M}_{1}$ or ${\cal M}_{2}$ is no secret anymore. 
Even pushing the button only once, if the energy is either zero or one,
Bob holds the first machine, and if it is ${1\over 2}$ he holds the second.
If he pushes the button 6 times he will measure energy values
something like $1,0,0,1,1,0$ (${\cal M}_{1}$) versus ${1\over 2},{1\over 2},{1\over 2},{1\over 2},{1\over 2},{1\over 2}$
(${\cal M}_{2}$).

Alice can easily give Bob the power of computing $\langle \psi | {\cal H} |\psi\rangle$. She just installs a blue button on her 
quantum computer, which when Bob pushes, he gets the exact same pure state as in the previous run. Then by a few applications of the blue button
Bob can do a full state tomography on $\psi$, which in turn lets him estimate $\langle \psi | {\cal H} |\psi\rangle$. In contrast, in physics experiments the randomness that
selects a pure state from a set of pure states (thus yielding a mixed state) cannot be replicated and 
remembered: it is not under the control of the experimental device.

{\em One can use the density matrix formalism if and only if the randomness creating the mixture is inaccessible and not replicable.}

\medskip

\noindent{Note:} When showing this section to D. Ding, he has pointed out that S. Popescu in \cite{popescu}
describes a thought experiment leading to a very similar conclusion: any knowledge about the
preparation of a statistical ensemble renders the density matrix formalism insufficient.

\section{Higher Order Density Matrices}

Our goal will be to compute the {\em variance} of the QAOA energy values  
over random angle sequences when the graph and the level are fixed. 
If the angle sequence is also fixed, we have a pure QAOA state. 
By randomizing over angle sequences we have created a statistical ensemble out of these pure states.
This is a continuous ensemble, but for simplicity in this section we formulate everything in terms of discrete ensembles.
Our definitions easily  carry over to continuous ensembles.

Assume we have a statistical ensemble 
$\{p_{1}\cdot  |\psi_{1}\rangle, \ldots, p_{k}\cdot |\psi_{k}\rangle \}$ 
of pure states, and a Hamiltonian ${\cal H}$ on the 
same Hilbert space. What information about the ensemble is sufficient to compute 
the variance of the random variable $X$ defined by:
\[
\Pb (\, X  = \langle \psi_{i} | {\cal H} | \psi_{i} \rangle\, ) \; = \; p_{i} \;\;\;\;\;\; (1\le i \le k) \;\;\; ?
\]
The first thought would be that this should only depend on the density matrix, 
$\sigma = \sum_{i=1}^{k} p_{i}  |\psi_{i}\rangle\langle \psi_{i}|$. In particular, we may have the illusion that
what we are really computing is
\[
{\bf Var}_{\sigma}( {\cal H}) = {\bf Tr} ({\cal H}^{2}  \sigma) - {\bf Tr} ({\cal H}\sigma)^{2}
\]
But not! ${\bf Var}(X) =  {\bf Var}_{\sigma}( {\cal H})$ only if each $\psi_{i}$ is an eigenstate of ${\cal H}$. Otherwise
we have to subtract the non-negative quantity 
$\sum_{i=1}^{k} p_{i}  \left(\langle \psi_{i} | {\cal H}^{2}  | \psi_{i} \rangle  -  \langle \psi_{i} | {\cal H}\sigma)^{2}  | \psi_{i} \rangle\right)$
from ${\bf Var}_{\sigma}( {\cal H})$ to get ${\bf Var}(X)$, or do a direct calculation to arrive at:
\[
{\bf Var}(X) = \sum_{i=1}^{k} p_{i} \langle \psi_{i} | {\cal H} | \psi_{i} \rangle^{2} - {\bf Tr} ({\cal H}\sigma)^{2}
\]
We focus on the expression $\E (X^{2}) = \sum_{i=1}^{k} p_{i} \langle \psi_{i} | {\cal H} | \psi_{i} \rangle^{2}$,
since this is what requires explanation.
The example in the previous section can be pushed one step further to show that $\sigma$ is insufficient to express $\E (X^{2})$
(see example in the end of this section).
Thus we also define
\[
\mbox{Second Order Density Matrix:} \;\;\;\;\;\;\;  \sigma^{(2)}  \; = \; \sum_{i=1}^{k} \; p_{i} \; \times \; | \psi_{i} \rangle \langle \psi_{i} |\; \otimes \; | \psi_{i} \rangle \langle \psi_{i} |
\]
\noindent{\em Warning:} In general $\sigma^{(2)} \neq \sigma \otimes \sigma$, where $\sigma$ is the usual (first order) density matrix. The point 
of introducing  $\sigma^{(2)}$ is exactly that it carries information about the ensemble that $\sigma$ does not.

\begin{remark}
Higher order density matrices are a tool in quantum chemistry where we trace out all but $k$ electrons from an $N$-electron wave function \cite{Neuscamman}. 
In this article we use this term, but the content and context differ (there is a relation nevertheless). 
\end{remark}
We have
\begin{eqnarray*}
\E (X^{2}) & = & \sum_{i=1}^{k} p_{i} \langle \psi_{i} | {\cal H} | \psi_{i} \rangle^{2} =  {\bf Tr} ({\cal H}^{\otimes 2} \sigma^{(2)}) \\[5pt]
{\bf Var}(X) & = & {\bf Tr} ({\cal H}^{\otimes 2} \sigma^{(2)})  - {\bf Tr} ({\cal H}\sigma)^{2}
\end{eqnarray*}
In particular, the variance of the energy of a Hamiltonian with respect to pure states taken from a
statistical ensemble of pure states is entirely determined by 
the Hamiltonian and the latter's first and second order density matrices.

An ensemble $\{p_{1}\cdot  |\psi_{1}\rangle, \ldots, p_{k}\cdot |\psi_{k}\rangle \}$ is 
a classical probability distribution on the unit sphere $\mathbb{S}$ of a Hilbert space. Let us now also have a
statistical ensemble $\{q_{1}\cdot  U_{1}, \ldots, q_{l}\cdot U_{l} \}$ of unitary matrices, that
can be interpreted as a stochastic map (dynamics) from $\mathbb{S}$ to $\mathbb{S}$.
The first and second order 
density matrices $\sigma$ and $\sigma^{(2)}$ uniquely transform under such a map,
yielding $\xi$ and $\xi^{(2)}$, where
\begin{eqnarray}\label{firstorderdef}
\xi & =  & \sum_{i} q_{i} \; \times \; U_{i} \;\sigma\; U_{i}^{\dagger} \\\label{secondorderdef}
\xi^{(2)}  & =  & \sum_{i} q_{i} \; \times\; (U_{i}\otimes U_{i}) \;\sigma^{(2)}\; (U_{i}^{\dagger} \otimes U_{i}^{\dagger} )
\end{eqnarray}
The first equation is standard, and the second equation is easy to show. 

Maps as above are exactly the kind that take random level-p QAOA states into a random level-$(p+1)$
QAOA states. The only difference is that
the sums must be replaced with integrals because the distribution of the angles is continuous.

\bigskip

\mybox{lightgray}{\small 
\noindent{\bf Example.} 
Let $\Sigma_{1} = \{ 0.5 \cdot |0\rangle , \; 0.5 \cdot |1\rangle\}$  and 
$\Sigma_{1} = \{ 0.5 \cdot |+\rangle , \; 0.5 \cdot |- \rangle\}$ be statistical ensembles
on the one qubit complex unit sphere and ${\cal H}$ be a Hamiltonian as in (\ref{alice}).
Let  $X_{1} = \{ \langle \psi | {\cal H} | \psi \rangle\}_{\psi \in  \Sigma_{1} }$, $X_{2} = \{ \langle \phi | {\cal H} | \phi \rangle\}_{\phi \in  \Sigma_{2} }$
be two random variables. Then $\E (X^{2}_{1})  = 1/2$ and $\E (X^{2}_{2})  = 1/4$. We compute 
the second order density matrices,  $\sigma^{(2)}_{1}$ and $\sigma^{(2)}_{2}$ associated with $X_{1}$ and $X_{2}$:
\[
\sigma^{(2)}_{1} = 
\left(
\begin{array}{cccc}
0.5 & 0 & 0 & 0\\
0 & 0 & 0 & 0 \\
0 & 0 & 0 & 0 \\
0 & 0 & 0 & 0.5 \\
\end{array}
\right)
\;\;\;\;
\sigma^{(2)}_{2} = 
\left(
\begin{array}{cccc}
0.25 & 0 & 0 & 0.25 \\
0 & 0.25 & 0.25 & 0 \\
0 & 0.25 & 0.25 & 0 \\
0.25 & 0 & 0 & 0.25 \\
\end{array}
\right)
\]
Then ${\bf Tr} ({\cal H}^{\otimes 2} \sigma^{(2)}_{1} )  = 0.5$ and  ${\bf Tr} ({\cal H}^{\otimes 2} \sigma^{(2)}_{2} )  = 0.25$. 
}

\section{Density matrices under the QAOA dynamics}

We fix a graph $G$ and will study the distribution on $n$ qubit states that arise
by running the QAOA circuit with random, level $p$ degree sequences ($ {\cal Z}^{p} (\Sigma_{0})$, in our notations). 
One can easily compute the density matrices corresponding to these statistical ensembles.
At level $p=0$ we have the density matrix $\sigma_{0} =( |+\rangle\langle + | )^{\otimes n}$. Each new level is an application of the
super-operator ${\bf Z}$ that acts on a density matrix $\sigma$ as:

\medskip

\mybox{lightgray}{
\[
{\bf Z}: \; \sigma \,\rightarrow\,  \int_{(\beta,\gamma)\in [0,2\pi]^{2}}  
\prod_{v \in V(G)} e^{-i\beta X_{v}} \prod_{\langle jk\rangle \in E(G)} e^{-i\gamma C_{\langle jk\rangle}}\;
\sigma \; 
\prod_{\langle jk\rangle \in E(G)} e^{i\gamma C_{\langle jk\rangle}}
\prod_{v \in V(G)} e^{i\beta X_{v}} 
\]}

\medskip

so the density matrix corresponding to a level $p$ random QAOA circuit is
\[
\sigma_{p} = {\bf Z}^{p} \left( \sigma_{0} \right)
\]
Accepting the message of the previous section however,
we will not try to represent the evolution of the random QAOA ensemble with density matrices alone.
However, if we keep track of the evolution of second order density matrices as well,
we already have enough information for what we need to compute.
Let $\sigma_{0}^{(2)} =  \sigma_{0}^{\otimes 2}$. This is the second order density matrix corresponding
to the statistical ensemble $\Sigma_{0}$, concentrated on $|+\rangle^{\otimes n}$.

By Equation (\ref{secondorderdef}) we can define the 
second order super-operator ${\bf Z}^{(2)}$ made from ${\cal Z}$ that takes second order density matrices to second order 
density matrices. For any second order density matrix  $\sigma^{(2)}$ we have

\medskip

\mybox{lightgray}{
{\small
\[
{\bf Z}^{(2)}: \; \sigma^{(2)} \,\rightarrow\,  \int_{(\beta,\gamma)\in [0,2\pi]^{2}}  
\prod_{v \in V(G)} e^{-i\beta X_{v}^{\otimes 2}} \prod_{\langle jk\rangle \in E(G)} e^{-i\gamma C_{\langle jk\rangle}^{\otimes 2}}\;
\sigma^{(2)} \; 
\prod_{\langle jk\rangle \in E(G)} e^{ i\gamma C_{\langle jk\rangle}^{\otimes 2}}
\prod_{v \in V(G)} e^{i\beta X_{v}^{\otimes 2}} 
\]}}

\medskip

\noindent The desired information about $\sigma_{\infty} = \lim_{p\rightarrow\infty} {\cal S}^{p} \, \sigma_{0}$ can be now 
obtained from 

\medskip

\mybox{lightgray}{
\begin{center}
\begin{tabular}{lll}
The first order density matrix of $\Sigma_{\infty}$ & is & $\sigma_{\infty}$ \\
The second order density matrix of $\Sigma_{\infty}$ & is & $\sigma_{\infty}^{(2)}$
\end{tabular}
\end{center}
\medskip
where
\begin{eqnarray*}
\sigma_{\infty} & = & \lim_{p\rightarrow\infty} {\bf Z}^{p} \, \sigma_{0} \\
\sigma_{\infty}^{(2)} & = & \lim_{p\rightarrow\infty} \left({\bf Z}^{(2)} \right)^{p} \, \sigma_{0}^{(2)} 
\end{eqnarray*}}

\section{Computing first order density matrices}

In this section we give a recipe for computing the evolution of density matrices under the QAOA dynamics. The integrals
${1\over 2\pi}\int_{0}^{2\pi} \sin^{a} x\cos^{b} x \, dx$ for $a,b \in \{0,1,2,\ldots\}$ will play a role in the calculations. Notice
that if either $a$ or $b$ is odd than the integral is zero. If $a=2r$ and $b=2s$ then
\begin{equation}\label{pascal}
A(r,s) \; = \; {1\over 2\pi}\int_{0}^{2\pi} \sin^{2r} x\cos^{2s} x \, dx \; = \; {(2r)! (2s)!\over 4^{r+s} (r+s)! r! s!}
\end{equation}

If we arrange the values of $A(r,s)$ in a triangle fashion, like

\medskip

\begin{center}
\begin{tabular}{>{$d=}l<{$\hspace{12pt}}*{13}{c}}
0 &&&&&&& $A(0,0)$ &&&&&&\\[5pt]
2 &&&&&& $A(0,1)$ && $A(1,0)$ &&&&&\\[5pt]
4 &&&&& $A(0,2)$ && $A(1,1)$ && $A(2,0)$ &&&&\\[5pt]
\end{tabular}
\end{center}

where $d=2p+2q$ is the total degree of the trigonometric polynomial in the integrand, we obtain:

\bigskip

\begin{center}
\begin{tabular}{>{$d=}l<{$\hspace{12pt}}*{13}{c}}
0 &&&&&&& $1$ &&&&&&\\[5pt]
2 &&&&&& ${1\over 2}$ && ${1\over 2}$ &&&&&\\[5pt]
4 &&&&& ${3\over 8}$ && ${1\over 8}$ && ${3\over 8}$ &&&&\\[5pt]
6 &&&& ${1\over 16}$ && ${5 \over 16}$ && ${5\over 16}$ && ${1\over 16}$ &&&\\[5pt]
8 &&& ${35\over 128}$ && ${5\over 128}$ && ${3\over 128}$ && ${5\over 128}$ && ${35\over 128}$ &&\\[5pt]
10 && ${63\over 256}$ && ${7\over 256}$ && ${3\over 256}$ && ${3\over 256}$ && ${7\over 256}$ && ${63\over 256}$ &\\[5pt]
12 & ${231\over 1024}$ && ${21\over 1024}$ && ${7\over 1024}$ && ${5\over 1024}$ && ${7\over 1024}$ && ${21\over 1024}$ && ${231\over 1024}$
\end{tabular}
\end{center}

\bigskip

Recall that for a graph $G$ on $n$ nodes we have $\sigma_{0} = (|+\rangle \langle + |)^{\otimes n}$, which is a $2^{n}$ by $2^{n}$ matrix with 
all entries ${1\over 2^{n}}$. The first observation is:

\begin{lemma}
Let $p>1$ and let
\[
C(z) = \sum_{\langle jk\rangle \in E(G)} C_{\langle jk\rangle}(z), \;\;\; C_{\langle jk\rangle}(z) = {1\over 2}(1 + \sigma_{j}^{z}\sigma_{k}^{z} ) \;\;\;\;
[\mbox{the number of edges not cut by $z$}]
\]
Then $\sigma_{p}$ depends only on those $(x',y')$ entries of $\sigma_{p-1}$ for which
 ${\rm C}(x') = {\rm C}(y') $.
\end{lemma}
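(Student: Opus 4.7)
The plan is to perform the $\gamma$-integration in ${\bf Z}$ first, exploiting that the phase-separation unitary $U_\gamma := \prod_{\langle jk\rangle} e^{-i\gamma C_{\langle jk\rangle}}$ is diagonal in the computational basis. Since the super-operator factors as an integral over the independent variables $\beta$ and $\gamma$, Fubini lets me write
$$\sigma_p \;=\; \int_0^{2\pi}\!\!\frac{d\beta}{2\pi}\; U_\beta \left(\int_0^{2\pi}\!\!\frac{d\gamma}{2\pi}\; U_\gamma\, \sigma_{p-1}\, U_\gamma^\dagger\right) U_\beta^\dagger,$$
where $U_\beta := \prod_v e^{-i\beta X_v}$. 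Because $U_\gamma |z\rangle = e^{-i\gamma C(z)} |z\rangle$, the conjugation $U_\gamma\, \sigma_{p-1}\, U_\gamma^\dagger$ simply multiplies the $(x', y')$ matrix entry of $\sigma_{p-1}$ by the scalar $e^{-i\gamma(C(x') - C(y'))}$.

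Now the inner integral picks out exactly the diagonal block on which $C$ is constant: since $C$ takes integer values (it counts edges not cut by the assignment), the character integral
$$\frac{1}{2\pi}\int_0^{2\pi} e^{-i\gamma (C(x') - C(y'))}\, d\gamma$$
equals $1$ when $C(x') = C(y')$ and $0$ otherwise. Hence the intermediate matrix $\tilde\sigma$ after the $\gamma$-average has entries $\tilde\sigma_{x', y'} = (\sigma_{p-1})_{x', y'}$ when $C(x') = C(y')$ and $\tilde\sigma_{x', y'} = 0$ otherwise.

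The remaining $\beta$-average acts on $\tilde\sigma$ alone, so $\sigma_p$ is determined entirely by $\tilde\sigma$, which in turn only sees the entries of $\sigma_{p-1}$ indexed by pairs $(x', y')$ with $C(x') = C(y')$. This proves the claim. I do not anticipate any real obstacle: the only ingredients are the diagonal structure of the cost unitary and the integer-valuedness of $C$, which is built into the definition in (\ref{ee}). The same argument transfers verbatim to the second-order super-operator ${\bf Z}^{(2)}$, where the relevant character becomes $e^{-i\gamma(C(x') - C(y') + C(x'') - C(y''))}$; this broader applicability is presumably why the lemma is being stated at this point.
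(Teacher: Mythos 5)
Your proof is correct and follows essentially the same route as the paper: both arguments reduce to the observation that conjugation by the diagonal cost unitary multiplies the $(x',y')$ entry by $e^{\pm i\gamma(C(x')-C(y'))}$, and the character integral $\frac{1}{2\pi}\int_0^{2\pi}e^{\pm i\gamma(C(x')-C(y'))}\,d\gamma$ equals $1$ or $0$ according to whether $C(x')=C(y')$, since $C$ is integer-valued. The paper merely writes the two averages as one explicit entrywise formula rather than factoring the $\gamma$-average out first, but the content is identical.
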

\begin{proof} For $p\ge 1$ we have
\begin{equation}\label{feyn}
\sigma_{p} [x,y] \; = \;  \sum_{x'\in \{0,1\}^{n}}\; \sum_{y' \in \{0,1\}^{n}} {1\over 2\pi}\int_{0}^{2\pi} e^{i\gamma\cdot({\rm C}(x') - {\rm C}(y') } \, d\gamma \; \cdot \; 
X_{x\, y}^{x'y'}
 \; \cdot \;  \sigma_{p-1} [x',y'] 
\end{equation}
where
\begin{equation}\label{feyn2}
 X_{x\, y}^{x'y'} \; = \; {(-i)^{\delta(x,x') - \delta(y,y')} \over 2\pi}\int_{0}^{2\pi} (\sin\beta)^{\delta(x,x')+\delta(y,y')}  (\cos\beta)^{2n - \delta(x,x') - \delta(y,y')}  \, d\beta
\end{equation}
Notice that 
\begin{equation}\label{feyn3}
 {1\over 2\pi}\int_{0}^{2\pi} e^{i\gamma\cdot({\rm C}(x') - {\rm C}(y')} \, d\gamma \; = \;
 \left\{
 \begin{array}{ccc}
 1 & {\rm if} & {\rm C}(x') = {\rm C}(y') \\
 0 & {\rm if} & {\rm C}(x') \neq {\rm C}(y')
 \end{array}\right.
\end{equation}
so for any $x$ and $y$ those terms of the r.h.s. of (\ref{feyn}) that correspond to an $(x',y')$ with ${\rm C}(x') \neq {\rm C}(y')$ are zero.
\end{proof}
Expressions (\ref{feyn}), (\ref{feyn2}), (\ref{feyn3}) also give a recipe for computing $\sigma_{p}$ from $\sigma_{p-1}$:

\medskip

\mybox{lightgray}{
\[
\sigma_{p} [x,y] \; = \;  
\sum_{
\begin{array}{ccl}
x',y' & \in & \{0,1\}^{n}  \\
{\rm C}(x') & = &{\rm C}(y')  \\
\delta(x,x') & = & \delta(y,y')  \mod 2
\end{array}} 
(-1)^{\delta(x,x') - \delta(y,y')\over 2} \; \cdot \; 
A(\nu, n - \nu) \; \cdot \;  \sigma_{p-1} [x',y'] 
\]
where
\[
A(r,s) \; \mbox{is as in (\ref{pascal}), and} \;\;\;\; \nu \; = \; \nu_{x\, y}^{x'y'} \; = \; {\delta(x,x') + \delta(y,y')\over 2}
\]}

\section{Computing second order density matrices}\label{ssquare}

The computation of second order density matrices is very similar to that of the first order ones, and we only write down the expression.
Recall that $\sigma^{(2)}_{0} = ( |+\rangle\langle +|)^{\otimes2n}$. The recipe for computing $\sigma^{(2)}_{p}$ from $\sigma^{(2)}_{p-1}$ is:

\medskip

\mybox{lightgray}{
\[
\sigma^{(2)}_{p} [x_{(1)},x_{(2)}, y_{(1)}, y_{(2)}] \; = \;  
\sum_{
\begin{array}{ccl}
x'_{(1)},x'_{(2)}, y'_{(1)}, y'_{(2)} & \in & \{0,1\}^{n}  \\
{\rm C}(x'_{(1)}) +  {\rm C}(x'_{(2)}) & = & {\rm C}(y'_{(1)}) +  {\rm C}(y'_{(2)})  \\
2 \mid \delta(x_{(1)},x'_{(1)}) + \delta(x_{(2)},x'_{(2)}) & - & \delta(y_{(1)},y'_{(1)}) - \delta(y_{(2)},y'_{(2)})  
\end{array}} P \cdot Q \cdot R \]
where
\begin{eqnarray*}
P & = & (-1)^{\delta(x_{(1)},x'_{(1)}) + \delta(x_{(2)},x'_{(2)}) - \delta(y_{(1)},y'_{(1)}) - \delta(y_{(2)},y'_{(2)}) \over 2} \\
Q & = & A(\nu,n-\nu) \;\;\;\;\; \nu \; = \; {\delta(x_{(1)},x'_{(1)}) + \delta(x_{(2)},x'_{(2)}) + \delta(y_{(1)},y'_{(1)}) + \delta(y_{(2)},y'_{(2)}) \over 2}\\
R & = & \sigma^{(2)}_{p-1} [x'_{(1)},x'_{(2)}, y'_{(1)}, y'_{(2)}]  \\
\end{eqnarray*}}

\section{Conclusions}

This paper is a continuation of formula-driven QAOA research. We have proposed a few quantities to calculate, and could find some 
theoretical and intuitive tools for the calculations. Although our focus was graph structure discovery,
the ideas we give may turn out to be useful in investigating more traditional questions about QAOA as well. Because we have touched upon multiple approaches, 
we tried to be brief with each topic. For brevity we have also opted for leaving out some observations, 
for instance ones that concerned cases where the angles were randomly chosen from $[0,x]$ rather 
than from $[0,2\pi]$. Nevertheless, we hope no major information is missing from the article, and that 
some ideas within it will induce further wide-ranging QAOA studies. 

\section{Acknowledgements}
We are greatly indebted to Cupjin Huang, David Ding, David Gosset, Jianxin Chen, Yaoyun Shi and Ronald de Wolf
for their remarks, contributions and suggestions.

\bibliographystyle{alpha}
\bibliography{biblio}

\newpage
\part{Appendix}

\section{The Uncut-Polynomial}

Polynomials made from graphs, such as the Chromatic and Tutte polynomials, are frequent tools in graph theory.
We introduce a polynomial, related to these, which lets us conveniently think about QAOA-related tensor-networks.
Among the applications of our polynomial are the Triangle theorem and the derivation of formulas in Sections \ref{levelone} and \ref{leveltwo}.

\begin{definition}[$U$ polynomial]
Let $G$ be an undirected graph with edge set $\Ed$ and vertex set $\V$, where we allow loops and parallel edges. 
Let $\{ x_{e} | e \in \Ed\}$ be a variable set assigned to the edges of $G$.
If $c: \V\rightarrow \{0,1\}$ (such a function is called a {\em cut}) and $e\in \Ed$, we write $e\prec c$ to denote that $c$ gives the same value to both end points of $e$.
If $e$ is a loop then $e \prec c$ is automatic.
 For $c: \V\rightarrow \{0,1\}$ define
\[
X_{c} = \prod_{e: \, e\prec c} x_{e} 
\]
Let $v \in \V$ be an arbitrary vertex of $G$. The $U$ polynomial of $G$, {\em defined through $v$}, is 
\[
U(G) \;\; = \;\; \sum_{\begin{array}{c} c: \V\rightarrow \{0,1\} \\ c(v) = 0 \end{array}} X_{c}  \;\;\;\; = \;\;\;\; {1\over 2}  \sum_{\begin{array}{c} c: \V\rightarrow \{0,1\}\end{array}} X_{c}
\]
\end{definition}

\noindent The definition is independent of $v$, and the following facts are not hard to show:
\begin{enumerate}
\item $T$ is a tree if and only if $U(T) = \prod_{e\in \Ed(T)} (x_{e} + 1)$.
\item For any connected graph $G$ on $n$ nodes $U(G)$ contains 
exactly $2^{n-1}$ terms (monomials), each with coefficient one.
\item Let $C_{n}$ be the cycle on $n$ node, with edges labeled with $\Ed = \{1,\ldots, n\}$.
Then 
\[
U(C_{n}) = \sum_{\begin{array}{c} S\subseteq n\\ |S| = n \mod 2 \end{array}} \prod_{i\in S} x_{i}
\]
\end{enumerate}

\section{Annulling Rules and identities}

We shall now discuss rules that can be expressed in such a fashion, that certain variable replacements
make $U(G)$ identically zero, and we name them
{\em Annulling Rules}. We also discuss other identities. First agree on a notation: When $(u,v)$ is an edge of a graph, and 
we make a replacement $x_{u,v} \leftarrow A$ in the $U$ polinomial of $G$, then we denote this with
\[
U(G,(u,v):A) 
\]
We cal also put multiple replacements into the argument, each of the form $e: A$, where $e$ is an edge and $A$ 
is a value. Our first lemma 
describes one of the simplest annulling rules:

\begin{lemma}\label{bridge}
Let $G$ be a graph, and $e$ a bridge in $G$. Then $U(G, e : -1) =0$.
\end{lemma}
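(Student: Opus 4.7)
The plan is to exhibit a fixed-point-free sign-reversing involution on the set of cuts that index the sum defining $U(G)$. Let $e = (u,v)$ be the bridge, and let $G_1$, $G_2$ be the two components of $G \setminus \{e\}$, with $u \in \V(G_1)$ and $v \in \V(G_2)$. I would anchor the definition of $U(G)$ at a vertex $v_0 \in \V(G_1)$ (say $v_0 = u$), so the sum runs over cuts $c \colon \V \to \{0,1\}$ with $c(v_0) = 0$.

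Define the involution $c \mapsto c'$ by keeping $c$ unchanged on $\V(G_1)$ and flipping every value on $\V(G_2)$. This is well-defined on our index set because $v_0 \in \V(G_1)$ is untouched, and it is clearly an involution with no fixed points (it flips $c(v)$). The key computation is to compare $X_c$ and $X_{c'}$. For any edge $e' \in \Ed(G_1)$ both endpoints are untouched, so $e' \prec c \iff e' \prec c'$. For any edge $e' \in \Ed(G_2)$ both endpoints are flipped simultaneously, so again $e' \prec c \iff e' \prec c'$. Only the bridge $e$ behaves differently: since $u$ is fixed and $v$ is flipped, exactly one of $c, c'$ satisfies $e \prec \cdot$. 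Writing $P_c = \prod_{e' \neq e,\; e' \prec c} x_{e'}$ (which by the above equals $P_{c'}$), we get $X_c + X_{c'} = P_c(1 + x_e)$.

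Substituting $x_e = -1$ makes each paired contribution vanish, and since the involution partitions the index set into such pairs, the total sum is zero, giving $U(G, e : -1) = 0$.

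I do not expect any real obstacle here; the only thing to double-check is that the involution genuinely respects the normalization $c(v_0) = 0$, which is why one must pick the anchor vertex inside the component \emph{not} being flipped. If one prefers the symmetric form $U(G) = \tfrac{1}{2}\sum_c X_c$, the anchoring issue disappears and the same pairing argument works verbatim.
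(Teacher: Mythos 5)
Your proof is correct, but it takes a different route from the paper. The paper deduces the lemma from two ingredients: the base case $U(\{e\}) = x_e + 1$ for the single-edge graph, and the multiplicativity lemma $U(G_1 \cup G_2) = U(G_1)\,U(G_2)$ for graphs sharing exactly one node. Applying that factorization across each endpoint of the bridge gives $U(G) = U(G_1)\,(x_e+1)\,U(G_2)$, and the substitution $x_e \leftarrow -1$ kills the middle factor. Your argument instead works directly on the defining sum via a fixed-point-free involution that flips the cut on the component across the bridge, pairing terms into $P_c(1+x_e)$; this is essentially the combinatorial content of the paper's factorization, but unpacked and made self-contained, so you never need to invoke (or prove) the product lemma. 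What the paper's approach buys is reusability: the one-node gluing lemma is used elsewhere (e.g.\ in the proof of the annulling rule of Lemma \ref{anulemma}), so deriving the bridge rule from it costs nothing extra there. What your approach buys is independence from that machinery and a slightly sharper picture of \emph{why} the factor $(1+x_e)$ appears. Your care about where to anchor the normalizing vertex (inside the unflipped component, or else use the symmetric $\tfrac12\sum_c X_c$ form) is exactly the right point to check, and you handled it correctly; note also that a bridge cannot have a parallel copy, so the multigraph conventions of the definition cause no trouble.
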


This lemma immediately follows from two simple facts: Fact 1.
When $G$ consists of a single edge, $e$, then $U(G) = x_{e}+1$, so $U(G, e : -1) =0$. Fact 2.
The following lemma:

\begin{lemma}
Let graphs $G_{1}$ and $G_{2}$ share a single node $v$, but do not share loops on $v$. Then 
\[
U(G_{1} \cup G_{2}) = U(G_{1}) \,U(G_{2}) 
\]
\end{lemma}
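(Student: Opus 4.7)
The plan is to unfold the definition of $U(G_1\cup G_2)$ using the shared vertex $v$ as the distinguished vertex (legal since, as stated after the definition, the choice of $v$ does not matter), and then exhibit a bijection between the cuts summed over and pairs of cuts on the two pieces.

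First I would fix $v$ as the base vertex for all three $U$-polynomials: $U(G_1)$, $U(G_2)$, and $U(G_1\cup G_2)$. Write $V_i=\V(G_i)$ and note that $V(G_1\cup G_2)=V_1\cup V_2$ with $V_1\cap V_2=\{v\}$. A cut $c:V_1\cup V_2\to\{0,1\}$ with $c(v)=0$ is then determined uniquely by its restrictions $c_1=c|_{V_1}$ and $c_2=c|_{V_2}$, each satisfying $c_i(v)=0$; conversely, any such pair $(c_1,c_2)$ glues to a cut on $V_1\cup V_2$ because they agree at $v$. This gives the required bijection between the index sets of the three sums.

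Next I would check the edge factorization of $X_c$. The hypothesis that $G_1$ and $G_2$ share only the node $v$ and share no loop on $v$ implies $\Ed(G_1\cup G_2)=\Ed(G_1)\sqcup \Ed(G_2)$ as a disjoint union (this is the one place the "no shared loops" assumption is used: without it, a loop at $v$ present in both pieces would be double-counted). For any edge $e\in \Ed(G_i)$, both endpoints of $e$ lie in $V_i$, so the predicate $e\prec c$ depends only on $c_i$, and indeed $e\prec c \iff e\prec c_i$. Hence
\[
X_c \;=\; \prod_{e\prec c} x_e \;=\; \prod_{e\in \Ed(G_1),\, e\prec c_1} x_e \,\cdot\, \prod_{e\in \Ed(G_2),\, e\prec c_2} x_e \;=\; X_{c_1} X_{c_2}.
\]

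Finally I would assemble the sum:
\[
U(G_1\cup G_2) \;=\; \sum_{\substack{c:V_1\cup V_2\to\{0,1\}\\ c(v)=0}} X_c \;=\; \sum_{\substack{c_1(v)=0\\ c_2(v)=0}} X_{c_1}X_{c_2} \;=\; \Bigl(\sum_{c_1(v)=0}X_{c_1}\Bigr)\Bigl(\sum_{c_2(v)=0}X_{c_2}\Bigr) \;=\; U(G_1)\,U(G_2).
\]
There is essentially no obstacle here: the whole proof reduces to disjointness of the edge sets plus the fact that the cut condition on $V_1\cup V_2$ factors through the two restrictions because $V_1\cap V_2=\{v\}$. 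The only point to be careful about is exactly the loop-at-$v$ caveat singled out in the hypothesis.
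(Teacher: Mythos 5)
Your proof is correct and follows exactly the paper's route: define all three $U$-polynomials through the shared vertex $v$ and exhibit the term-by-term correspondence between cuts of $G_1\cup G_2$ fixing $c(v)=0$ and pairs of such cuts on the pieces. You merely fill in the details (the edge-set disjointness, the role of the no-shared-loops hypothesis) that the paper's one-line proof leaves implicit.
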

\begin{proof}
Define $U(G)$, $U(G_{1})$ and $U(G_{2})$ through the vertex $v$. Comparing these expressions
there will be a one-one correspondence between the terms of $U(G)$ and pairs of terms with first and second components from
$U(G_{1})$ and $U(G_{2})$ respectively. \end{proof}

\noindent {\em Note:} For {\em vertex disjoint} graphs, $G_{1}$ and $G_{2}$, 
an extra factor of 2 comes in: Let $G_{1}$ and $G_{2}$ be vertex disjoint. Then
\[
 U(G_{1} \cup G_{2}) = 2 \, U(G_{1}) \, U(G_{2}).
 \]
  
If a graph $G$ is a union of two graphs that intersect in two nodes,
we still have an expression of the $U$-polynomial of $G$ in terms of its components. First a definition:

\begin{definition}
We say that $G'$ arises from $G$ by identifying two nodes, $v, w\in \V(G)$ 
(i.e. merging them into a single node), if all edges between $v$ and $w$ become loops,
and for any node $x$ all edges from $x$ to $v$ and from $x$ to $w$ become parallel edges between $x$ and the 
new node (i.e. we do not merge edges). 
\end{definition}

\begin{lemma}\label{twonodes}
Let graphs $G_{1}$ and $G_{2}$ share two nodes, $u$ and $v$, but not edges. Then 
\[
U(G_{1} \cup G_{2}) = U(G'_{1}) \, U(G'_{2}) \;+\;  U(G''_{1}, (u,v): 0) \; U(G''_{2}, (u,v): 0)
\]
where we get $G'_{1}$ from $G_{1}$ by
identifying nodes $u$ and $v$,
and $G''_{1}$ from $G_{1}$ by adding an extra edge, $(u,v)$ to $G_{1}$. 
We get $G'_{2}$ and $G''_{2}$ from $G_{2}$ similarly.
\end{lemma}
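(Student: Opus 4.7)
The plan is to expand $U(G_1 \cup G_2)$ directly from its definition and then group the resulting sum by how cuts of $G_1 \cup G_2$ restrict to the two shared vertices $u,v$. Since $u$ lies in both $G_1$ and $G_2$, I will anchor the $U$-polynomial at $u$, so $U(G_1 \cup G_2) = \sum_{c : c(u)=0} X_c$ where the sum ranges over cuts of the vertex set $\V(G_1)\cup\V(G_2)$. The crucial structural observation is that because $G_1$ and $G_2$ share no edges, any such cut $c$ factors as a pair $(c_1,c_2)$ of cuts on $\V(G_1)$ and $\V(G_2)$ that agree on $\{u,v\}$, and the monomial factorizes accordingly as $X_c = X_{c_1}\, X_{c_2}$, because each edge lies in exactly one of $G_1,G_2$.

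Next I will split the sum according to $c(v)\in\{0,1\}$. In the $c(v)=0$ case, $c_1$ and $c_2$ both send $u$ and $v$ to $0$, so they can be reinterpreted as cuts on the vertex set obtained by identifying $u$ with $v$, that is, as cuts on $\V(G_i')$ anchored at the merged node. The key point here is that the edge-transformation used to build $G_i'$ (edges between $u$ and $v$ become loops, other incidences become parallel edges) exactly preserves the predicate $e \prec c$: loops are automatically $\prec c'$, and any old edge incident to $u$ or $v$ is $\prec c_1$ iff the corresponding edge of $G_i'$ is $\prec c'$, since $u$ and $v$ now carry the common value $c'(uv)$. Hence the $c(v)=0$ contribution is exactly $U(G_1')\,U(G_2')$.

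For the $c(v)=1$ case, each $c_i$ has $c_i(u)=0$ and $c_i(v)=1$, i.e.\ it separates $u$ and $v$. Here I will invoke $G_i''$: if $e^\ast$ denotes the extra edge $(u,v)$ in $G_i''$, then substituting $x_{e^\ast}\leftarrow 0$ kills every cut for which $e^\ast \prec c_i$, i.e.\ every cut with $c_i(u)=c_i(v)$. The surviving cuts are precisely the ones with $c_i(u)\neq c_i(v)$, and their $X_{c_i}$ coincides with the $G_i$-monomial because $e^\ast$ is not $\prec c_i$ and contributes nothing. So this case matches $U(G_1'',(u,v){:}\,0)\cdot U(G_2'',(u,v){:}\,0)$. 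Summing the two cases gives the claimed identity.

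The only subtle point, and what I expect to be the main place to be careful, is verifying that the identification operation behaves correctly when $G_i$ has loops, parallel edges between $u$ and $v$, or other edges incident to both $u$ and $v$; this is exactly why the definition of identifying vertices was stated with the convention that edges are never merged and parallel-edge-production replaces $u$-edges and $v$-edges separately. With that convention, the bijection between cuts and the matching of monomials in the $c(v)=0$ case is clean, and the rest is bookkeeping.
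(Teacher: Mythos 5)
Your proposal is correct and follows essentially the same route as the paper: anchor the sum at $u$, split the cuts of $G_1\cup G_2$ by the value of $c(v)$, identify the $c(v)=0$ part with $U(G'_1)\,U(G'_2)$ via vertex identification, and recognize the $c(v)=1$ part as $U(G''_1,(u,v){:}\,0)\,U(G''_2,(u,v){:}\,0)$ since the substitution $x_{(u,v)}\leftarrow 0$ annihilates exactly the cuts with $c(u)=c(v)$. Your explicit factorization $X_c = X_{c_1}X_{c_2}$ and the care about loops and parallel edges make the argument somewhat more detailed than the paper's terse version, but it is the same proof.
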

The lemma holds even with loops and single or parallel edges on $\{u,v\}$ in $G_{1}$, $G_{2}$ or both, via
\begin{equation}\label{decompose}
U(G_{1}\cup G_{2}) = \sum_{
\begin{array}{l}
c\in {\rm CUT}(G_{1}\cup G_{2}) \\
c(u) = c(v) = 0
\end{array}
} X_{c} \;\; + \;
\sum_{
\begin{array}{l}
c\in {\rm CUT}(G_{1}\cup G_{2}) \\
c(u) = 0, \; c(v) = 1
\end{array}
} X_{c} \;\;\; = S' + S''
\end{equation}

\noindent $\bullet\;$ It is clear that $U(G'_{1}) \, U(G'_{2}) = S'$. \\
$\bullet\;$ That $U(G''_{1}, (u,v) :  0) \; U(G''_{2}, (u,v) : 0) = S''$,
follows from that 
$x_{(u,v)}\leftarrow 0$ sets all those terms of $U(G''_{1})$ and $U(G''_{2})$ zero
that belong to cuts $c$ with $c(u) =  c(v) = 0$. More generally:

\begin{lemma}
Let $G$ be a graph with two distinct nodes $u$ and $v$. Then
\[
U(G) = U(G') + U(G'', (u,v) : 0)
\]
where we get $G'$ from $G$ by
identifying nodes $u$ and $v$ and $G''$ from $G$ by 
adding an extra edge, $(u,v)$, to $G$.
\end{lemma}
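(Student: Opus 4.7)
The plan is to prove the identity by splitting the sum defining $U(G)$ according to whether the cut $c$ assigns the same value to $u$ and $v$ or not, then showing that the two pieces are exactly $U(G')$ and $U(G'',(u,v):0)$ respectively.

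First I would pick a root vertex $v_0 \in \V(G) \setminus \{u,v\}$ (if $\V(G) = \{u,v\}$ the statement is a trivial check) and use the defining expression
\[
U(G) = \sum_{c:\V(G)\to\{0,1\},\, c(v_0)=0} X_c = S_{=} + S_{\neq},
\]
where $S_=$ sums over cuts with $c(u)=c(v)$ and $S_{\neq}$ over cuts with $c(u)\neq c(v)$. Since identifying $u$ with $v$ and extending by $x_{(u,v)}$ on the new edge are operations on edges and vertices alone, the root $v_0$ may be used uniformly on all three graphs $G,G',G''$.

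For the $S_=$ piece I would exhibit a bijection between cuts of $G$ with $c(u)=c(v)$ and cuts of $G'$: a cut $c$ of $G'$ lifts to the cut $\tilde c$ of $G$ that agrees with $c$ on $\V\setminus\{u,v\}$ and takes the common value $c([uv])$ on both $u$ and $v$. It remains to check that $X_c$ (computed in $G'$) equals $X_{\tilde c}$ (computed in $G$). The key point is the bookkeeping of edges on $\{u,v\}$: under the lemma's identification convention, edges between $u$ and $v$ in $G$ become loops in $G'$ and loops satisfy $e\prec c$ automatically, matching the fact that on a cut with $c(u)=c(v)$ these same edges also satisfy $e\prec \tilde c$ in $G$; edges from a third vertex $x$ to $u$ or $v$ become parallel edges in $G'$ without being merged, so they contribute identically on both sides. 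This yields $S_= = U(G')$.

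For the $S_{\neq}$ piece I would use the tautological bijection between cuts of $G$ and cuts of $G''$ (the vertex set is unchanged, only an edge was added). For any cut $c$ of $G''$ the new edge $(u,v)$ satisfies $(u,v)\prec c$ iff $c(u)=c(v)$, so
\[
X_c^{(G'')} \;=\; x_{(u,v)}^{\,[c(u)=c(v)]}\cdot X_c^{(G)}.
\]
Substituting $x_{(u,v)}=0$ annihilates exactly those terms with $c(u)=c(v)$, leaving $U(G'',(u,v):0) = \sum_{c:c(u)\neq c(v),\,c(v_0)=0} X_c = S_{\neq}$. Adding the two pieces gives the claimed identity.

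The only genuine obstacle is making the first step airtight: one must verify that the identification convention (loops where edges $uv$ existed, unmerged parallel edges from third vertices) makes the monomial $X_c$ in $G'$ coincide on the nose with $X_{\tilde c}$ in $G$ for every cut — including the degenerate cases where $G$ already had loops at $u$ or $v$ or parallel $uv$-edges. This is a direct case-check from the definition of $e\prec c$, but it is the place where a sloppy convention would break the proof. Everything else is immediate from the definition of the $U$-polynomial.
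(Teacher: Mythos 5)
Your proposal is correct and is essentially the argument the paper itself uses: the paper splits the sum over cuts according to whether $c(u)=c(v)$ or not, identifies the first piece with $U(G')$ via vertex identification (edges between $u$ and $v$ becoming loops, which are automatically uncut), and identifies the second piece with $U(G'',(u,v):0)$ because the substitution $x_{(u,v)}\leftarrow 0$ kills exactly the terms with $c(u)=c(v)$. Your write-up is in fact somewhat more careful than the paper's (which states this general lemma with only a ``more generally'' after proving the two-component version), particularly in checking that the unmerged parallel edges and loops make the monomials match on the nose.
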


In the following lemma we use the notation 
$G - e$ for the graph $(\V(G), \Ed(G)\setminus \{e\})$.

\begin{lemma}\label{trivial}
\smallskip
\begin{enumerate}
\item Let $e$ be an edge of graph $G$. 
Then 
\[
U(G, e: 1) = U(G - e)
\]
\item Let $e$ and $f$ be parallel edges in a graph $G$. 
\[
U(G, e:  C, f : D) = U(G - f, e:  CD)
\]
\end{enumerate}
\end{lemma}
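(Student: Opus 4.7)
The plan is to attack both parts directly from the definition $U(G) = \frac{1}{2}\sum_{c:\V\to\{0,1\}} X_c$ where $X_c = \prod_{e':\, e'\prec c} x_{e'}$, by tracking, cut by cut, how each monomial $X_c$ transforms under the prescribed substitutions. There is no clever manipulation required; the content of the lemma is essentially a bookkeeping check, and the main ``obstacle'' is only to be careful about parallel-edge and loop conventions inherited from the definition.

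For part~1, I would split the sum over cuts $c$ into those with $e\prec c$ and those with $e\not\prec c$. When $e\not\prec c$, the variable $x_e$ does not appear in $X_c$ at all, so substituting $x_e\leftarrow 1$ is a no-op, and $X_c$ agrees term-for-term with the corresponding monomial of $U(G-e)$ associated with the same cut (the set of edges of $G-e$ not cut by $c$ is identical). When $e\prec c$, the monomial $X_c$ in $U(G)$ contains the factor $x_e$ exactly once; substituting $x_e\leftarrow 1$ removes it, again leaving the same monomial that the cut $c$ contributes to $U(G-e)$. Summing over cuts (with the convention $c(v)=0$ for a fixed basepoint $v$) gives $U(G,e:1)=U(G-e)$.

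For part~2, the crucial observation is that since $e$ and $f$ are parallel edges between the same two endpoints $u,w$, we have $e\prec c \iff c(u)=c(w) \iff f\prec c$, so the variables $x_e$ and $x_f$ are always jointly present or jointly absent in any monomial $X_c$. I would then again split cuts by whether $c(u)=c(w)$. For cuts with $c(u)\neq c(w)$, neither variable occurs, so both sides contribute the same monomial. For cuts with $c(u)=c(w)$, the left-hand side contributes a factor $x_e x_f \mapsto CD$ on top of the product over the remaining edges; on the right-hand side, in $G-f$ the edge $e$ is still non-cut by such $c$, and the substitution $x_e\leftarrow CD$ attaches exactly the factor $CD$ to the same product over the remaining edges. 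A termwise match over all cuts yields the identity.

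The only subtle point to mention is the treatment of loops and of further parallel edges, which the definition of $U$ explicitly permits: a loop on $u$ is automatically $\prec c$ and contributes its own variable to every monomial regardless of $c$, and deleting $f$ does not affect any such contribution, so the bookkeeping above is unchanged. Because every step is a direct equality of monomials indexed by the same set of cuts, no new identities or generating-function manipulations are needed; the hard part, if any, is simply being pedantic about the $\prec$ relation in the presence of loops and multi-edges so that part~2 holds in the full generality used later in Lemmas~\ref{bridge} and~\ref{twonodes}.
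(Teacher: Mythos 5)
Your proof is correct: the termwise, cut-by-cut verification from the definition of $U$ is exactly the argument the paper has in mind (it labels the lemma \emph{trivial} and omits the proof entirely). Your care with loops and the observation that $e\prec c\iff f\prec c$ for parallel edges are precisely the points that make the bookkeeping go through.
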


We are ready to prove an annulling rule, which implies our Triangle Theorem.
We only prove the annulling rule, without the Triangle Theorem implication.

\begin{figure}[H]
\centering
\includegraphics[width=0.5\textwidth]{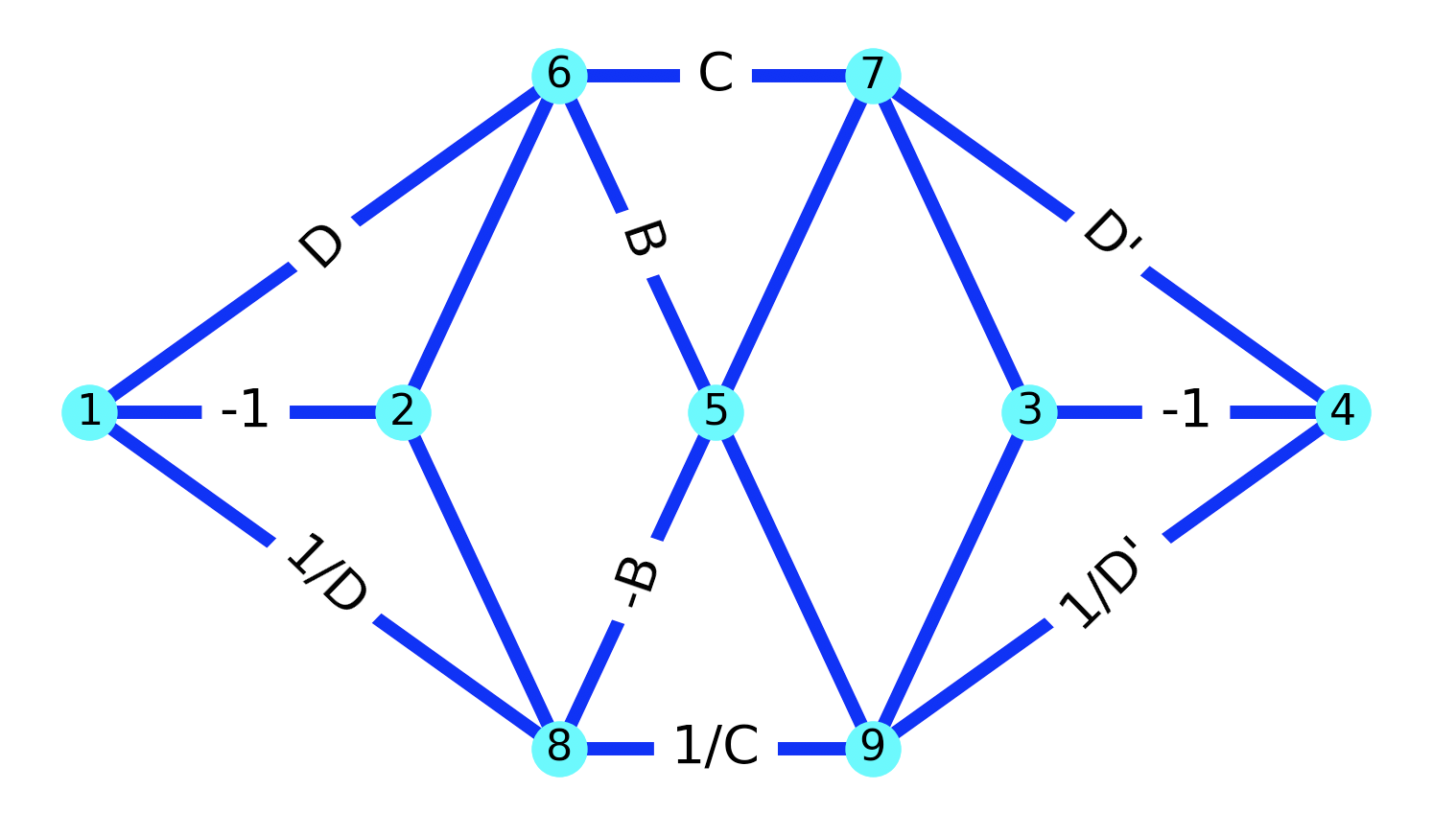}
\caption{$U(G)$ of the graph $G$ in the figure with replacements indicated on the edges
gives zero for every setting of the unlabeled edges. \label{zero0}}
\end{figure}

\begin{lemma}\label{anulemma}
Let $C$, $D$ and $D'$ be arbitrary non-zero constants and $B$ be an arbitrary constant. Then the $U$ polynomial of the graph $G$ in Figure \ref{zero0}
with replacements of the variables associated to its edges as drawn, gives a non-zero polynomial.
\end{lemma}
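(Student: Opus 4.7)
The goal is to show that $U(G)$, viewed as a polynomial in the variables $x_e$ indexed by the unlabeled edges of $G$, does not vanish identically after one performs the prescribed substitutions $C$, $D$, $D'$, $B$ on the labeled edges. Since a polynomial over $\mathbb{C}$ is the zero polynomial if and only if it vanishes at every point of $\mathbb{C}^m$, it suffices to exhibit a single assignment to the unlabeled variables at which $U(G)$ evaluates to a non-zero number.

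My first attempt would be the specialization $x_e = 1$ for every unlabeled edge $e$. By Lemma \ref{trivial}(1), substituting $x_e = 1$ is equivalent to deleting the edge $e$ from the graph, so the resulting value equals $U(G')$, where $G'$ is the subgraph of $G$ retaining only the labeled edges with their prescribed weights $C$, $D$, $D'$, $B$. This reduces the lemma to a finite arithmetic check on a much smaller combinatorial object.

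I would then compute $U(G')$ by repeatedly applying the structural tools already developed: factorization over a cut vertex (the engine behind Lemma \ref{bridge}), Lemma \ref{twonodes} for two-vertex separators, Lemma \ref{trivial}(2) for parallel edges, and the two-term decomposition (\ref{decompose}) to split the sum over cuts when needed. Each step either shrinks $G'$ or splits it into a product of smaller $U$-polynomials with constant weights. The output is an explicit polynomial expression in the single free parameter $B$ whose coefficients are monomials in $C$, $D$, $D'$, each of which is non-zero by hypothesis.

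The main obstacle is ruling out the possibility that the monomial contributions of different cuts cancel exactly in the full sum. I would address this by exhibiting a specific cut $c_{\ast}$ of $G$ whose unlabeled-monomial $X_{c_{\ast}}$ is achieved by no other cut, so that after substitution $c_{\ast}$ contributes uniquely and with a non-zero coefficient to some monomial in the $x_e$. A natural candidate is the constant cut $c \equiv 0$, which leaves every edge of $G$ uncut and thus produces the monomial $\prod_{e \text{ unlabeled}} x_e$ with coefficient $C^{a} D^{b} D'^{c} B^{d}$ for specific exponents read off from the figure; if the subgraph of $G$ spanned by the unlabeled edges is connected, this is the only cut contributing to that monomial, and the coefficient is a non-zero polynomial in $B$ because $C, D, D'$ are non-zero. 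Should the unlabeled subgraph be disconnected, I would instead employ a finer specialization (for instance, setting the unlabeled edges in one component to $1$ and in another to $0$, or assigning a single transcendental $t$ to all of them and reading off the leading coefficient in $t$) to isolate a cut whose contribution no other cut can match, again concluding non-vanishing from the non-zeroness of $C$, $D$, $D'$.
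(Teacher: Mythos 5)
There is a fundamental problem: you are proving the wrong statement. The word ``non-zero'' in the lemma as printed is a typo. This lemma is the \emph{annulling rule} announced in the surrounding text (``We are ready to prove an annulling rule\dots''), the section is explicitly about replacements that make $U(G)$ \emph{identically zero}, and the caption of Figure \ref{zero0} states that the $U$ polynomial with the indicated replacements ``gives zero for every setting of the unlabeled edges.'' The paper's own proof confirms this: it decomposes $G$ along the two-vertex separator $\{6,8\}$ via Lemma \ref{twonodes}, kills the first summand by merging parallel edges and applying the bridge rule (Lemma \ref{bridge}) to the edge labelled $-1$, then decomposes the remaining piece along $\{7,9\}$ and exhibits an explicit four-term cancellation in $U(G_{\boxtimes},\mathrm{LIST})=Bx_{(5,7)}-Bx_{(5,9)}-Bx_{(5,7)}+Bx_{(5,9)}=0$. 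So the correct conclusion is that the polynomial vanishes identically, and any proof of non-vanishing must contain an error.

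Concretely, the error in your plan is in the step where you claim a cut $c_{\ast}$ contributes ``uniquely and with a non-zero coefficient'' to some monomial in the unlabeled variables. Since the polynomial is genuinely the zero polynomial, every monomial's coefficient cancels, so no such isolating cut exists; your candidate $c\equiv 0$ in particular fails because its coefficient is the product of \emph{all} the edge labels, which includes the label $-1$ on the bridge $(1,2)$ and the pair $B$, $-B$ (so it vanishes outright when $B=0$, and cancels against other cuts otherwise), and because the cuts that flip exactly the ``primed'' side of the gadget produce the same unlabeled monomial with compensating coefficients --- this pairing of cuts is exactly the mechanism the paper's computation makes explicit. Your reductions via Lemma \ref{trivial}, Lemma \ref{twonodes} and the bridge rule are the right toolkit; if you rerun them aiming to show the result is $0$ rather than to exhibit a non-vanishing specialization, you will reproduce the paper's argument.
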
 

\begin{proof}
We decompose $G$ into graphs $G_{L}$ and $G_{R}$, induced on node sets 
\[
\V_{L} = \{1,2,6,8\} \;\;\;\; {\rm and} \;\;\;\; \V_{R} = \{6,8,7,5,9,3,4\}
\]
Then we apply Lemma \ref{twonodes}, since $G_{L}$ and $G_{R}$ share exactly two nodes, 6 and 8.
First notice that the term $U(G'_{L})U(G'_{R})$ becomes zero after the replacements. For this we prove that  $U(G'_{L})$ becomes zero. 

\begin{figure}[H]
\centering
\begin{tabular}{ccc}
\includegraphics[width=0.2\textwidth]{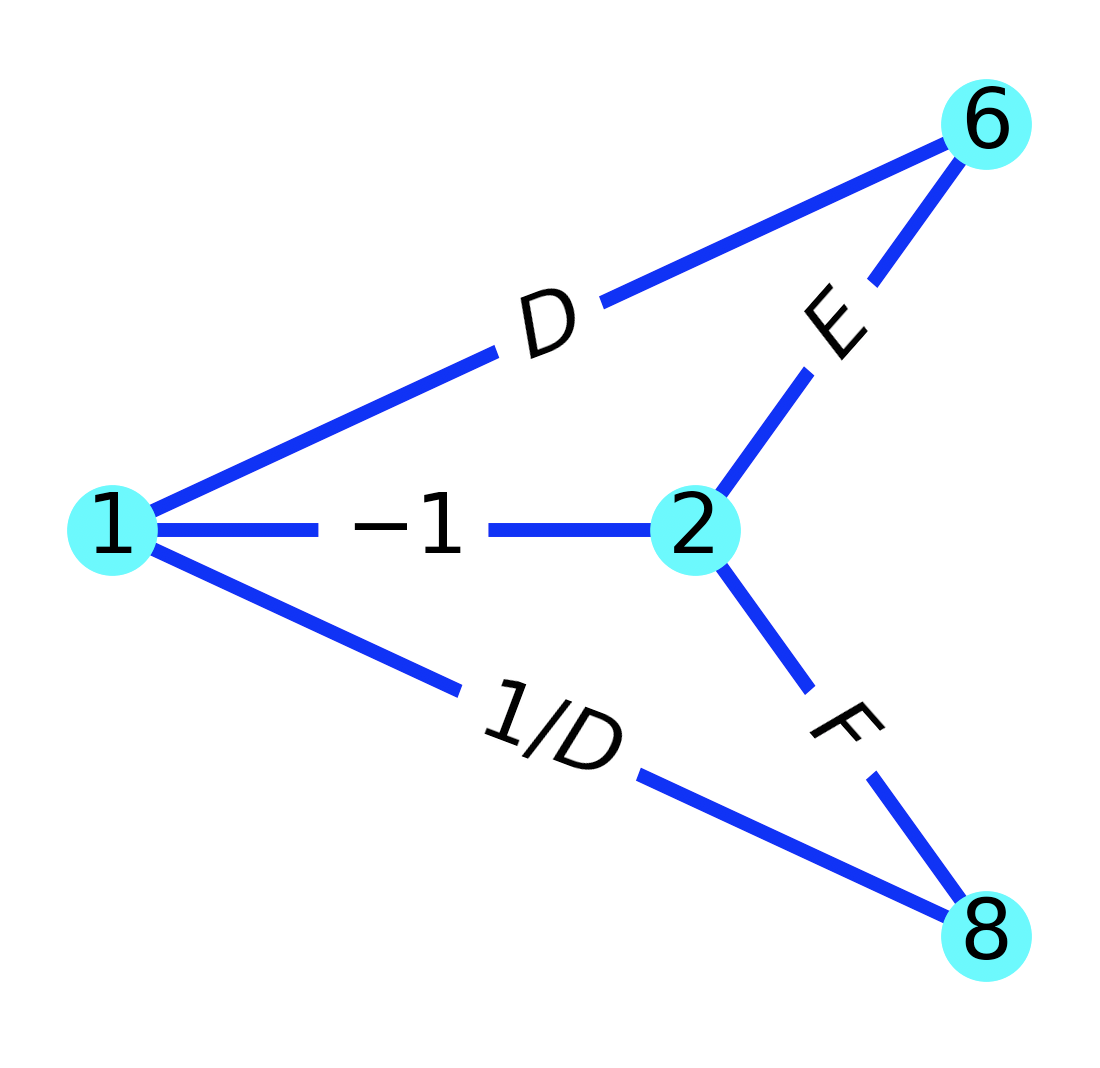}
\includegraphics[width=0.15\textwidth]{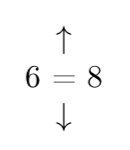}
 & \includegraphics[width=0.18\textwidth]{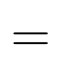} & \includegraphics[width=0.27\textwidth]{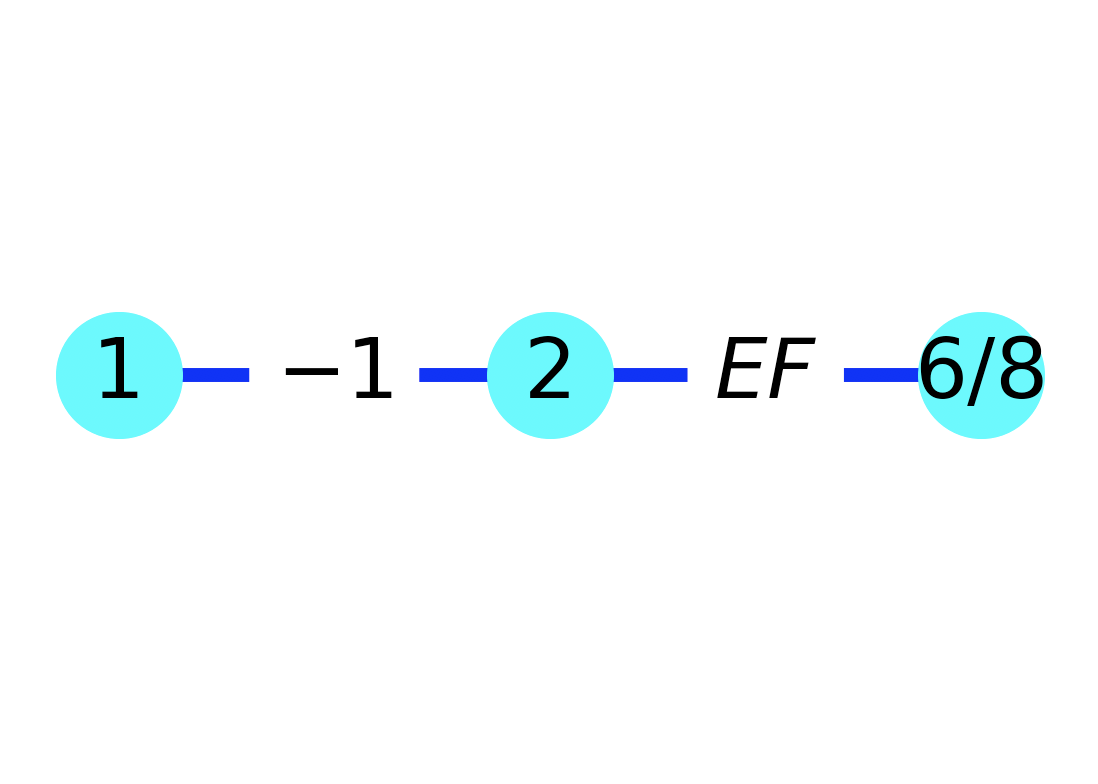} \\
\end{tabular}
\caption{Merging nodes 6 and 8  \label{zero55}}
\end{figure}

Graph $G'_{L}$ has node set  $\{1, 2, 6/8\}\}$, where node 6/8 arises from merging nodes 6 and 8 of $G_{L}$
The two edges, $(1,6)$ and $(1,8)$ become parallel.  Lemma \ref{trivial} allows to merge, then delete these edges,
and what we end up with is a path of length two as in the r.h.s. of Figure \ref{zero55}.
We finally apply Lemma \ref{bridge} to show that $U$ polynomial of this graph 
with the replacement as indicated, becomes zero, as $(1,2)$ becomes a bridge.

\begin{figure}[H]
\centering
\begin{tabular}{ccc}
\includegraphics[width=0.5\textwidth]{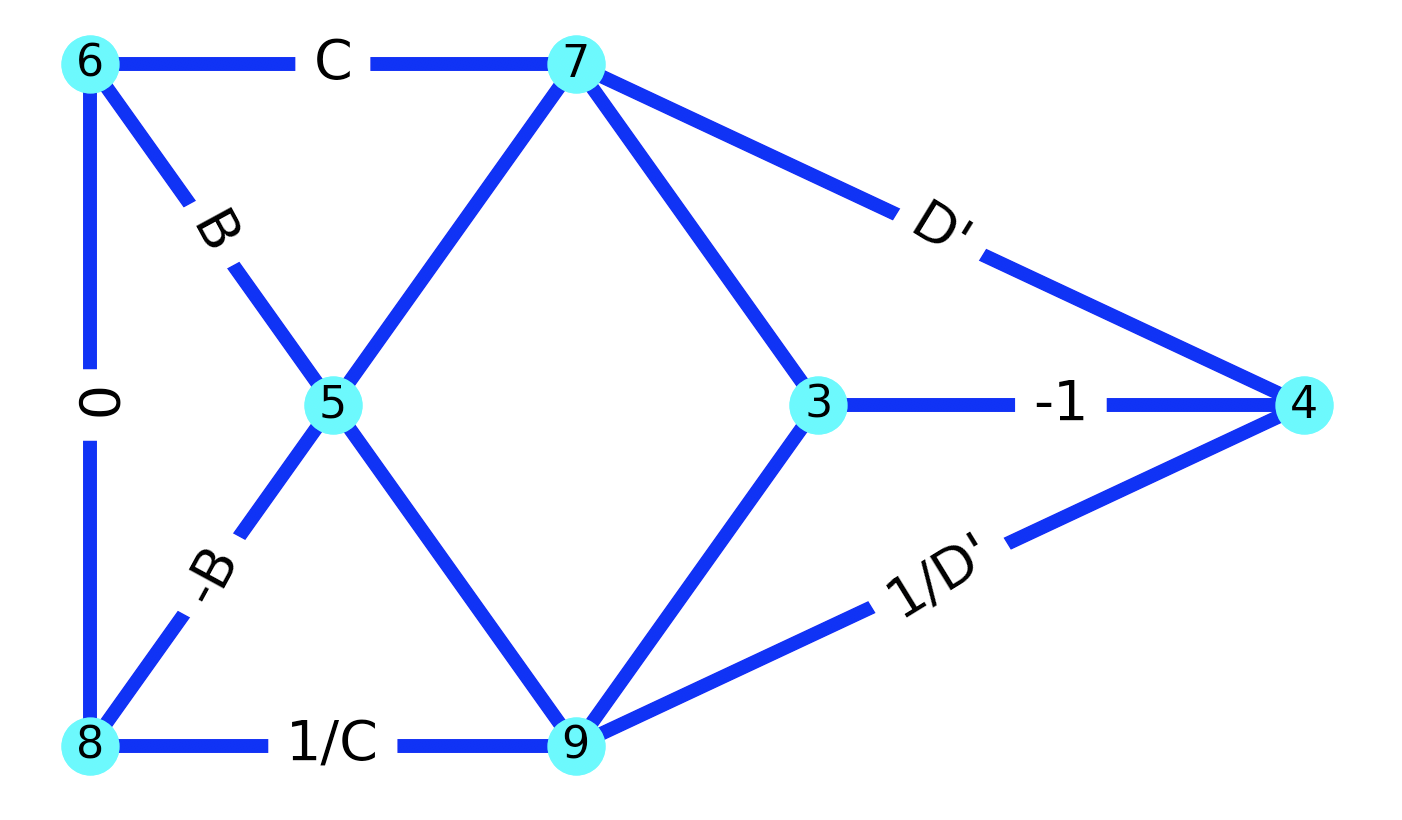} & \hspace{0.1in}  & \includegraphics[width=0.3\textwidth]{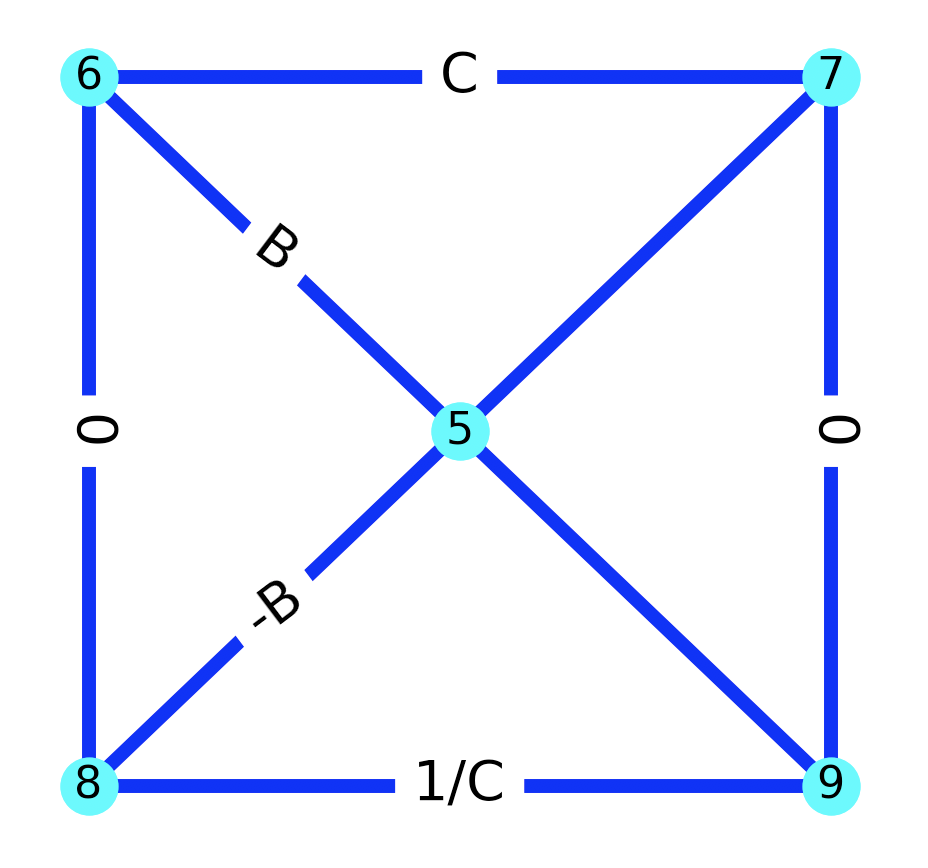}  \\
$G''_{R}$ & & $G_{\boxtimes}$
\end{tabular}
\caption{The graph $G''_{R}$ with all the replacements and $G_{\boxtimes}$ with all the replacements.\label{zero1}}
\end{figure}

Next we show that $U(G''_{L}, (6,8): 0) \; U(G''_{R}, (6,8) : 0) = 0$ by showing that
$U(G''_{R}, (6,8): 0) = 0$.  Figure \ref{zero1} shows $G''_{R}$ with all the replacements, including $x_{(6,8)}\leftarrow 0$. Just as we 
have decomposed $G$ through nodes 6 and 8, we now decompose $G''_{R}$ through nodes 7 and 9, getting two graphs induced on node sets
\[
\V_{\boxtimes} = \{6,7,8,9\} \;\;\;\; {\rm and} \;\;\;\; \V_{\rsub} = \{7,3,4,9\}
\]
Applying  Lemma \ref{twonodes} again for this decomposition (we do not write out the entire expression), we find
that the first summand of the r.h.s. is zero for the same reason as $U(G'_{L})$ was zero. Thus, by looking at the other summand, we notice that
it is sufficient to show that the $U$ polynomial of $G_{\boxtimes}$ on the right side of Figure \ref{zero1}, with the replacements as shown
on the edges, is zero. This is what we shall do below.

Recall (or realize) that if the value of an edge is zero, only those cuts create non-zero terms, where the two end-points 
of the edge are evaluated differently by the cut. We compute 
$U(G_{\boxtimes})$ with the replacements shown in Figure \ref{zero1}. 
We define the $U$ polynomial of $G_{\boxtimes}$ through node 6. 
This fixes $c(6) = 0$ for all cuts $c$ in te sum, and also $c(8)=1$. We also have $c(9) = 1-c(7)$. This leaves us with four non-zero terms:

\medskip
\begin{center}
\begin{tabular}{|cc|rrr|}\hline
$c(7)$ & $c(5)$ & \multicolumn{3}{c}{associated term} |\\[6pt]\hline\hline\\[-1.5ex]
0 & 0 & $B\cdot x_{(5,7)}\cdot C / C$ & = & $B\cdot x_{(5,7)}$ \\
0 & 1 & $-B\cdot x_{(5,9)} \cdot C / C$ & = & $-B\cdot x_{(5,9)}$ \\
1 & 0 & & & $B\cdot x_{(5,9)}$ \\
1 & 1 & & & $-B\cdot x_{(5,7)}$ \\\hline
\end{tabular}
\end{center}
\medskip
Thus with ${\rm LIST} \;= \; (6,7):C, \; (6,8):0, \; (6,5):B, \; (8,5):-B, \; (8,9): 1/C,\; (7,9):0$ we have:
\[
U(G_{\boxtimes}, {\rm LIST}) 
= B x_{(5,7)}-B x_{(5,9)}-B x_{(5,7)}+B x_{(5,9)} = 0
\]
\end{proof}

\section{QAOA energy, computed with the Uncut polynomial}

\begin{figure}[H]
\centering
\includegraphics[width=0.6\textwidth]{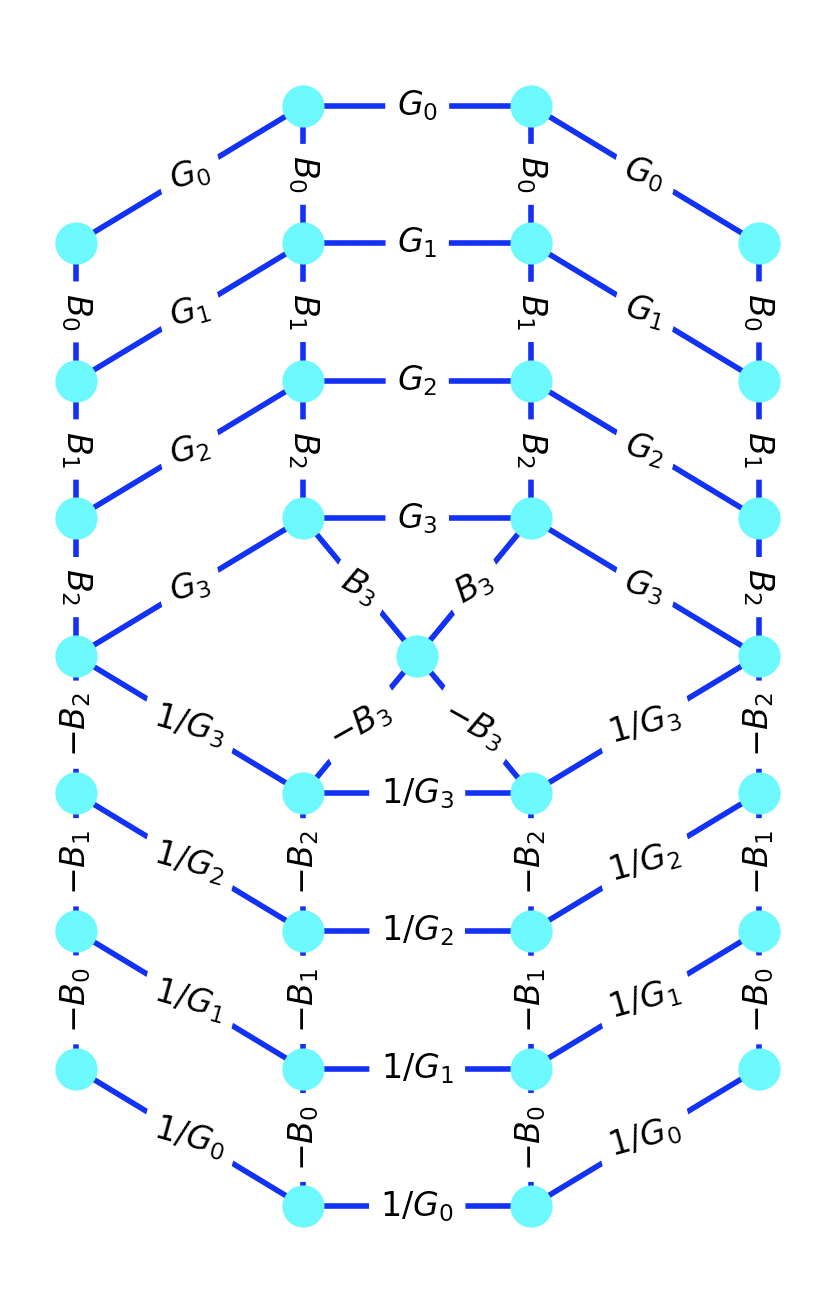} 
\caption{Computing the level 4 QAOA energy of the middle edge $e$ of the path of length 3: $\bullet-\bullet\stackrel{e}{-}\bullet-\bullet$ \\
In the $U$ polynomial of the above graph we need to replace variables as shown on the edges, where $G_{j} = e^{-i\gamma_{j}}$ and $B_{j} = -i \tan \beta_{j}$. 
The obtained value must be further multiplied with ${1\over 2^{3}}\prod_{q=0}^{3} \cos^{4}\beta_{q}$. \label{qaoa1}}
\end{figure}

\medskip

The QAQA energy of a graph $G$ is the sum of the QAOA energies of its edges.
We can compute the QAOA energy of an edge $e$ of $G$ from the uncut polynomial of the graph
${\cal G}(G,e,p)$, which is constructed from $G$, $e$ and the number $p$ of levels. In the definitions below we fix
$G$, and omit it from most notations.

The zeroth edge-neighborhood of $e$ in $G$, denoted by $N_{0}(e)$, consists only of $e$. For $i > 0$ the $i^{\rm th}$ edge neighborhood, $N_{i}(e)$ of $e$, consists of 
all elements of $N_{i}(e)$ and of all edges that are incident to any edge in $N_{i}(e)$. We also define $V_{i}(e)$ as the set of nodes that are incident
to any of the edges in $N_{i}(e)$. For the number of levels, $p$, for graph $G$ and for edge $e$ we define a graph ${\cal G}(G,e,p)$ with vertex set:
\[
\V({\cal G}(G,e,p)) = \{0\} \cup \bigcup_{q=0}^{p-1} {\cal N}_{q}  \cup {\cal N}'_{q}  \cup {\cal M}_{q}
\]

Both ${\cal N}_{q}$ and ${\cal N}'_{q}$ are copies of the set $V_{q}(e)$. The set ${\cal M}_{q}$
is the copy of the set $V_{q+1}(e) \setminus V_{q}(e)$.

\medskip

\noindent The graph ${\cal G}(G,e,p)$ has several classes of labeled edges. The labels correspond to future replacements.

\begin{enumerate}
\item The node 0 of ${\cal G}(G,e,p)$ has four incident edges: two towards ${\cal N}_{0}$, labeled with $B_{p-1}$, and two towards ${\cal N}'_{0}$, labeled with $-B_{p-1}$.
\item  For $0\le q \le p-1$ the set ${\cal N}_{q}  \cup {\cal M}_{q}$ is a copy of $V_{q+1}$. Copy all edges 
of $N_{q+1}(e)$ onto  ${\cal N}_{q}  \cup {\cal M}_{q}$ and label each with $G_{p-1-q}$.
\item  For $0\le q \le p-1$ the set ${\cal N}'_{q}  \cup {\cal M}_{q}$ is a copy of $V_{q+1}$. Copy all edges 
of $N_{q+1}(e)$ onto  ${\cal N}'_{q}  \cup {\cal M}_{q}$ and label each with $1/G_{p-1-q}$.
\item For $0\le q \le p-2$ there is a natural matching between ${\cal N}_{q+1}$ and ${\cal N}_{q}  \cup {\cal M}_{q}$. Label all edges of this matching with $B_{p-2-q}$.
\item For $0\le q \le p-2$ there is a natural matching between ${\cal N}'_{q+1}$ and ${\cal N}'_{q}  \cup {\cal M}_{q}$. Label all edges of this matching with $-B_{p-2-q}$.
\end{enumerate}

\medskip

\noindent  Turn the labels into formulas by the replacements
\begin{eqnarray*}
B_{q} \; & \rightarrow &  \; -i \tan \beta_{q} \\
G_{q} \; & \rightarrow &  \; e^{-i\gamma_{q}} \\
\end{eqnarray*}

\noindent  After we compute the $U$ polynomial of ${\cal G}(G,e,p)$, where each variable is replaced with the expression  
labeling the associated edge, we multiply the result with 
\[
{1\over 2^{|V_{p}(e)|-1}}\prod_{q=0}^{p-1} \cos^{2 \, |V_{q}(e)|}\beta_{p-1-q}
\]

\begin{figure}[H]
\centering
\includegraphics[width=0.9\textwidth]{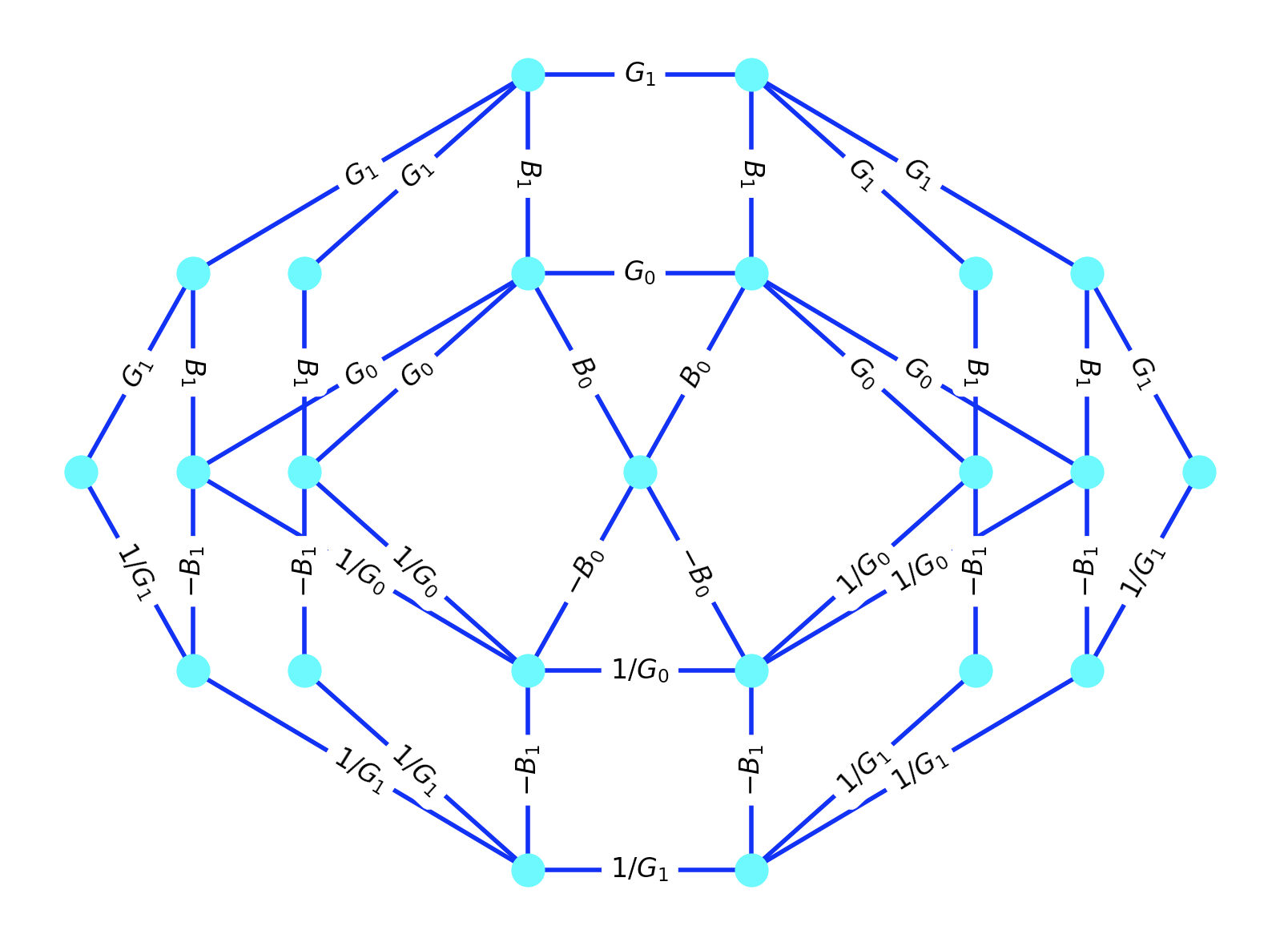}
\caption{Level 2 QAOA of $\; G \; = \; \bullet-\bullet-\stackrel{\stackrel{\bullet}{|}}{\bullet}\stackrel{e}{-}\stackrel{\stackrel{\bullet}{|}}{\bullet}-\bullet-\bullet\;\;\;$ 
as a $U$ polynomial. \label{bee}}
\end{figure}

\section{A density matrix example}



\begin{figure}[H]
\centering
\includegraphics[width=0.6\textwidth]{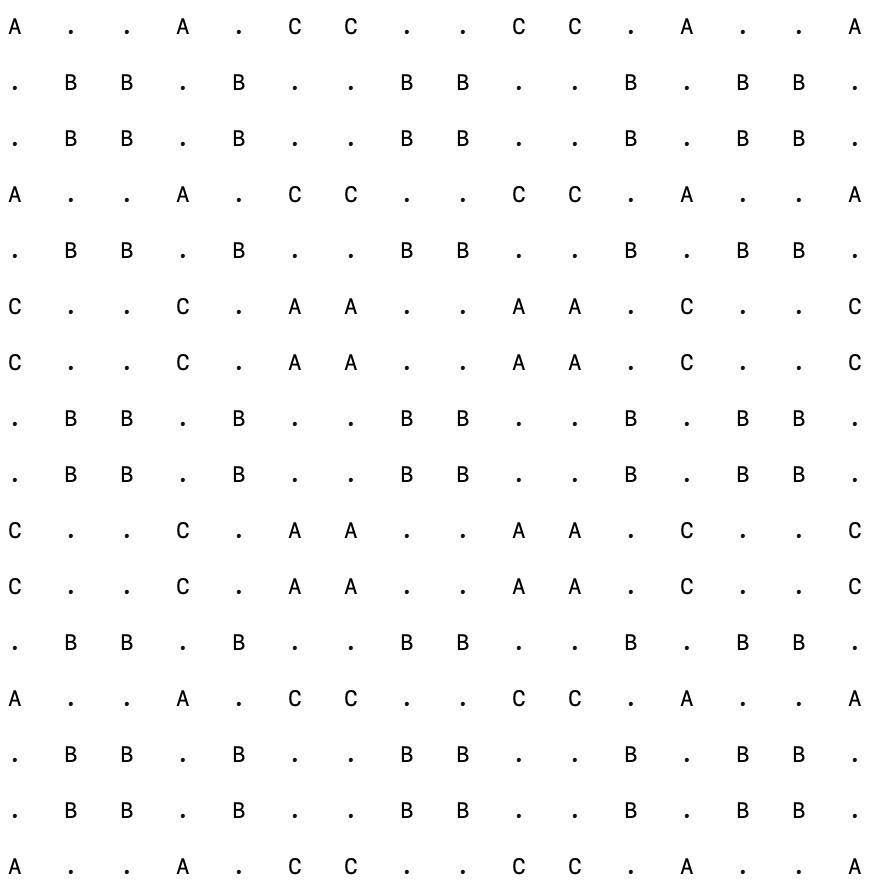} 
\caption{The structure of $\sigma^{(2)}_{1}$ for $G=(\{a,b\},\{e\})$ ; $\;\;\;\;\;A = 0.078125$, $B=0.046875$, $C = 0.015625$. \label{pattern}}
\end{figure}

\medskip

\noindent Let $C$ for $e = \langle ab\rangle$ be the Hamiltonian of $G=(\{a,b\},\{e\})$ (Single Edge Graph):
\[
C = {1\over 2}(1 + \sigma_{a}^{z}\sigma_{b}^{z})
\]

\noindent We calculate the evolution of the second order density matrices and associated variances. Matrix $\sigma^{(2)}_{1}$ has the pattern shown in 
Figure \ref{pattern}, and one can observe the same pattern for $\sigma^{(2)}_{2}$, $\sigma^{(2)}_{3}$, etc.
Let $A_{p}, B_{p}$ and $C_{p}$ be the three parameters defining $\sigma^{(2)}_{p}$. 
From the explicit expression in Section \ref{ssquare} one can show the recurrence

\begin{eqnarray*}
A_{p} & = & 0.75 A_{p-1} + 0.5 B_{p-1}  \\
B_{p} & = & 0.25 A_{p-1} + 0.5 B_{p-1}  \\
C_{p} & = & -0.25 A_{p-1} + 0.5 B_{p-1} 
\end{eqnarray*}

Writing the recurrence differently,

\[
\left(
\begin{array}{c}
A_{p} \\
B_{b}
\end{array}
\right) = 
\left(
\begin{array}{cc}
0.75 & 0.5 \\
0.25 & 0.5
\end{array}
\right)
\left(
\begin{array}{c}
A_{p-1} \\
B_{p-1}
\end{array}
\right)
\]

We immediately notice that $A_{p} +B_{p} = A_{p-1} +B_{p-1}$, so the sum always remains $1/8 = A_{0} + B_{0}$.
From $B_{p} = 1/8 - A_{p}$ we get the recurrence $A_{p} = A_{p-1}/4 + 1/16$, leading to 
\[
A_{p} =  {1\over 16} \left( 1 + {1\over 4 }  \cdots + {1\over 4^{p} }\right) = {1- 0.25^{p+1} \over 12} 
\]

Since for any $z\in \{0,1\}^{2}$ we have
\[
C(z) =
\left\{
\begin{array}{lll}
1 & {\rm if} & z \in \{00,11\} \\
0 & {\rm if} & {\rm otherwise}
\end{array}
\right.
\]
and 
\[
\sigma^{(2)}_{p}[(00, 00),(00, 00)] = \sigma^{(2)}_{p}[(00, 11),(00, 11)] = \sigma^{(2)}_{p}[(11, 00),(11, 00)] = \sigma^{(2)}_{p}[(11, 11),(11, 11)] = A_{p}
\]

we have that 

\medskip

\mybox{lightgray}{
\[
{\bf Tr} ({C}^{\otimes 2}\sigma^{(2)}_{p}) = \sum_{z_{1},z_{2} \in\{0,1\}^{2}} C(z_{1})C(z_{2} ) \, \sigma^{(2)}_{p}[(z_{1}, z_{2})(z_{1}, z_{2})] \; = \; 4 A_{p} \; =  \; (1- 1/4^{p+1} )/ 3
\]
}

\bigskip

By Lemma \ref{edgesigma} we have that ${\bf Tr} ({C}\sigma_{p})$ is 0.5 for all $p$. Therefore the variance of the QAOA energy 
for the Single Edge Graph (over random angles) for level $p$ is 

\[
 {\bf Tr} ({C}^{\otimes 2}\sigma^{(2)}_{p}) - {\bf Tr} ({C}\sigma_{p})^{2} = {1 \over 12} - {1\over 3\times 4^{p+1} }
\]

\bigskip

{\small
\begin{center}
\begin{tabular}{|ccccc|}\hline\hline
level ($p$) & 1 & 2 & 3 & 4    \\[3pt]
variance & 0.0625 & 0.078125 & 0.08203125 & 0.0830078125 \\\hline
\end{tabular}
\end{center}}

\end{document}